\newtheorem{theorem}{Theorem}
\newtheorem{lemma}{Lemma}
\newtheorem{example}{Example}
\newtheorem{definition}{Definition}
\definecolor{RED}{rgb}{1,0,0}\definecolor{BLUE}{rgb}{0,0,1} 
\begin{document}
\title{Design of Multiple-Edge Protographs for QC LDPC Codes Avoiding Short Inevitable Cycles}
\author{Hosung~Park, Seokbeom~Hong, Jong-Seon~No,~\IEEEmembership{Fellow,~IEEE,} and~Dong-Joon~Shin,~\IEEEmembership{Senior~Member,~IEEE}%
\thanks{H.~Park, S.~Hong, and J.-S.~No are with the Department of Electrical Engineering and Computer Science, INMC, Seoul National University, Seoul 151-744, Korea (e-mail: lovepk98@snu.ac.kr, fousbyus@ccl.snu.ac.kr, jsno@snu.ac.kr).}%
\thanks{D.-J. Shin is with the Department of Electronic Engineering, Hanyang University, Seoul 133-791, Korea (e-mail: djshin@hanyang.ac.kr).}%
}%


\maketitle

\begin{abstract}
There have been lots of efforts on the construction of quasi-cyclic (QC) low-density parity-check (LDPC) codes with large girth.
However, most of them are focused on protographs with single edges and little research has been done for the construction of QC LDPC codes lifted from protographs with multiple edges.
Compared to single-edge protographs, multiple-edge protographs have benefits such that QC LDPC codes lifted from them can potentially have larger minimum Hamming distance.
In this paper, all subgraph patterns of multiple-edge protographs, which prevent QC LDPC codes from having large girth by inducing inevitable cycles, are fully investigated based on graph-theoretic approach.
By using combinatorial designs, a systematic construction method of multiple-edge protographs is proposed for regular QC LDPC codes with girth at least 12 and also other method is proposed for regular QC LDPC codes with girth at least 14.
A construction algorithm of QC LDPC codes by lifting multiple-edge protographs is proposed and it is shown that the resulting QC LDPC codes have larger upper bounds on the minimum Hamming distance than those lifted from single-edge protographs.
Simulation results are provided to compare the performance of the proposed QC LDPC codes, the progressive edge-growth (PEG) LDPC codes, and the PEG QC LDPC codes.
\end{abstract}

\begin{IEEEkeywords}
Design theory, girth, inevitable cycle, minimum Hamming distance, multiple-edge protograph, quasi-cyclic (QC) low-density parity-check (LDPC) codes.
\end{IEEEkeywords}

\section{Introduction}\label{sec:introduction}
\IEEEPARstart{L}{ow}-density parity-check (LDPC) codes \cite{Gallager} have been one of major research topics in coding area over the past decade due to their near capacity-approaching performance. Since low decoding complexity can be achieved by various iterative decoding algorithms, LDPC codes have been adopted in many practical applications. Especially, quasi-cyclic (QC) LDPC codes are well suited for hardware implementation using simple shift registers due to the regularity in their parity-check matrices.

Thorpe \cite{Thorpe} introduced the concept of {\em protograph-based LDPC codes}, a class of LDPC codes lifted from protographs. QC LDPC codes belong to the protograph-based LDPC codes because they can be regarded as the lifted ones from the protographs using cyclic permutations. Therefore, the performance of QC LDPC codes mainly depends on how to design their protographs as well as how to assign shift values.

The performance of LDPC codes under message-passing algorithms depends on the girth of the codes because a message sent by a node along a cycle propagates back to the node itself after some iterations, which causes the dependence among messages and performance degradation. Therefore, there have been lots of efforts to construct QC LDPC codes with large girth \cite{Tanner}-\cite{Esmaeili}. In \cite{Fossorier}, necessary and sufficient conditions on determining the girth of QC LDPC codes from circulant permutation matrices are derived and some families of QC LDPC codes are constructed. Most of QC LDPC codes with large girth are constructed based on algebraic structures \cite{Tanner}-\cite{Milenkovic}, \cite{O'Sullivan}-\cite{Esmaeili} while some optimization algorithms and greedy search algorithms are used to find QC LDPC codes with large girth \cite{Wang}-\cite{Bocharova}. Various combinatorial designs have also been widely used to construct QC LDPC codes in order to guarantee the girth at least 6 \cite{Kim}-\cite{Vasic}.

The girth of QC LDPC codes constructed from protographs is determined by the structure of the protograph, the lift size, and all the shift values. In \cite{Tanner}, \cite{O'Sullivan}, \cite{Kim}, and \cite{Kelley}, an upper bound on the girth of QC LDPC codes, which is only determined by the structure of the protograph, is discussed.
Especially, in \cite{O'Sullivan}, all substructures of multiple-edge protographs, which inevitably give rise to cycles of length up to 12, are searched but any construction method of multiple-edge protographs for QC LDPC codes with large girth is not provided.
In \cite{Kim}, all substructures of single-edge protographs, which inevitably give rise to cycles of length up to 20 in QC LDPC codes, are identified and by using combinatorial designs, some single-edge protographs for girth larger than or equal to 18 and other single-edge protographs for girth larger than or equal to 14 are constructed.

Although the behavior of iterative message-passing decoders is mostly dominated by the pseudo-weight of pseudo-codewords \cite{Kelley1}, \cite{Koetter}, the minimum Hamming distance still plays an important role because it characterizes the undetectable errors and provides an upper bound on the minimum pseudo-weight of a code.
Smarandache and Vontobel \cite{Smarandache} derived two upper bounds on the minimum Hamming distance of QC LDPC codes, where one bound is applied when QC LDPC codes are explicitly given and the other bound can be applied even when only the protographs are given. It is shown by experiments that these upper bounds are very close to the real minimum Hamming distance when the lift size for a protograph is large enough. Also, through several examples, we can see that for the given protograph size and the given row- and column-weights, these two upper bounds  increase as the number of multiple edges increases in the protograph.
Therefore, these upper bounds can be increased if multiple-edge protographs are used to construct QC LDPC codes, compared to the case of single-edge protographs.

In this paper, multiple-edge protographs which can be lifted to QC LDPC codes with large girth are investigated. Search for all single- and multiple-edge subgraphs which inevitably generate cycles of any length in QC LDPC codes are systematically performed based on graph-theoretic approach as an extension of the results in \cite{O'Sullivan}, \cite{Kim}, and \cite{Kelley}. Construction methods of multiple-edge protographs using various combinatorial designs are proposed and a lifting algorithm to construct regular QC LDPC codes with large girth is also proposed.

The remainder of the paper is organized as follows.
Section \ref{sec:inevitable} introduces QC LDPC codes, protographs, and the concept of inevitable cycles.
In Section \ref{sec:subgraph}, all single- and multiple-edge subgraphs which generate inevitable cycles in QC LDPC codes are fully searched.
Based on these subgraph patterns, Section \ref{sec:12} describes a design method for multiple-edge protographs of regular QC LDPC codes having girth larger than or equal to 12.
In Section \ref{sec:14}, construction methods of multiple-edge protographs are proposed for regular QC LDPC codes having girth 14 when the variable node degree is 3 and they are generalized for regular QC LDPC codes with variable node degree larger than 3.
In Section \ref{sec:dmin}, a construction algorithm of QC LDPC codes lifted from the multiple-edge protographs is proposed. It is also shown that the proposed QC LDPC codes have larger upper bounds on the minimum Hamming distance than those lifted from single-edge protographs and the performance of the proposed QC LDPC codes is verified via numerical analysis.
Finally, the conclusions are provided in Section \ref{sec:conclusion}.

\vspace{2mm}
\section{Inevitable Cycles of QC LDPC Codes}\label{sec:inevitable}

\subsection{QC LDPC Codes}

Let $\mathcal{C}$ be a binary LDPC code whose parity-check matrix $H$ is a $J \times L$ array of $z \times z$ circulants or zero matrices as
\begin{equation*}
	H = \begin{bmatrix} H_{0,0}&H_{0,1}&\cdots&H_{0,L-1} \\ H_{1,0}&H_{1,1}&\cdots&H_{1,L-1}  \\ \vdots&\vdots&\ddots&\vdots \\ H_{J-1,0}&H_{J-1,1}&\cdots&H_{J-1,L-1} \end{bmatrix}
\end{equation*}
where a \textit{circulant} $H_{j,l}$ is defined as a matrix whose each row is a cyclic shift of the row above it. Such an LDPC code is called \textit{quasi-cyclic} because applying circular shifts to the length-$z$ subblocks of a codeword gives another codeword.
Also, a bipartite graph which has $H$ as its incidence matrix is called the \textit{Tanner graph} of $\mathcal{C}$.

The \textit{weight} of a circulant $H_{j,l}$ is defined as the number of nonzero elements in the first column and denoted by $\mathrm{wt}(H_{j,l})$.
A circulant is entirely described by the positions of nonzero elements in the first column.
Let $i$, $0 \leq i \leq z-1$, be the index of the $(i+1)$-st element in the first column.
Then, the \textit{shift value}(s) of a circulant is defined as the index (indices) of the nonzero element(s) in the first column. Note that a shift value takes the value from $0$ to $z-1$ and $\infty$ is used as a shift value of a zero matrix $H_{i,j}$.

QC LDPC codes can be fully represented by binary polynomials as shown in \cite{Smarandache}. This polynomial representation is based on the isomorphism between $z \times z$ binary circulants and the polynomial ring $\mathbb{F}_2[x]/(x^z+1)$. The \textit{polynomial parity-check matrix} $H(x)$ of $\mathcal{C}$ is defined as
\begin{equation*}
	H(x) = \begin{bmatrix} h_{0,0}(x)&h_{0,1}(x)&\cdots&h_{0,L-1}(x) \\ h_{1,0}(x)&h_{1,1}(x)&\cdots&h_{1,L-1}(x)  \\ \vdots&\vdots&\ddots&\vdots \\ h_{J-1,0}(x)&h_{J-1,1}(x)&\cdots&h_{J-1,L-1}(x) \end{bmatrix}
\end{equation*}
where $h_{j,l}(x) = \sum_{i=0}^{z-1} h_{j,l,i} x^i \in \mathbb{F}_2[x]/(x^z+1)$ and $h_{j,l,i}$ is the element with the index $i$ in the first column of $H_{j,l}$. We can see that the number of nonzero terms in $h_{j,l}(x)$, which is denoted by $\mathrm{wt}(h_{j,l}(x))$, is equal to $\mathrm{wt}(H_{j,l})$ and the degrees of all nonzero terms in $h_{j,l}(x)$ are equivalent to the shift values of $H_{j,l}$.

The \textit{protograph} \cite{Thorpe} of a QC LDPC code $\mathcal{C}$ is a bipartite graph whose incidence matrix is $P=[p_{j,l}]$, where $p_{j,l} = \mathrm{wt}(H_{j,l})$. There are two kinds of nodes in the protograph, where horizontal (check) nodes correspond to rows in $P$ and vertical (variable) nodes correspond to columns in $P$. The Tanner graph of $\mathcal{C}$ is constructed by copying the protograph $z$ times and cyclically permuting the same $z$ edges. Such copy-and-permute operation is called \textit{lifting} and the length of a subblock $z$ is also called the \textit{lift size} of $\mathcal{C}$.
If $p_{j,l}\geq 2$, there are multiple edges between the horizontal node with index $j$ and the vertical node with index $l$ in the protograph.
A shift value is assigned to each edge in the protograph so that an edge is lifted by using the cyclic permutation with the assigned shift value to generate $\mathcal{C}$.
Note that, in this paper, the term `protograph' refers to both the bipartite graph and its incidence matrix based on their equivalence.

\subsection{Inevitable Cycles}

Necessary and sufficient conditions on the existence of cycles in the Tanner graph of QC LDPC codes are derived in terms of shift values in \cite{Fossorier}. These conditions are only applied to single-edge protographs but they can be naturally extended to cover the case of multiple-edge protographs as in Lemma \ref{lemma:cycle}.

Let $G=(V,E)$ denote a graph with a set of vertices $V$ and a set of edges $E$. Let $v_k$ ($e_k$) represent a vertex (an edge) in $V$ ($E$). A \textit{walk} is an alternating sequence of vertices and edges, denoted by $v_{i_0} e_{i_0} v_{i_1} \cdots v_{i_{n-1}} e_{i_{n-1}} v_{i_n}$, where the vertices $v_{i_j}$ and $v_{i_{j+1}}$ are the endpoints of the edge $e_{i_j}$. The \textit{length of a walk} $W$, denoted by $l(W)$, is defined as the number of edges in $W$. A walk is \textit{closed} if $v_{i_n}=v_{i_0}$ and a walk is \textit{non-reversing} if $e_{i_j} \neq e_{i_{j+1}}$ for $j=0,1,\ldots,n-2$. A closed walk is said to be \textit{tailless} if $e_{i_{n-1}} \neq e_{i_0}$. In this paper, connected graphs are only considered and it is noted that a \textit{cycle} is defined as a closed walk whose traversed vertices and edges are all distinct and the length of the shortest cycle in a graph is called the \textit{girth} of the graph.

Cycles in the Tanner graph of a QC LDPC code are closely related to tailless non-reversing closed (TNC) walks in its protograph. The \textit{shift sum} of a walk $W$ in a protograph, denoted by $s(W)$, is defined as the alternating sum of shift values assigned to the edges in $W$, that is, $s(W) = \sum_{j=0}^{l(W)-1} (-1)^{j}(\mathrm{shift} ~ \mathrm{value} ~ \mathrm{of} ~ e_{i_j})$. Lemma \ref{lemma:cycle} shows necessary and sufficient conditions for a cycle of a certain length in the Tanner graph of QC LDPC codes to be generated from the protographs, which can be applied to both single-edge protographs and multiple-edge protographs. Its proof is directly derived from the results in \cite{Fossorier} and \cite{Kelley}.

\vspace{2mm}
\begin{lemma}
Let $\mathcal{W}$ denote the set of all TNC walks of length $n$ in a protograph. Suppose that a QC LDPC code is lifted from the protograph with the lift size $z$. Then, the Tanner graph of this QC LDPC code has a cycle of length $n$ if and only if there exists a walk $W \in \mathcal{W}$ such that $s(W)=0 ~\mathrm{mod}~ z$ and $W$ does not contain any shorter TNC walks with the zero shift sum.
\label{lemma:cycle}
\end{lemma}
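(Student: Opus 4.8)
The plan is to exploit the fact that the Tanner graph of $\mathcal{C}$ is a graph cover (lift) of the protograph, where the assigned shift values play the role of voltages in the cyclic group $\mathbb{Z}_z$. Concretely, I would label the copies of a protograph vertex $v$ by pairs $(v,i)$ with $i \in \{0,\dots,z-1\}$ and record how a single protograph edge of shift value $s$ is lifted: reading the circulant convention of Section \ref{sec:inevitable}, a copy $(u,i)$ of a check vertex $u$ is joined to the copy $(w, i-s \bmod z)$ of the adjacent variable vertex $w$, so that traversing an edge from a check to a variable subtracts its shift value while traversing from a variable to a check adds it. Chaining these moves along a walk $W$ that starts and ends at the same protograph vertex shows that the lift of $W$ returns to its starting copy if and only if the alternating sum $s(W)=\sum_{j}(-1)^j(\text{shift of }e_{i_j})$ is $0 \bmod z$. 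This is exactly the extension of the cycle condition of \cite{Fossorier} and \cite{Kelley} to arbitrary walks, and it explains the alternating sign in the definition of $s(W)$.

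For the sufficiency direction I would fix $W\in\mathcal{W}$ with $s(W)\equiv 0 \bmod z$, choose any copy of its initial vertex, and trace the unique lifted walk $\widetilde{W}$ of length $n$ starting there. The closure condition above guarantees that $\widetilde{W}$ is a closed walk; because $W$ is non-reversing and the covering map is a local isomorphism at every vertex, $\widetilde{W}$ is non-backtracking, and taillessness of $W$ forces the first and last edges of $\widetilde{W}$ to differ as well. It then remains to argue that $\widetilde{W}$ visits no vertex twice, i.e. that it is a genuine cycle; here I would invoke the hypothesis that $W$ contains no shorter TNC walk of zero shift sum. If a copy $(v,i)$ were repeated inside $\widetilde{W}$, the segment between the two visits would be a shorter closed lifted walk, whose projection is a shorter TNC walk in the protograph with zero shift sum, contradicting minimality.

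The necessity direction reverses this reasoning: given a length-$n$ cycle in the Tanner graph, I would project it through the covering map to obtain a closed walk of length $n$ in the protograph. Local injectivity of the covering map at each vertex turns the distinctness of consecutive edges (and of the first and last edges) of the cycle into the non-reversing and tailless properties, so the projection lies in $\mathcal{W}$, and its closure forces $s(W)\equiv 0 \bmod z$ by the correspondence of the first paragraph. Minimality is again inherited: a shorter zero-shift-sum TNC sub-walk would lift, inside the cycle, to a shorter closed walk and hence to a repeated vertex, contradicting that we began with a cycle.

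I expect the main obstacle to be the careful handling of the minimality clause together with the multiple-edge bookkeeping. In the single-edge setting of \cite{Fossorier} a walk is determined by its sequence of check-variable incidences, but with multiple edges one must track the specific parallel edge traversed, so the non-reversing condition $e_{i_j}\neq e_{i_{j+1}}$ must be read as forbidding immediate reuse of the \emph{same} parallel edge rather than merely of the same node pair. Verifying that a repeated vertex in the lift really does split off a shorter \emph{tailless} non-reversing walk with zero shift sum---rather than a closed walk that still needs a tail trimmed---is the delicate step, and it is precisely where the argument must go slightly beyond a verbatim citation of the single-edge results.
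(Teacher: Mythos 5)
Your proposal is correct, and it is worth noting that it supplies more than the paper itself does: the paper gives no argument for Lemma~\ref{lemma:cycle} at all, stating only that the proof ``is directly derived from the results in \cite{Fossorier} and \cite{Kelley}.'' The covering-space (voltage-graph) argument you give --- lifting a walk edge by edge, observing that the lift closes up exactly when the alternating shift sum vanishes modulo $z$, and using local injectivity of the covering map to transport the non-reversing and tailless properties in both directions --- is precisely the framework of \cite{Kelley} that the paper implicitly appeals to, so in substance you are reconstructing the proof the authors chose to leave to citation.

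The one loose end you flag, the tail-trimming issue in the sufficiency direction, is real but closes easily, and you should include the closing step rather than leave it as a remark. If the segment of $\widetilde{W}$ between two visits of a repeated vertex copy projects to a closed non-reversing walk $W'$ whose first and last edges coincide, trim that common edge from both ends; the resulting walk is still closed, still non-reversing, and still a contiguous sub-walk of $W$. Because the protograph is bipartite, every closed walk has even length, so the two removed traversals of the common edge carry opposite signs in the alternating sum and their contributions cancel; hence the (up to sign) shift sum, and in particular its vanishing modulo $z$, is preserved by each trim. Trimming strictly decreases the length by $2$ and must terminate at length at least $2$, since a non-reversing closed walk of length $2$ (two parallel edges, which is exactly the multiple-edge case you were worried about) is automatically tailless. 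The terminal walk is therefore a TNC sub-walk of $W$, shorter than $W$, with zero shift sum modulo $z$, giving the contradiction with the minimality hypothesis. With this paragraph added, your proof is complete in both directions.
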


\vspace{2mm}
The girth of QC LDPC codes is determined by the structure of the protograph, the lift size, and all the shift values assigned to edges. However, we can derive an upper bound on the girth of QC LDPC codes lifted from protographs without considering the lift size and the shift values based on the concept of inevitable cycles \cite{Tanner}, \cite{O'Sullivan}, \cite{Kim}.

\begin{figure}[tb]
	\centering
	\subfigure[An inevitable cycle of length 12.]{\includegraphics[scale=0.6]{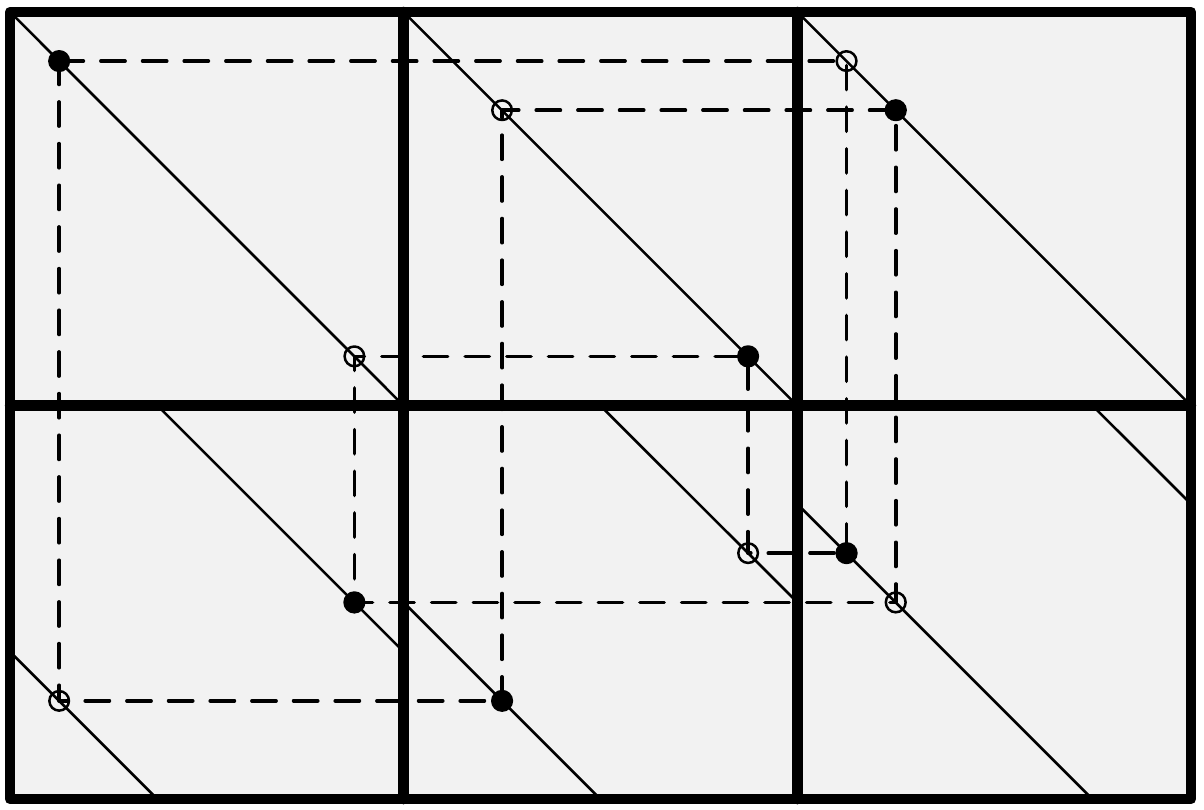}%
	\label{fig:inevitable_12}}\hspace{2mm}
	\subfigure[An inevitable cycle of length 10.]{\includegraphics[scale=0.6]{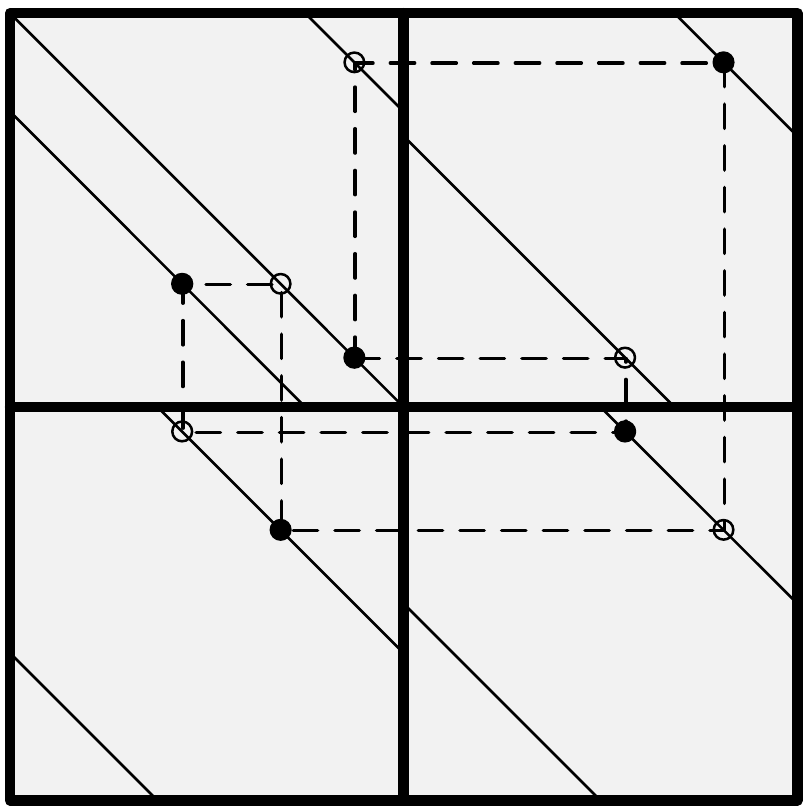}%
	\label{fig:inevitable_10}}
	\caption{Examples of inevitable cycles in QC LDPC codes.}
	\label{fig:inevitable}
\end{figure}

\vspace{2mm}
\begin{definition}
An \textit{inevitable cycle} induced by a protograph is defined as the cycle which always appears in the QC LDPC code lifted from the protograph regardless of the lift size and the shift values.
\end{definition}

\vspace{2mm}
It is well known that a QC LDPC code whose protograph has the $2 \times 3$ (or $3 \times 2$) all-one matrix as its submatrix must have the inevitable cycles of length 12 \cite{Tanner}, \cite{Fossorier}. In other words, the girth of this QC LDPC code is less than or equal to 12. Such an inevitable cycle of length 12 is depicted in Fig. \ref{fig:inevitable_12}.
Also, in QC LDPC codes lifted from multiple-edge protographs, inevitable cycles can be induced. As an example, Fig. \ref{fig:inevitable_10} shows an inevitable cycle of length 10, which appears in QC LDPC codes lifted from protographs with double edges. We can see that for a certain subgraph structure, inevitable cycles are always generated no matter what shift values are assigned to edges.

\vspace{2mm}
\section{Subgraphs of Multiple-Edge Protographs Inducing Inevitable Cycles}\label{sec:subgraph}

In order for QC LDPC codes to have large girth, their protographs should not contain the subgraphs which induce short inevitable cycles in the QC LDPC codes and thus it is necessary to find out all such subgraphs. From now on, the terms \textit{`an inevitable-cycle-inducing (ICI) subgraph of length $2i$'} will refer to a subgraph inducing inevitable cycles of length $2i$ . In \cite{O'Sullivan}, ICI subgraphs of length up to 12 in single- and multiple-edge protographs were fully investigated and, in \cite{Kim}, all ICI subgraphs of lengths 12 to 20 in single-edge protographs were searched by a brute force method. After that, a graph-theoretical framework was provided in \cite{Kelley}, which can be used to search all single- and multiple-edge ICI subgraphs. In this section, we will search and provide all ICI subgraphs as an extension of \cite{O'Sullivan}, \cite{Kim}, and \cite{Kelley}.

Define $\mathcal{P}_{2i}$ as a set of all irreducible ICI subgraphs of length $2i$ satisfying the following conditions:
\begin{enumerate}
	\item A subgraph $P \in \mathcal{P}_{2i}$ induces inevitable cycles of length $2i$ in the QC LDPC code.
	\item A subgraph $P \in \mathcal{P}_{2i}$ does not contain any proper subgraph which induces inevitable cycles of length less than or equal to $2i$.
	\item The number of rows in a subgraph $P \in \mathcal{P}_{2i}$ is not larger than that of columns.
	\item From each isomorphic class in $\mathcal{P}_{2i}$, only one protograph must be chosen as a representative of that class.
\end{enumerate}
The conditions 1) and 2) guarantee that if a protograph does not have any subgraph $P \in \mathcal{P}_{2i'}$ for $i' < i$, the QC LDPC code appropriately lifted from this protograph has girth larger than or equal to $2i$. A subgraph $P \in \mathcal{P}_{2i}$ takes an irreducible form because the condition 2) implies that if any edge is removed from $P$, it cannot induce inevitable cycles of length $2i$. Conditions 3) and 4) are required to choose a unique representative for each isomorphic class of subgraphs inducing inevitable cycles of length $2i$.

For identifying $\mathcal{P}_{2i}$, we need to investigate the relationship between inevitable cycles and TNC walks. A TNC walk $W$ is called \textit{abelian-forcing} \cite{Kelley} if for each edge in $W$, the number of traversals of the edge in a direction is the same as that in the opposite direction. Clearly, the shift sum of abelian-forcing TNC walks is zero regardless of the shift values of their edges. An abelian-forcing TNC walk is said to be \textit{simple} if it does not contain any shorter abelian-forcing TNC walks. It is obvious that inevitable cycles of QC LDPC codes are generated from simple abelian-forcing TNC (SAFTNC) walks in protographs.

\vspace{2mm}
\begin{lemma}
Any abelian-forcing TNC walk contains at least two different cycles.
\label{lemma:two_cycles}
\end{lemma}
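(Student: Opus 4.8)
The plan is to pass from the walk $W$ to the subgraph $H=(V_W,E_W)$ that it spans, i.e.\ the subgraph consisting of exactly the vertices and edges traversed by $W$, and to show that the circuit rank (first Betti number) of $H$ is at least $2$; since $H$ is connected and every one of its edges is traversed by $W$, this immediately exhibits at least two distinct cycles carried by $W$. First I would record that, because $W$ is abelian-forcing, every edge of $E_W$ is traversed the same number of times in each direction and hence an even number (at least two) of times, so $H$ is well defined and connected.

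The argument then rests on two structural claims about $H$. First, $H$ has no vertex of degree one: if some $v$ were incident in $H$ to a single edge $e$, then every time $W$ visits $v$ it would be forced to both arrive and depart along $e$, making two consecutive edges of $W$ equal. For an interior visit this contradicts the non-reversing property $e_{i_j}\neq e_{i_{j+1}}$, and for the wrap-around visit at $v_{i_0}=v_{i_n}$ it contradicts taillessness $e_{i_{n-1}}\neq e_{i_0}$; hence $\delta(H)\ge 2$. Second, $H$ cannot be a single cycle (a connected $2$-regular graph, allowing a digon of two parallel edges): at a degree-two vertex the non-reversing rule forces $W$ to leave by the edge other than the one it entered on, so once the first step fixes a rotational orientation the walk must keep that orientation forever and can never traverse any edge in the opposite direction. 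This makes the reverse-direction count of each edge zero while the forward count is positive, contradicting the abelian-forcing condition.

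Combining these, $H$ is connected with $\delta(H)\ge 2$, so from $2|E_W|=\sum_v \deg_H(v)\ge 2|V_W|$ we get $\beta(H)=|E_W|-|V_W|+1\ge 1$, with equality exactly when $H$ is $2$-regular, i.e.\ a single cycle. Ruling out that case forces $\beta(H)\ge 2$, and fixing a spanning tree of $H$ then produces at least two non-tree edges whose fundamental cycles are two distinct cycles contained in $W$. I expect the delicate step to be the exclusion of the single-cycle case together with the boundary bookkeeping in the degree-one argument: both hinge on applying the non-reversing and tailless conditions correctly at the junction where the closed walk wraps around, and on handling the multigraph subtleties (parallel edges and digons) that the protograph may contain, rather than on any substantial computation.
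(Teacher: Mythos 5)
Your proof is correct, and it takes a genuinely different route from the paper's. The paper argues directly on the walk: it first extracts one cycle from a repeated vertex of $W$ (a minimal closed subwalk with distinct interior vertices, which non-reversal and taillessness force to be a cycle), and then assumes the cycle is unique and derives a contradiction by locating a sub-walk of the form $v_f e_g v_h \cdots v_h e_g v_f$ in which the abelian-forcing requirement of traversing $e_g$ in both directions forces the walk into an excursion that must eventually reverse. You instead pass to the subgraph $H$ spanned by $W$ and argue structurally: abelian-forcing gives that every edge of $H$ is traversed in both directions, non-reversal and taillessness give $\delta(H)\geq 2$, abelian-forcing again rules out $H$ being a single cycle (since a non-reversing walk on a $2$-regular graph keeps a fixed orientation), and then the counting $\beta(H)=|E_W|-|V_W|+1\geq 2$ yields two distinct fundamental cycles. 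Your version buys rigor and robustness: the minimum-degree and circuit-rank bookkeeping handles the multigraph subtleties (digons from double edges, the wrap-around at the basepoint) explicitly, whereas the paper's final contradiction ("$W$ cannot move from a vertex to itself without reversing") is stated rather tersely. It also meshes naturally with the paper's next step, since a connected graph of circuit rank $2$ is exactly where the theta/dumbbell classification of Lemma 3 lives. What the paper's local argument buys is that it never leaves the walk formalism, which is the language reused in Lemmas 3--5; but as a proof of this lemma, yours is at least as complete.
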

\begin{proof}
Consider an abelian-forcing TNC walk $W=v_{i_0} e_{i_0} v_{i_1} \cdots v_{i_{n-1}} e_{i_{n-1}} v_{i_n}$. There exists a vertex $v_j$ such that $v_{i_{k}}=v_{i_{l}}=v_j$ for some $k \neq l$. Also, there exists a path $v_m e_{i_{p-1}} v_{i_{p}} \cdots v_{i_{q}} e_{i_{q}} v_m$ in $W$ such that all vertices from $v_{i_p}$ to $v_{i_q}$ are distinct. Since $W$ is non-reversing and tailless, that path forms a cycle and thus $W$ contains at least one cycle.

Assume that $W$ contains only one cycle.
Since $W$ is abelian-forcing, there exists a path $v_f e_g v_h e_{i_{a-1}} v_{i_a} \cdots v_{i_{b}} e_{i_{b}} v_h e_g v_f$ in $W$ such that $v_{i_j} \neq v_h$ for all $j=a,a+1,\ldots,b$. This contradicts the assumption of $W$ because $W$ cannot move from a vertex to itself without reversing. Therefore, $W$ contains at least two different cycles.
\end{proof}

\vspace{2mm}
As in \cite{Kelley}, two classes of graphs are defined as illustrated in Fig. \ref{fig:graph}.

\vspace{2mm}
\begin{definition}[\cite{Kelley}]
A $(x_1,x_2,x_3)$-\textit{theta graph}, denoted by $T(x_1,x_2,x_3)$, is a graph consisting of two vertices, each of degree three, that are connected to each other via three disjoint paths $X_1$, $X_2$, $X_3$ of the number of edges $x_1 \geq 1$, $x_2 \geq 1$, and $x_3 \geq 1$, respectively. A $(z_1,z_2;y)$-\textit{dumbbell graph}, denoted by $D(z_1,z_2;y)$, is a connected graph consisting of two edge-disjoint cycles $Z_1$ and $Z_2$ of the number of edges $z_1 \geq 1$ and $z_2 \geq 1$, respectively, that are connected by a path $Y$ of the number of edges $y \geq 0$.
\end{definition}

\begin{figure}[tb]
	\centering
	\subfigure[Theta graph.]{\includegraphics[scale=0.65]{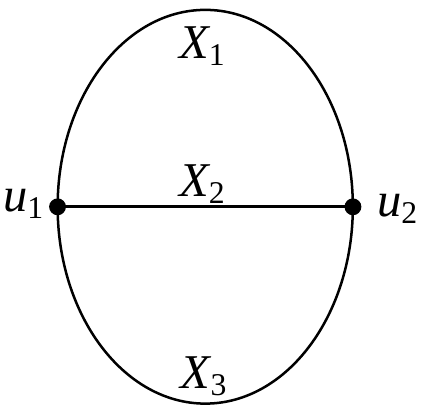}%
	\label{fig:theta}} \hspace{5mm}
	\subfigure[Dumbbell graph.]{\includegraphics[scale=0.65]{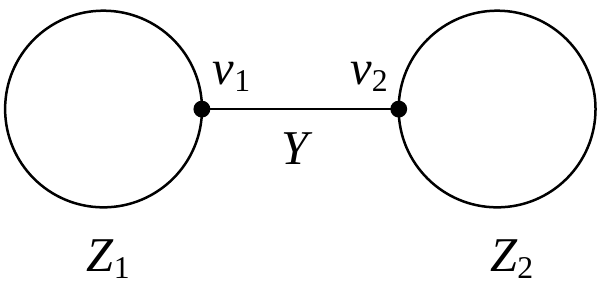}%
	\label{fig:dumbbell}}
	\caption{Theta graph and dumbbell graph.}
	\label{fig:graph}
\end{figure}

\vspace{2mm}
\begin{lemma}
Connecting two different cycles always results in either a theta graph or a dumbbell graph.
\label{lemma:connecting_two}
\end{lemma}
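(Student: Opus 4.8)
The plan is to reduce the statement to a short case analysis on how the two cycles overlap. Write $C_1$ and $C_2$ for the two distinct cycles and let $G$ be the connected subgraph they span; when $C_1$ and $C_2$ are vertex-disjoint I would additionally adjoin a shortest joining path drawn from the ambient walk, so that $G$ is always connected. The whole argument then turns on the intersection $C_1 \cap C_2$, which, being a subgraph of each cycle, is a disjoint union of paths and isolated vertices. I would organize the proof around the number and type of the components of this intersection, verifying in each case that the vertex degrees match those forced by the definitions of $T(x_1,x_2,x_3)$ and $D(z_1,z_2;y)$.

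First I would dispose of the cases where the overlap is empty or a single point. If $C_1$ and $C_2$ share no vertex, the joining path $Y$ meets $C_1$ and $C_2$ only at its two endpoints (otherwise it could be shortened), so $G$ is exactly the dumbbell $D(z_1,z_2;y)$ with $z_1=l(C_1)$, $z_2=l(C_2)$, and $y=l(Y)\geq 1$. If $C_1$ and $C_2$ meet in exactly one vertex $v$ and in no edge, then $v$ has degree four while every other vertex has degree two, which is precisely the dumbbell $D(z_1,z_2;0)$.

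Next comes the case that produces the theta graph. If $C_1 \cap C_2$ is a single path $X_3$ with endpoints $a$ and $b$, then $a\neq b$ (otherwise the shared part would be a whole cycle, forcing $C_1=C_2$), and $C_1$ and $C_2$ decompose as $C_1=X_1\cup X_3$ and $C_2=X_2\cup X_3$, where $X_1$ and $X_2$ are the complementary $a$--$b$ arcs. Each of $X_1$, $X_2$ has at least one edge, for otherwise one of the cycles would coincide with the path $X_3$; and by construction $X_1$, $X_2$, $X_3$ are internally disjoint. Hence $a$ and $b$ are the two degree-three vertices and $G=X_1\cup X_2\cup X_3$ is the theta graph $T(x_1,x_2,x_3)$.

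The step I expect to be the real obstacle is ruling out intersections with two or more components, e.g. two edge-disjoint cycles meeting in two vertices or sharing two separate arcs; such a union is genuinely neither a theta nor a dumbbell, but a graph carrying four internally disjoint paths between a pair of branch vertices. The way I would close this gap is to invoke minimality: the two cycles relevant here are those extracted from a \emph{simple} abelian-forcing TNC walk via Lemma \ref{lemma:two_cycles}, and I would argue that a multi-component overlap always contains, as a proper subgraph, a three-path theta configuration that is itself abelian-forcing, contradicting simplicity. The overlap then has a single component and the three cases above are exhaustive. I would therefore state and apply the lemma for this minimal pair of cycles, carrying the irreducibility spirit of $\mathcal{P}_{2i}$ explicitly into the argument, since for two arbitrary cycles the clean theta/dumbbell dichotomy can fail.
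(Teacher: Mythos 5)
Your proof is correct and follows the same skeleton as the paper's: a case split on how the two cycles intersect. The paper distinguishes 0, 1, or at least 2 common vertices, maps the first two cases to dumbbells with $y>0$ and $y=0$ exactly as you do, and for the third case simply asserts that a theta graph is formed, writing $C_1 = X_1 \cup X_2$, $C_2 = X_2 \cup X_3$ with $x_2+1$ common vertices---i.e., it implicitly assumes the intersection is a single path. What you flag as ``the real obstacle'' is precisely what the paper glosses over: two distinct cycles can meet in two or more components (e.g., two isolated vertices, or two separate shared arcs), in which case their union carries four or more internally disjoint paths between branch vertices and is literally neither a theta nor a dumbbell graph, so the lemma as stated must be read as ``gives rise to (contains)'' rather than ``results in.'' Your patch---restricting to the minimal pair of cycles extracted from a simple abelian-forcing TNC walk and deriving a contradiction with simplicity---is valid and is in the same spirit as the paper's own proof of Lemma \ref{lemma:connecting_three}, though it ties the lemma to the walk context. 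A slightly cleaner alternative keeps the statement purely graph-theoretic: whenever the two cycles share at least two vertices, take an arc $P$ of $C_2$ between two consecutive common vertices that contains an edge outside $C_1$; then $P$ is internally disjoint from $C_1$, and $P$ together with the two arcs of $C_1$ between the same endpoints forms a theta subgraph. Since Lemma \ref{lemma:connecting_three} and Theorem \ref{thm:ICI} only use the existence of a theta or dumbbell \emph{subgraph}, this weaker ``contains'' formulation suffices downstream, and either your fix or this one closes the gap that the paper's own two-line argument leaves open.
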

\begin{proof}
Let $C_1$ and $C_2$ denote two different cycles. Then, $C_1$ and $C_2$ can be connected in only three ways: The number of common vertices in $C_1$ and $C_2$ is (i) 0, (ii) 1, or (iii) larger than or equal to 2. For the cases (i) and (ii), $C_1$ and $C_2$ form $D(z_1,z_2;y)$ with $y>0$ or $y=0$, respectively. In the case (iii), $T(x_1,x_2,x_3)$ is formed where $C_1 = X_1 \bigcup X_2$, $C_2 = X_2 \bigcup X_3$, and $x_2+1$ is the number of the common vertices.
\end{proof}

\vspace{2mm}
\begin{lemma}
The lengths of SAFTNC walks in $T(x_1,x_2,x_3)$ and $D(z_1,z_2;y)$ are $2(x_1+x_2+x_3)$ and $2(z_1+z_2)+4y$, respectively.
\label{lemma:length}
\end{lemma}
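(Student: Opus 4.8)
The plan is to establish the stated lengths as the common value of matching lower and upper bounds on the length of a simple abelian-forcing TNC (SAFTNC) walk, treating $T(x_1,x_2,x_3)$ and $D(z_1,z_2;y)$ separately but with one shared key observation. That observation is the following: on any cycle $C$, a non-reversing walk is forced to proceed in a single rotational direction, because at every degree-two interior vertex of $C$ the incoming edge determines the unique outgoing edge. Consequently, within one maximal visit to a cycle the walk can only traverse the edges of $C$ in one direction, so to traverse every edge of $C$ in \emph{both} directions --- as the abelian-forcing property demands --- the walk must enter and leave $C$ at least twice, once for each orientation.

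For the theta graph I would first argue that a SAFTNC walk must use all three paths $X_1,X_2,X_3$: if it omitted one path it would be confined to a single cycle $X_i \cup X_j$, on which, by the observation above, no abelian-forcing walk exists (this is also consistent with Lemma~\ref{lemma:two_cycles}, since any two of the three cycles of a theta graph already cover all three paths). Hence every one of the $x_1+x_2+x_3$ edges is used, and since abelian-forcing forces each to be traversed at least once in each direction, the length is at least $2(x_1+x_2+x_3)$. For the matching upper bound I would exhibit the explicit walk $X_1^{+} X_2^{-} X_3^{+} X_1^{-} X_2^{+} X_3^{-}$, where $X_i^{+}$ and $X_i^{-}$ denote traversing $X_i$ from one hub vertex to the other and back; one then checks that consecutive segments use edges from distinct paths (so the walk is non-reversing and tailless, including the degenerate cases $x_i=1$) and that each path is used once in each direction (so it is abelian-forcing), giving length exactly $2(x_1+x_2+x_3)$.

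For the dumbbell graph I would note that the only two cycles available are $Z_1$ and $Z_2$, so by Lemma~\ref{lemma:two_cycles} the walk must contain and hence traverse both; since $Y$ is the unique path joining them, $Y$ must be traversed as well. Applying the key observation to $Z_1$ and to $Z_2$ shows each cycle is visited at least twice, once in each direction, contributing at least $2z_1$ and $2z_2$ edge-traversals. The extra care is needed for $Y$: because an interior vertex of $Y$ has degree two and the far attachment vertex forces entry into the opposite cycle, each excursion off a cycle runs the full length of $Y$ without turning back, and the visits to $Z_1$ and $Z_2$ must strictly alternate around the closed walk. With at least two visits to each cycle this yields at least four inter-cycle transitions, hence at least two full traversals of $Y$ in each direction, i.e. each edge of $Y$ is traversed at least four times, contributing at least $4y$. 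The total lower bound $2(z_1+z_2)+4y$ is then matched by the walk that goes once around $Z_1$, crosses $Y$, goes once around $Z_2$, returns along $Y$, goes around $Z_1$ in the opposite sense, crosses $Y$, goes around $Z_2$ in the opposite sense, and returns; a direct check of the junctions at the two attachment vertices confirms it is tailless, non-reversing, and abelian-forcing of length exactly $2(z_1+z_2)+4y$ (the case $y=0$ simply drops the $Y$-segments).

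The step I expect to be the main obstacle is precisely the $4y$ contribution of the connecting path in the dumbbell case. The naive abelian-forcing bound only gives that each edge is used at least twice, which undercounts $Y$ by a factor of two; obtaining the sharp count requires the structural argument that the cycle visits strictly alternate and that each inter-cycle excursion traverses $Y$ in full, so that the minimum of two visits per cycle propagates to four traversals per $Y$-edge. The remaining verifications --- that the exhibited walks are non-reversing and tailless at the hub and attachment vertices --- are routine once the degenerate small cases ($x_i=1$, small $z_i$, and $y=0$) are checked, and since ``simple'' selects the minimum-length abelian-forcing TNC walk, matching the bounds determines the SAFTNC length.
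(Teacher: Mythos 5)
Your proof is correct, and it takes a genuinely more complete route than the paper's. The paper's proof is purely constructive: it exhibits exactly the two walks you exhibit, namely $X_1 \bar{X}_2 X_3 \bar{X}_1 X_2 \bar{X}_3$ in the theta graph and $Z_1 Y Z_2 \bar{Y} \bar{Z}_1 Y \bar{Z}_2 \bar{Y}$ in the dumbbell, and then simply asserts that any other SAFTNC walk in these graphs has the same length, offering no converse argument. Your proposal supplies that missing converse: the observation that a non-reversing walk inside a cycle is locked into a single rotational direction (so the abelian-forcing property forces at least two separate visits, one per orientation), the consequent bound of two traversals per edge in the theta graph, and, in the dumbbell, the alternation argument by which every edge of the bridge path $Y$ must be crossed at least four times. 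That last step is exactly where the generic ``each used edge is traversed at least twice'' bound undercounts, and your treatment of it is sound: degree-two interior vertices plus non-reversing force each excursion to traverse $Y$ in full, cycle visits must alternate between $Z_1$ and $Z_2$, and at least two visits to each cycle yield at least four crossings of $Y$. What the paper's approach buys is brevity; what yours buys is an actual proof of minimality, which is the content that Theorem \ref{thm:ICI} really relies on. One caveat: your closing gloss that ``simple'' selects the minimum-length abelian-forcing TNC walk is not literally the paper's definition (simple means containing no shorter abelian-forcing TNC walk as a sub-walk), so strictly speaking your matching bounds pin down the minimum length of abelian-forcing TNC walks rather than the length of every simple one; but the paper's own proof is no more careful on this point, and the minimum is precisely what matters for identifying the ICI subgraphs.
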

\begin{proof}
Consider $T(x_1,x_2,x_3)$ in Fig. \ref{fig:theta}. Let $u_1$ and $u_2$ denote the left and the right vertices of degree three, respectively, and let $X_1$, $X_2$, and $X_3$ be the paths from $u_1$ to $u_2$. Also, let $\bar{X}_1$, $\bar{X}_2$, and $\bar{X}_3$ denote the reverse paths of $X_1$, $X_2$, and $X_3$, respectively. Then we can see that an SAFTNC walk $X_1 \bar{X}_2 X_3 \bar{X}_1 X_2 \bar{X}_3$ has the length $2(x_1+x_2+x_3)$ and any other SAFTNC walks possibly generated in $T(x_1,x_2,x_3)$ have the same length.

Similarly, consider $D(z_1,z_2;y)$ in Fig. \ref{fig:dumbbell}. Let $v_1$ and $v_2$ denote the left and the right vertices of degree three, respectively, and let $Z_1$ and $Z_2$ be the cycles rotating clockwise from $v_1$ and $v_2$, respectively, and let $Y$ be the path from $v_1$ to $v_2$. Also, let $\bar{Z}_1$, $\bar{Z}_2$, and $\bar{Y}$ denote the reverse paths of $Z_1$, $Z_2$, and $Y$, respectively. Then we can see that an SAFTNC walk $Z_1 Y Z_2 \bar{Y} \bar{Z}_1 Y \bar{Z}_2 \bar{Y}$ has the length $2(z_1+z_2)+4y$ and any other SAFTNC walks possibly generated in $D(z_1,z_2;y)$ have the same length.
\end{proof}

\vspace{2mm}
Note that if any edge is removed from $T(x_1,x_2,x_3)$ or $D(z_1,z_2;y)$, those inherent SAFTNC walks disappear and thus $T(x_1,x_2,x_3)$ and $D(z_1,z_2;y)$ are of irreducible form. Now, we will check whether it is sufficient to only consider theta graphs and dumbbell graphs for $\mathcal{P}_{2i}$.

\vspace{2mm}
\begin{lemma}
Suppose that a graph $G$ contains at least one of theta graphs or dumbbell graphs as its proper subgraphs. The shortest SAFTNC walk in $G$ occurs only in a theta graph or a dumbbell graph.
\label{lemma:connecting_three}
\end{lemma}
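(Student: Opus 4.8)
The plan is to analyze a shortest SAFTNC walk of $G$ directly and to show that the subgraph it actually traverses is forced to be a single theta or dumbbell graph, so that no subgraph carrying three or more independent cycles is ever needed. Let $W$ be a shortest SAFTNC walk of $G$ (one exists because, by hypothesis, $G$ contains a theta or dumbbell graph, which already carries such a walk by Lemma \ref{lemma:length}), and let $S$ be its \emph{support}, the subgraph formed by the vertices and edges that $W$ traverses. Two elementary properties of $S$ will drive the whole argument: first, since $W$ is non-reversing and tailless, every vertex it visits has two distinct incident edges, so $S$ is connected with minimum degree at least two and hence has no pendant edges; second, since $W$ is abelian-forcing, every edge of $S$ is traversed at least once in each direction, which gives the length bound $l(W) \ge 2\,|E(S)|$.

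First I would locate a theta or dumbbell graph inside $S$. By Lemma \ref{lemma:two_cycles}, $W$ contains two distinct cycles $C_1$ and $C_2$, and by Lemma \ref{lemma:connecting_two} their union $H = C_1 \cup C_2$ is a theta graph $T(x_1,x_2,x_3)$ or a dumbbell graph $D(z_1,z_2;y)$; in either case $H \subseteq S$ and the cycle rank of $H$ equals $2$. The goal then reduces to proving that $S$ itself has cycle rank exactly $2$: because $S$ has no pendant vertices, any edge of $S$ lying outside the connected subgraph $H$ would join two vertices already spanned by $H$ (or close up through new degree-$\ge 2$ vertices), raising the cycle rank above $2$, and ruling this out forces $H = S$, i.e. $W$ is supported on the theta or dumbbell graph $H$, which is what the lemma asserts.

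To pin down the cycle rank I would argue by contradiction, assuming the rank of $S$ is at least $3$, and exploit the simplicity of $W$: no abelian-forcing TNC walk whose support is a proper subgraph of $S$ may be strictly shorter than $W$. Choosing two among three independent cycles produces, via Lemma \ref{lemma:connecting_two}, a proper theta or dumbbell subgraph $H' \subsetneq S$, which by Lemma \ref{lemma:length} carries its own SAFTNC walk $W_{H'}$. When $H'$ is a theta graph or a dumbbell with $y=0$ its length is exactly $2\,|E(H')| < 2\,|E(S)| \le l(W)$, so $W_{H'}$ is a strictly shorter abelian-forcing TNC walk inside $S$, contradicting simplicity and settling this case immediately.

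The step I expect to be the main obstacle is the remaining configuration, in which every proper theta or dumbbell subgraph that presents itself is a dumbbell with a connecting path of length $y \ge 1$; here the bare edge count no longer settles the inequality because of the $4y$ term in Lemma \ref{lemma:length}. I would handle this by writing down, simultaneously, the constraints $l(W_{H'}) \ge l(W)$ forced by simplicity for all such proper sub-dumbbells (for example, for three cycles arranged in a chain, the two constraints coming from the two adjacent pairs of cycles) and then adding them: the path contributions cancel while the lengths of the omitted cycles survive, yielding an impossible inequality of the form $0 \ge z_1 + z_3$. This contradiction shows the cycle rank of $S$ cannot exceed $2$, which by the reduction above completes the proof that the shortest SAFTNC walk occurs only within a theta or dumbbell graph.
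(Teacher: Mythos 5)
Your overall strategy (pass to the support $S$ of a shortest SAFTNC walk, reduce to showing $S$ has cycle rank $2$, and rule out rank $\ge 3$ by exhibiting shorter SAFTNC walks in proper theta or dumbbell subgraphs) parallels the paper's proof, and your treatment of the cases where $S$ contains a proper theta subgraph or a dumbbell with $y=0$ is sound. The genuine gap is exactly in the step you flagged as the main obstacle: the summation trick does not close the remaining case. It works for a chain of three cycles because the middle cycle appears in both constraints, but consider instead the ``star'' configuration: three pairwise disjoint cycles $Z_1,Z_2,Z_3$ of lengths $z_1,z_2,z_3$, each joined by a path of length $y_1,y_2,y_3$ to a common vertex of degree three. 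The only proper dumbbell subgraphs are $D(z_i,z_j;\,y_i+y_j)$, giving the constraints $l(W)\le 2(z_i+z_j)+4(y_i+y_j)$, while your lower bound is $l(W)\ge 2|E(S)|=2\sum_i z_i+2\sum_i y_i$. These are simultaneously satisfiable whenever the paths are long relative to the cycles: with $z_i=2$ and $y_i=5$ the lower bound is $42$ while every constraint permits $l(W)\le 48$, and summing all three constraints only yields $\sum_i z_i\le\sum_i y_i$, which is no contradiction. So no combination of your inequalities excludes rank $\ge 3$ here, and your proof cannot conclude.

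The missing ingredient is the paper's second counting observation: a \emph{simple} abelian-forcing TNC walk must traverse every edge lying on no cycle (every bridge, such as the path edges above) at least \emph{four} times, because a walk crossing a bridge only twice would split at that bridge into two shorter abelian-forcing TNC walks, one on each side of the bridge, contradicting simplicity. This strengthens the lower bound to $l(W)\ge 2\,(\text{number of cycle edges})+4\,(\text{number of bridge edges})$, after which a \emph{single} constraint from one adjacent pair of cycles already gives a contradiction ($2z_3+4y_3\le 0$ in the star, and similarly in any tree-of-cycles configuration), with no summation needed. The paper organizes the argument as two cases --- either a theta subgraph exists (settled by the factor-$2$ edge count, as in your proof) or it does not (settled by the factor-$4$ bridge count) --- and you need this second observation, or an equivalent substitute, to handle branched configurations and complete your final case.
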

\begin{proof}
Let $W$ denote the shortest SAFTNC walk and assume that $W$ traverses all edges in $G$.
From Lemmas \ref{lemma:two_cycles} and \ref{lemma:connecting_two}, $W$ should contain a theta graph or a dumbbell graph.
Consider the following two cases: (i) $G$ has some theta graphs, (ii) $G$ does not have any theta graphs.

In the case (i), we first note that $l(W)$ is at least twice the number of edges in $G$ due to the definition of abelian-forcing TNC walks. The SAFTNC walk only generated by a theta graph in $G$ is shorter than $W$ because the SAFTNC walk has the length exactly twice the number of edges in the theta graph. This contradicts the assumption that $W$ is the shortest one. In the case (ii), we note that a simple abelian-forcing TNC walk should traverse the edge not belonging to any cycles at least four times because if the walk traverses the edge twice, the walk will include two simple abelian-forcing TNC walks each of which occurs in the different side of the edge. Since cycles in G are connected with each other via only one path which does not belong to any cycles, the SAFTNC walk only generated by a dumbbell graph in $G$ is shorter than $W$. This contradicts the assumption that $W$ is the shortest one. Therefore, $W$ occurs only in a theta graph or a dumbbell graph.
\end{proof}

\vspace{2mm}                                    
In the next theorem, $\mathcal{P}_{2i}$ will be identified.

\vspace{2mm}
\begin{theorem}
$\mathcal{P}_{2i}$ is a collection of all $T(x_1,x_2,x_3)$'s with $2(x_1+x_2+x_3)=2i$ and all $D(z_1,z_2;y)$'s with $2(z_1+z_2)+4y=2i$.
\label{thm:ICI}
\end{theorem}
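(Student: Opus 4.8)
The plan is to establish the set equality by proving two inclusions, using throughout the correspondence furnished by Lemma \ref{lemma:cycle}: since an abelian-forcing TNC walk has zero shift sum for every assignment of shift values, a simple abelian-forcing TNC (SAFTNC) walk of length $2i$ in a protograph forces a length-$2i$ cycle in every lifted Tanner graph, regardless of lift size and shift values. Thus a subgraph induces an inevitable cycle of length $2i$ precisely when it carries an SAFTNC walk of length $2i$, and ``irreducible'' (condition 2)) means no proper subgraph carries an SAFTNC walk of length at most $2i$.

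First I would handle the easy inclusion: every $T(x_1,x_2,x_3)$ with $2(x_1+x_2+x_3)=2i$ and every $D(z_1,z_2;y)$ with $2(z_1+z_2)+4y=2i$ lies in $\mathcal{P}_{2i}$, verified against conditions 1)--4). Condition 1) is immediate from Lemma \ref{lemma:length}, which exhibits an SAFTNC walk of length exactly $2i$. For condition 2) I would observe that deleting any single edge from a theta or dumbbell graph leaves every connected component with at most one cycle; by Lemma \ref{lemma:two_cycles} a configuration with fewer than two cycles supports no abelian-forcing TNC walk, so no proper subgraph induces any inevitable cycle, in particular none of length at most $2i$. Conditions 3) and 4) are normalization constraints that merely select a single representative per isomorphism class with rows no more than columns, and are satisfied by choice of representative.

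The substantive direction is the reverse inclusion. Given $P \in \mathcal{P}_{2i}$, let $W$ be a shortest SAFTNC walk in $P$; one of length at most $2i$ exists by condition 1). The crucial first step is to show $W$ traverses every edge of $P$: if some edge $e$ were untraversed, then $P-e$ would still contain $W$ and hence be a proper subgraph inducing an inevitable cycle of length at most $2i$, contradicting condition 2). Next, Lemma \ref{lemma:two_cycles} gives that $W$ contains at least two distinct cycles, and Lemma \ref{lemma:connecting_two} shows these two cycles span a theta or dumbbell subgraph $Q \subseteq P$.

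It remains to upgrade $Q \subseteq P$ to $Q = P$, and this is where I expect the main obstacle to lie. If $Q$ were a proper subgraph, then $P$ would contain a theta or dumbbell graph as a proper subgraph, so Lemma \ref{lemma:connecting_three} would force the shortest SAFTNC walk of $P$ to lie entirely inside some theta or dumbbell proper subgraph, contradicting the edge-surjectivity of $W$ established above. Hence $Q=P$, so $P$ is itself a theta or dumbbell graph. Finally, by Lemma \ref{lemma:length} all SAFTNC walks of a theta (resp. dumbbell) graph share one common length, and condition 1) guarantees a length-$2i$ SAFTNC walk, so that common length is $2i$, yielding $2(x_1+x_2+x_3)=2i$ (resp. $2(z_1+z_2)+4y=2i$). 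The delicate point is precisely the interplay between irreducibility and Lemma \ref{lemma:connecting_three}: irreducibility is exactly what makes the shortest SAFTNC walk edge-surjective, which is the hypothesis needed to rule out any strictly larger ambient graph.
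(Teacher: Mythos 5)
Your proof is correct and takes essentially the same route as the paper's: the paper's own (two-line) proof likewise combines Lemmas \ref{lemma:two_cycles} and \ref{lemma:connecting_three} to force any $P \in \mathcal{P}_{2i}$ to be a theta or dumbbell graph and then invokes Lemma \ref{lemma:length} to fix the parameters. The extra details you supply---edge-surjectivity of the shortest SAFTNC walk derived from irreducibility, and the check that theta/dumbbell graphs themselves satisfy conditions 1)--4) (which the paper disposes of in the remark preceding Lemma \ref{lemma:connecting_three})---are precisely the steps the paper leaves implicit.
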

\begin{proof}
From Lemmas \ref{lemma:two_cycles} and \ref{lemma:connecting_three}, any subgraph $P \in \mathcal{P}_{2i}$ should be either a theta graph or a dumbbell graph. Therefore, the proof is completed by Lemma \ref{lemma:length}.
\end{proof}

\vspace{2mm}
Now we can find all single- and multiple-edge ICI subgraphs from $T(x_1,x_2,x_3)$ and $D(z_1,z_2;y)$. A representative of an isomorphic class in $\mathcal{P}_{2i}$ can be uniquely chosen by selecting parameters satisfying the following conditions:
\begin{itemize}
	\item $x_1 \geq x_2 \geq x_3 \geq 1$
	\item $x_1$, $x_2$, $x_3$ are all even or all odd
	\item $z_1 \geq z_2 \geq 2$, $y \geq 0$
	\item $z_1$ and $z_2$ are even.
\end{itemize}
Note that the second and the fourth conditions are derived because each subgraph $P \in \mathcal{P}_{2i}$ is a bipartite graph.

\begin{table}[!t]
	\renewcommand{\arraystretch}{1.3}
	\caption{All ICI Subgraphs of Length up to 20 (T: Theta Graph, D: Dumbbell Graph, S: Single-Edge, M: Multiple-Edge)}
	\label{table:ICI}
	\centering
	\begin{tabular}{|c|c||c||c||c||c|c||c|c||c|c|c|c||c|c|c||c|c|c|c|c|c|}
		\hline
		\multicolumn{2}{|c||}{$\mathcal{P}_{2i}$} & $\mathcal{P}_{6}$ & $\mathcal{P}_{8}$ & $\mathcal{P}_{10}$ & \multicolumn{2}{c||}{$\mathcal{P}_{12}$} & \multicolumn{2}{c||}{$\mathcal{P}_{14}$} & \multicolumn{4}{c||}{$\mathcal{P}_{16}$} & \multicolumn{3}{c||}{$\mathcal{P}_{18}$} & \multicolumn{6}{c|}{$\mathcal{P}_{20}$} \\
		\hline
		\hline
		\multirow{3}{*}{$T$} & $x_1$ & 1 & - & 3 & 2 & - & 3 & 5 & 4 & - & - & - & 3 & 5 & 7 & 4 & 6 & - & - & - & - \\ \cline{2-22}
		& $x_2$ & 1 & - & 1 & 2 & - & 3 & 1 & 2 & - & - & - & 3 & 3 & 1 & 4 & 2 & - & - & - & - \\ \cline{2-22}
		& $x_3$ & 1 & - & 1 & 2 & - & 1 & 1 & 2 & - & - & - & 3 & 1 & 1 & 2 & 2 & - & - & - & - \\
		\hline
		\multicolumn{2}{|c||}{Type} & M & - & M & S & - & S & M & S & - & - & - & S & S & M & S & S & - & - & - & - \\
		\hline
		\hline
		\multirow{3}{*}{$D$} & $z_1$ & - & 2 & - & 2 & 4 & - & - & 2 & 4 & 4 & 6 & - & - & -  & 2 & 4 & 4 & 6 & 6 & 8 \\ \cline{2-22}
		& $z_2$ & - & 2 & - & 2 & 2 & - & - & 2 & 2 & 4 & 2 & - & - & - & 2 & 2 & 4 & 2 & 4 & 2 \\ \cline{2-22}
		& $y$ & - & 0 & - & 1 & 0 & - & - & 2 & 1 & 0 & 0 & - & - & - & 3 & 2 & 1 & 1 & 0 & 0 \\
		\hline
		\multicolumn{2}{|c||}{Type} & - & M & - & M & M & - & - & M & M & S & M & - & - & - & M & M & S & M & S & M \\
		\hline
	\end{tabular}
\end{table}


According to Theorem \ref{thm:ICI}, each integer solution of the equations $2(x_1 + x_2 + x_3 )=2i$ and $2(z_1 + z_2 ) +4y = 2i$ forms one ICI subgraph in $\mathcal{P}_{2i}$. Note that all ICI subgraphs of any length can be easily found and $T(x_1,1,1)$ and $D(z_1,2;y)$ are ICI subgraphs having multiple edges. All ICI subgraphs of length up to 20 are listed as a form of theta or dumbbell graphs in Table \ref{table:ICI} and all ICI subgraphs of length up to $14$ are listed as a form of incidence matrices as follows:

\vspace{2mm}
$\mathcal{P}_6 ~= [3]$;

\vspace{2mm}
$\mathcal{P}_8 ~= \begin{bmatrix} 2 & 2 \end{bmatrix}$;

\vspace{2mm}
$\mathcal{P}_{10}= \begin{bmatrix} 2 & 1 \\ 1 & 1 \end{bmatrix}$;

\vspace{2mm}
$\mathcal{P}_{12}= \begin{bmatrix} 2 & 1 \\ 0 & 2 \end{bmatrix},
\begin{bmatrix} 2 & 1 & 1 \\ 0 & 1 & 1 \end{bmatrix},
\begin{bmatrix} 1 & 1 & 1 \\ 1 & 1 & 1 \end{bmatrix}$;

\vspace{2mm}
$\mathcal{P}_{14}= \begin{bmatrix} 2 & 1 & 0 \\ 1 & 0 & 1 \\ 0 & 1 & 1 \end{bmatrix},
\begin{bmatrix} 1 & 1 & 1 \\ 1 & 1 & 0 \\ 1 & 0 & 1 \end{bmatrix}$

%

\vspace{4mm}\hspace{-5.5mm}
where the single-edge ICI subgraphs in Table \ref{table:ICI} were also listed in \cite{Kim} and all ICI subgraphs of length up to 12 were also listed in \cite{O'Sullivan}. Note that the transpose of each ICI subgraph also generates inevitable cycles of the same length and thus $\mathcal{P}_{2i}$ will be used to denote both the listed matrices and their transposes.

\vspace{2mm}
\section{Construction of Regular Protographs Avoiding Inevitable Cycles of Length Less Than 12}\label{sec:12}

In this section, we will construct regular protographs which avoid inevitable cycles of length less than 12 in QC LDPC codes. Consider a regular $J \times L$ protograph of which the column- and row-weights are $d_v$ and $d_c$, respectively, where $J < L$. If triple or more edges exist in the protograph, the girth of the lifted QC LDPC code is limited to 6 because of $\mathcal{P}_6 = [3]$. Therefore, only protographs with single and double edges will be considered in this paper. Let $n_2$ denote the number of double edges in the protograph.

Most of the considered protographs have at least two cycles and thus they always induce some inevitable cycles according to Lemmas \ref{lemma:connecting_two} and \ref{lemma:length}. Note that even if a protograph is designed not to contain any $\mathcal{P}_{2i'}$ with $i'<i$ so that inevitable cycles of length less than $2i$ are avoided, this protograph may have some inevitable cycles of length larger than or equal to $2i$.

To construct protographs which do not induce inevitable cycles of length less than 10, a pair of $2$'s should not appear in any row or in any column of the protograph to avoid $\mathcal{P}_8$. As in the next lemma, the number of double edges in a protograph should be upper bounded by the number of horizontal nodes to construct QC LDPC codes with girth larger than or equal to 10.

\vspace{2mm}
\begin{lemma}
If a $J \times L$ protograph does not induce inevitable cycles of length less than 10, then $n_2 \leq J$. 
\label{lemma:1}
\end{lemma}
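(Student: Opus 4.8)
The plan is to bound the number of double edges by relating them to the forbidden subgraphs $\mathcal{P}_8$ and $\mathcal{P}_{10}$. Recall $\mathcal{P}_8 = [\,2\ \ 2\,]$ (together with its transpose $[\,2\ \ 2\,]^{T}$), so avoiding inevitable cycles of length $8$ forces that no two double edges share a common horizontal node and no two double edges share a common vertical node. In other words, each row of the protograph contains at most one entry equal to $2$, and each column contains at most one entry equal to $2$. This is the first observation I would record, since it is an immediate consequence of the $\mathcal{P}_8$ patterns.

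From this ``at most one per row, at most one per column'' property I would set up a counting argument. Consider the set of the $n_2$ positions $(j,l)$ in the $J \times L$ protograph whose entry is $2$. The column constraint already gives $n_2 \le L$ and the row constraint gives $n_2 \le J$, so in fact the desired inequality $n_2 \le J$ follows directly from avoiding $\mathcal{P}_8$ alone via the row constraint. Since $J < L$, the row bound is the binding one, and this is the key step. I would state this cleanly: because each of the $J$ horizontal nodes is incident to at most one double edge, the total number of double edges cannot exceed $J$.

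The main subtlety — and the reason the lemma invokes ``length less than $10$'' rather than merely ``length less than $8$'' — is whether the $\mathcal{P}_8$ condition alone suffices or whether $\mathcal{P}_{10} = \left[\begin{smallmatrix} 2 & 1 \\ 1 & 1 \end{smallmatrix}\right]$ must also be brought in. My expectation is that the bound $n_2 \le J$ is genuinely a consequence of the $\mathcal{P}_8$ constraint, and that the phrase ``length less than $10$'' is used simply because the lemma lives in the context of constructing girth-$10$ codes (one avoids all of $\mathcal{P}_6, \mathcal{P}_8$ simultaneously). So the hard part is not the arithmetic but making sure the logical quantifiers are airtight: I must verify that a single entry equal to $3$ or higher is already excluded (by $\mathcal{P}_6 = [3]$, triple edges cap the girth at $6$), so every nonzero entry contributing to a potential double edge is exactly $2$, and that ``double edge'' is counted once per position rather than once per edge. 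With those conventions fixed, the proof reduces to the pigeonhole-style remark that $J$ rows can host at most $J$ positions of value $2$ when each row hosts at most one such position.

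I would therefore structure the argument as: (i) note triple edges are forbidden by $\mathcal{P}_6$, so we work with entries in $\{0,1,2\}$; (ii) invoke the transpose of $\mathcal{P}_8$ to conclude each horizontal node is incident to at most one double edge; (iii) conclude $n_2 \le J$ by summing over the $J$ rows. The only place requiring care is step (ii), where I must cite that $\mathcal{P}_{8}$ and its transpose are precisely the ICI subgraphs of length $8$, so a protograph avoiding inevitable cycles of length less than $10$ contains no $[\,2\ \ 2\,]$ submatrix in any orientation; this is exactly what Theorem~\ref{thm:ICI} and the listing of $\mathcal{P}_8$ provide.
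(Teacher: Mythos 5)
Your proof is correct and is essentially the paper's own argument: the paper proves the contrapositive (if $n_2 > J$, pigeonhole forces some row to contain two 2's, producing $\mathcal{P}_8$), while you state the same fact directly (avoiding $\mathcal{P}_8$ means at most one 2 per row, hence $n_2 \leq J$). Your additional observations---that $\mathcal{P}_{10}$ is not actually needed and that $\mathcal{P}_6$ rules out entries of 3 or more---are accurate but inessential elaborations of the same one-line pigeonhole argument.
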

\begin{proof}
If $n_2 > J$, there always exists a row which has at least two 2's and thus the protograph contains $\mathcal{P}_8$. This contracts the assumption.
\end{proof}

\vspace{2mm}
In order for QC LDPC codes to have the girth larger than or equal to 12, their protographs should not contain $\mathcal{P}_6$, $\mathcal{P}_8$, and $\mathcal{P}_{10}$. We will explain that an incidence matrix of a \textit{balanced ternary design} (BTD) with $\rho_2 = 1$ and $\lambda = 2$ is also the incidence matrix of a regular protograph with $n_2=J$ that does not induce inevitable cycles of length less than 12.

\vspace{2mm}
\begin{definition}[\cite{Colbourn1}]
A \textit{balanced ternary design} BTD$(v,b;\rho_1,\rho_2,r;k,\lambda)$ is an arrangement of $v$ elements $\{ 1,2,\ldots,v\}$ into $b$ multisets, or blocks, each of cardinality $k$, $k\leq v$, satisfying that (i) each element appears $r=\rho_1+2\rho_2$ times altogether, with multiplicity one in exactly $\rho_1$ blocks, with multiplicity two in exactly $\rho_2$ blocks and (ii) every pair of distinct elements appears $\lambda$ times, i.e., if $m_{j,h}$ is the multiplicity of the element $j$ in the $h$-th block, then for any elements $i$ and $j$ with $i \neq j$, we have $\sum_{h=1}^{b}m_{i,h}m_{j,h}=\lambda$.
\end{definition}

\vspace{2mm}
Note that a $v \times b$ incidence matrix of a BTD$(v,b;\rho_1,\rho_2,r;k,\lambda)$ is simply expressed as $[m_{j,h}]$ and the column- and row-weights are $k$ and $r$, respectively.

\vspace{2mm}
\begin{theorem}
An incidence matrix of a BTD$(v,b;\rho_1,\rho_2,r;k,\lambda)$ with $\rho_2=1$ and $\lambda=2$ does not contain $\mathcal{P}_{6}$, $\mathcal{P}_{8}$, and $\mathcal{P}_{10}$.
\end{theorem}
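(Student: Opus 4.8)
The plan is to translate each forbidden pattern $\mathcal{P}_6$, $\mathcal{P}_8$, and $\mathcal{P}_{10}$ (together with the transposes implied by the convention at the end of Section~\ref{sec:subgraph}) into an elementary arithmetic condition on the entries $m_{j,h}$ of the incidence matrix, and then to contradict that condition using the two defining properties of a BTD with $\rho_2=1$ and $\lambda=2$: property (i), that each element appears with multiplicity two in exactly one block, and property (ii), that $\sum_{h=1}^{b} m_{i,h}m_{j,h}=2$ for every pair of distinct elements $i\neq j$. The key reading is that the incidence matrix contains $\mathcal{P}_{2i}$ precisely when, for some choice of distinct rows and distinct columns (and possibly after transposition), the corresponding submatrix dominates the listed pattern entrywise; a pattern entry equal to $1$ (resp.\ $2$) then demands an entry of size at least $1$ (resp.\ at least $2$). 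This makes property (ii), which is stated only for distinct elements, directly applicable.

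First I would dispose of the two cases controlled directly by the multiplicity structure. For $\mathcal{P}_6=[3]$, the definition of a balanced ternary design already forces every multiplicity $m_{j,h}$ into $\{0,1,2\}$, since an element appears with multiplicity one in $\rho_1$ blocks, with multiplicity two in $\rho_2$ blocks, and with no higher multiplicity anywhere; hence no entry equals $3$. For $\mathcal{P}_8=\begin{bmatrix}2&2\end{bmatrix}$, two $2$'s in a common row, I would use $\rho_2=1$: each element appears with multiplicity two in exactly one block, so every row of the incidence matrix contains exactly one entry equal to $2$ and therefore cannot contain two.

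The remaining cases, the transpose of $\mathcal{P}_8$ and $\mathcal{P}_{10}$ itself, are where property (ii) does the work. If two distinct rows $i_1,i_2$ both carried a $2$ in a common column $h$, then the single summand $m_{i_1,h}m_{i_2,h}=4$ would already exceed $\lambda=2$; since every summand is nonnegative this is impossible, ruling out two $2$'s in a column. For $\mathcal{P}_{10}=\bigl[\begin{smallmatrix}2&1\\1&1\end{smallmatrix}\bigr]$, suppose the matrix contained it, so there are distinct rows $i_1,i_2$ and distinct columns $j_1,j_2$ with $m_{i_1,j_1}=2$ and $m_{i_1,j_2},m_{i_2,j_1},m_{i_2,j_2}\geq 1$. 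Evaluating property (ii) for the pair $i_1,i_2$ and retaining only the contributions of columns $j_1$ and $j_2$ gives $\sum_{h} m_{i_1,h}m_{i_2,h}\geq 2\cdot 1+1\cdot 1=3>2=\lambda$, a contradiction. Since $\mathcal{P}_{10}$ equals its own transpose, no further subcase arises.

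I do not expect a serious obstacle; the statement is a direct verification once the containment convention is made precise. The only point needing care is exactly that convention, namely that the rows and columns witnessing a pattern must be distinct so that property (ii) applies, and that pattern zeros impose no constraint. The mild crux is the $\mathcal{P}_{10}$ case, where the insight is that the two columns of the pattern alone force a partial pairwise-product sum of at least $3$, overshooting $\lambda=2$ before the remaining columns are even considered.
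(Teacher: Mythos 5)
Your proposal is correct and follows essentially the same route as the paper's proof: $\mathcal{P}_6$ is excluded because multiplicities are at most two, the row case of $\mathcal{P}_8$ by $\rho_2=1$, the column case of $\mathcal{P}_8$ by $\lambda=2$ (your product bound $m_{i_1,h}m_{i_2,h}=4>2$ is just the explicit form of the paper's remark that each column has at most one 2), and $\mathcal{P}_{10}$ by observing that the pattern forces a pair of distinct elements to appear at least three times, contradicting $\lambda=2$. Your version merely spells out the entrywise-domination convention and the arithmetic that the paper leaves implicit.
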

\begin{proof}
Let $P_{\mathrm{BTD}}$ be an incidence matrix of a BTD$(v,b;\rho_1,\rho_2,r;k,\lambda)$ with $\rho_2=1$ and $\lambda=2$. Since every element of this BTD can have multiplicity up to two, $\mathcal{P}_6$ does not appear in $P_{\mathrm{BTD}}$. The condition $\rho_2=1$ implies that 2 appears once in each row of $P_{\mathrm{BTD}}$ and $\lambda=2$ implies that each column of $P_{\mathrm{BTD}}$ can have at most one 2. Hence $P_{\mathrm{BTD}}$ does not contain $\mathcal{P}_8$. Since a pair of distinct elements will appear at least three times in this BTD if $P_{\mathrm{BTD}}$ has $\mathcal{P}_{10}$ as its submatrix, $P_{\mathrm{BTD}}$ does not contain $\mathcal{P}_{10}$.
\end{proof}

\vspace{2mm}
All possible BTDs with $r \leq 15$ are given in \cite{Billington}. Table \ref{table1} lists all parameters of regular protographs with $d_c \leq 15$ avoiding inevitable cycles of length less than 12 constructed from BTDs.

\begin{table}[!t]
\renewcommand{\arraystretch}{1.3}
\caption{Regular Protographs With $n_2=J$ Avoiding Inevitable Cycles of Length $<$ 12 Constructed From BTDs for $d_c \leq 15$}
\label{table1}
\centering
\begin{tabular}{|c||c|c|c|c|c|c|c|c|c|c|c|}
\hline
$J$ & 6 & 12 & 9 & 20 & 12 & 30 & 42 & 48 & 42 & 15 & 60\\
\hline
$L$ & 12 & 24 & 27 & 40 & 48 & 60 & 63 & 64 & 84 & 75 & 100\\
\hline
$d_v$ & 3 & 4 & 3 & 5 & 3 & 6 & 8 & 9 & 7 & 3 & 9\\
\hline
$d_c$ & 6 & 8 & 9 & 10 & 12 & 12 & 12 & 12 & 14 & 15 & 15\\
\hline
\end{tabular}
\end{table}

\vspace{2mm}
\begin{example}
An incidence matrix of BTD$(6,12;4,1,6;3,2)$ is shown in Fig. \ref{fig:6_12} and we can see that any ICI subgraph $\mathcal{P}_{2i}$ for $i \leq 5$ does not appear.

\end{example}

\vspace{2mm}
As in Table \ref{table1}, the incidence matrices of BTDs with $\rho_2=1$ and $\lambda=2$ do not provide a sufficiently large number of regular protographs. In fact, the condition that every pair of distinct elements appears exactly twice is not necessary and the condition that every pair of distinct elements appears at most twice is enough for constructing regular protographs avoiding inevitable cycles of length less than 12. Besides the regular protographs in Table \ref{table1}, there are many regular protographs with $n_2=J$ avoiding inevitable cycles of length less than 12.

\vspace{2mm}
\begin{example}
Find the smallest regular protograph with $d_v=3$ and $n_2=J$ avoiding inevitable cycles of length less than 12. We first derive a necessary condition for the existence of such a regular protograph by regarding the protograph as an incidence matrix of a block design. There are ${J \choose 2}$ distinct pairs of elements and, on the other hand, the number of all possible pairs of elements in the design is $n_2 \cdot 2 + (L-n_2) \cdot {3 \choose 2}$. Since every pair of elements appears at most twice, that is, $\mathcal{P}_{10}$ does not appear in the protograph, we have a necessary condition
\begin{equation}
2 \cdot {J \choose 2} \geq n_2 \cdot 2 + (L-n_2) \cdot {3 \choose 2}.
\label{eq:12_condition}
\end{equation}

For $J=3$, due to $L \geq 4$ and $n_2=3$, the necessary condition (\ref{eq:12_condition}) is not satisfied.
For $J=4$, by counting the edges in the protograph, the equality $Jd_c=d_vL$, that is, $4d_c=3L$ holds. Since the smallest integer root of this equality is $(d_c,L)=(6,8)$, we have $L \geq 8$ and (\ref{eq:12_condition}) is not satisfied. Similarly, for $J=5$, $L$ should be larger than or equal to 10 and (\ref{eq:12_condition}) is not satisfied either.

For $J=6$, from $6d_c=3L$, the possible smallest protograph has the size $6 \times 8$ and it satisfies (\ref{eq:12_condition}). By first constructing a $6 \times 6$ regular matrix where each column has one 2 and then properly adding two columns only consisting of 0's and 1's, a $6 \times 8$ regular protograph can be constructed as given in Fig. \ref{fig:6_8}. This is the smallest regular protograph with $d_v=3$ and $n_2=J$ avoiding inevitable cycles of length less than 12.
\end{example}

\begin{figure}[!t]
	\centering
	\subfigure[$6 \times 12$]{\includegraphics[scale=0.8]{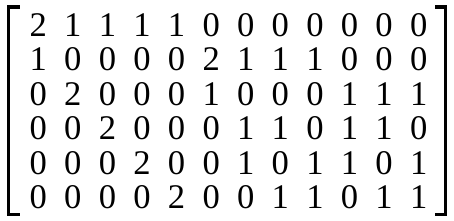}%
	\label{fig:6_12}} \hspace{5mm}
	\subfigure[$6 \times 8$]{\includegraphics[scale=0.8]{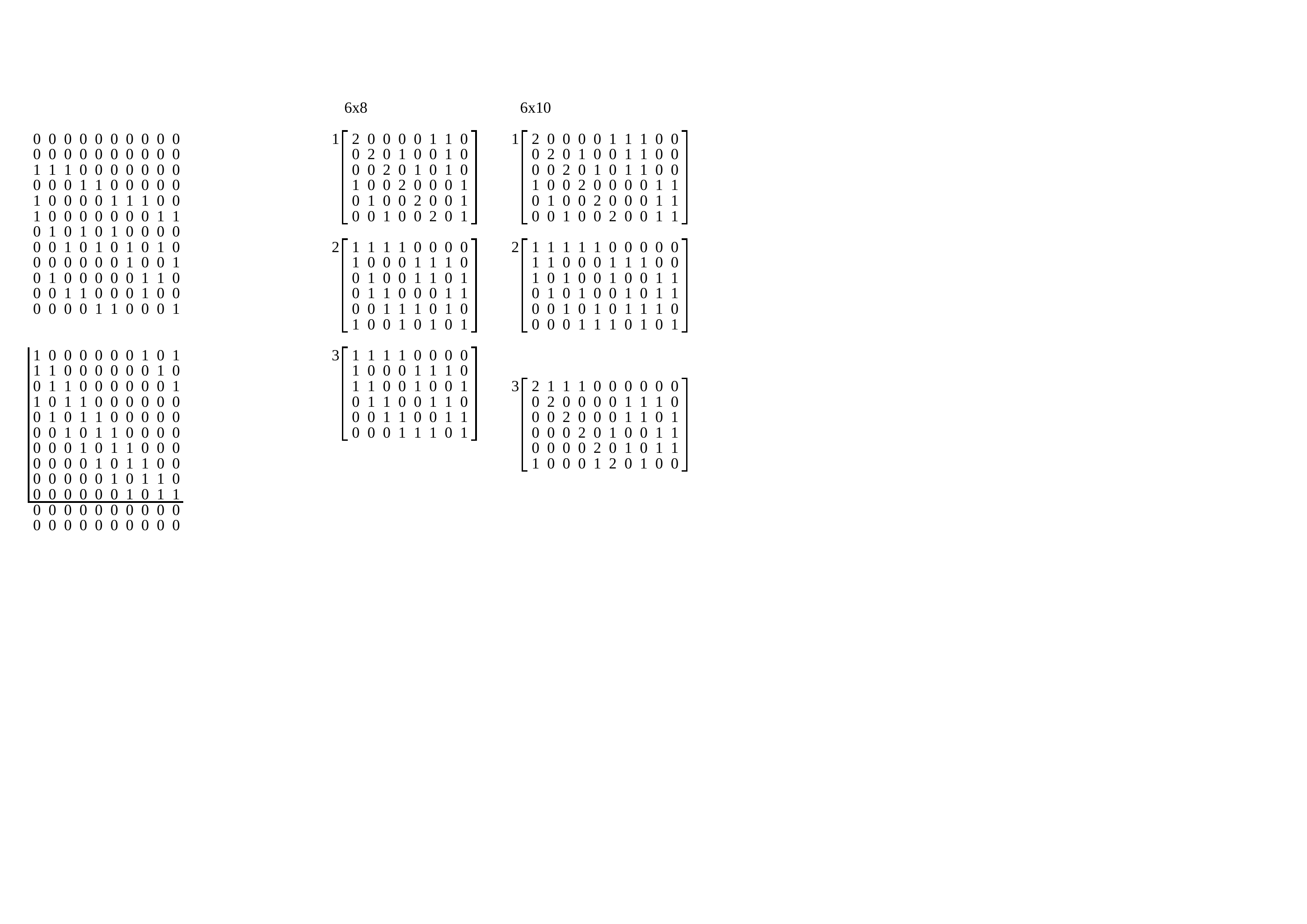}%
	\label{fig:6_8}}
	\caption{Two regular protographs with $d_v=3$ and $n_2=6$ avoiding inevitable cycles of length $<$ 12.}
	\label{fig:12}
\end{figure}

\vspace{2mm}
\section{Construction of Regular Protographs Avoiding Inevitable Cycles of Length Less Than 14}\label{sec:14}

Now we will focus on the construction of regular multiple-edge protographs avoiding inevitable cycles of length less than 14. A systematic construction method of single-edge regular protographs avoiding inevitable cycles of length less than 14 was provided in \cite{Kim}. Since multiple-edge protographs are now being considered, two additional ICI subgraphs having double edges of $\mathcal{P}_{12}$ as well as $\mathcal{P}_8$ and $\mathcal{P}_{10}$ must be avoided, which makes the problem more complicated. In this section, systematic construction methods for multiple-edge protographs are proposed based on various combinatorial designs.

Consider a regular $J \times L$ protograph whose column- and row-weights are $d_v$ and $d_c$, respectively. Let $n_2$ denote the number of double edges in the protograph. Assume that $d_v \geq 3$ because the regular QC LDPC codes with $d_v=2$ are not used in general due to their poor performance.

Using row and column permutations, every regular protograph not inducing inevitable cycles of length less than 14 can be represented as in Fig. \ref{fig:structure}. The $n_2 \times n_2$ submatrix $A$ has $n_2$ 2's as its diagonal elements and the other elements of $A$ should be zero to avoid the first ICI subgraph of $\mathcal{P}_{12}$. $F$ is a $(J-n_2) \times n_2$ submatrix consisting of columns of weight $d_v-2$. By appropriate column permutation of all but $A$ and $F$ in the protograph, all the columns whose lower parts have nonzero weight are relocated in the part of $B$ and $G$, and the remaining columns make $T$ with column-weight $d_v$ and all-zero matrix $O$. Let $G$ and $T$ be $J_G \times L_G$ and $J_T \times L_T$ matrices, respectively.

By Lemma \ref{lemma:1}, $n_2$ cannot be larger than $J$. Moreover, if the regular protographs which do not induce inevitable cycles of length less than 14 are considered for $d_v=3$, the following theorem provides an additional condition on $n_2$.

\begin{figure}[tb]
    \centering
	\includegraphics[scale=0.7]{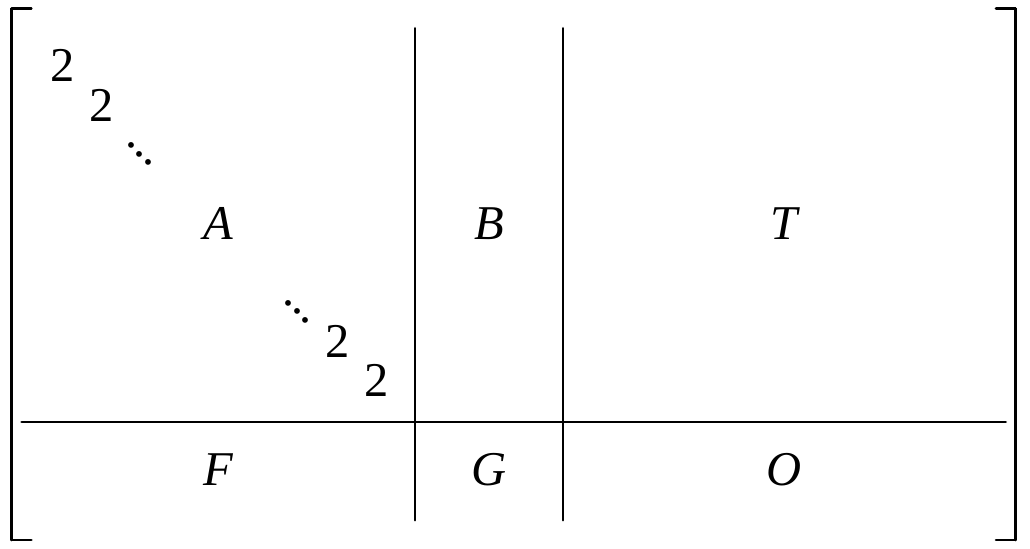}%
    \caption{The structure of regular protographs avoiding inevitable cycle of length $<$ 14.}
    \label{fig:structure}
\end{figure}

\vspace{2mm}
\begin{theorem}
Assume that a regular protograph with $d_v = 3$ and $d_c \geq 4$ does not induce inevitable cycles of length less than 14. Then $n_2 \leq J-2$.
\label{thm:ntwo}
\end{theorem}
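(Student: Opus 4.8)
The plan is to start from the normal form of Fig.~\ref{fig:structure} together with Lemma~\ref{lemma:1}, which already gives $n_2 \le J$, and then to rule out the two extremal values $n_2 = J$ and $n_2 = J-1$. The single observation driving everything is that, since $d_v = 3$, each of the $n_2$ columns carrying a double edge has weight exactly $3$: a $2$ sitting on the diagonal of $A$ plus one further single edge. Because the off-diagonal entries of $A$ are forced to be zero (to avoid the first subgraph of $\mathcal{P}_{12}$) and the block $O$ is all-zero, this extra single edge cannot lie in any top row; it must land in one of the $J - n_2$ bottom rows, i.e. inside $F$. I would make this the first and most reusable step.

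For $n_2 = J$ there are no bottom rows at all ($F$ is empty), so the required extra single edge of each double-edge column has nowhere to be placed, an immediate contradiction. Hence $n_2 \le J - 1$, and it remains to exclude $n_2 = J - 1$.

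In the case $n_2 = J - 1$ there is exactly one bottom row, call it $r$, and by the first step every one of the $n_2$ extra single edges is forced into row $r$; thus $F$ becomes the all-ones row, and in particular the double-edge column $1$ has entry $2$ in its top row and $1$ in row $r$. Now I would bring in the hypothesis $d_c \ge 4$: the top row containing the $(1,1)$ double edge must spend its remaining $d_c - 2 \ge 2$ incidences on single edges, and since the off-diagonal of $A$ and the block $O$ are zero these single edges all fall in the $B/G$ columns. Pick any such column $k$ met by that top row. Being a $B/G$ column, $k$ also meets the unique bottom row $r$. Reading the $2\times 2$ submatrix on the rows $\{1,r\}$ and the columns $\{1,k\}$ then yields $\begin{bmatrix}2&1\\1&1\end{bmatrix}=\mathcal{P}_{10}$, contradicting the assumption that no inevitable cycle of length less than $14$ (here length $10$) is induced. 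This finishes $n_2 \le J-2$.

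I expect the $n_2 = J-1$ case to be the real obstacle. The tempting configuration --- dumping all the extra single edges of the double-edge columns into the one available bottom row --- is harmless when viewed only among the double edges, since two such columns joined through $r$ merely form a $D(2,2;2)$ dumbbell whose shortest inevitable cycle has length $16 \ge 14$. The contradiction only surfaces once this forced $F$-row is combined with the $B$-entries that regularity and $d_c \ge 4$ force every top row to carry; the crux of the argument is to exhibit the single column $k$ where a top row and row $r$ collide so that the double edge at $(1,1)$ closes a $\mathcal{P}_{10}$.
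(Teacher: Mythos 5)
Your normalization step and your treatment of $n_2 = J$ are sound and essentially match the paper: once $\mathcal{P}_8$, $\mathcal{P}_{10}$, and the first pattern of $\mathcal{P}_{12}$ force the $2$'s of $A$ onto the diagonal with zero off-diagonal entries, a double-edge column has no room for its third incidence when there are no bottom rows. The genuine gap is in the case $n_2 = J-1$, at the step where you assert that the top row's remaining $d_c - 2$ incidences ``all fall in the $B/G$ columns'' because ``the off-diagonal of $A$ and the block $O$ are zero.'' The vanishing of $O$ constrains only the bottom row $r$: it says $T$-columns have no support in row $r$, but it in no way prevents a top row from spending its single edges inside $T$, whose columns carry all three $1$'s among the top rows. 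Consequently the column $k$ you need---one met both by your chosen top row and by row $r$---need not exist. It fails to exist exactly when $d_c = J-1$: then the all-ones row $F$ already exhausts the weight $d_c$ of row $r$, so $G$ (hence $B$) is empty, every single-edge column is a $T$-column, and one can check directly that such a protograph contains no $\mathcal{P}_{10}$ whatsoever (any candidate would need a $1$ in row $r$ outside of $F$, which is impossible). So your argument settles only the subcase $d_c > J-1$, which is also the easy half of the paper's proof.

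The missing subcase $d_c = J-1$ requires a different idea, and it is where the paper does its real work: there the obstruction is not $\mathcal{P}_{10}$ but the second and third patterns of $\mathcal{P}_{12}$. Any repeated column-wise pair of $1$'s in $T$, combined with the diagonal $2$ of $A$ in one of the two rows involved, creates the pattern $\begin{bmatrix} 2 & 1 & 1 \\ 0 & 1 & 1 \end{bmatrix}$, an inevitable cycle of length $12$. Since $T$ has row-weight $d_c - 2 = J-3$ and hence $L_T = (J-1)(J-3)/3$ columns of weight $3$, all $\binom{3}{2} L_T = (J-1)(J-3)$ of its column-wise pairs must be distinct, forcing $(J-1)(J-3) \leq \binom{J-1}{2}$, i.e., $J \leq 4$, which contradicts $d_c = J-1 \geq 4$. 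Without this counting argument (or an equivalent), the theorem is unproved for $d_c = J-1$; your closing remark that the crux is ``to exhibit the single column $k$'' identifies the wrong crux, since that collision is cheap when $B$ is nonempty and impossible when it is empty.
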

\begin{proof}
The inequality $n_2 \leq J$ holds by Lemma \ref{lemma:1}. The protograph with $n_2=J$ should be of the form $[A|B|T]$ from Fig. \ref{fig:structure} and the submatrix $A$ is no longer a diagonal matrix due to $d_v=3$. Therefore, $A$ should contain the first ICI subgraph of $\mathcal{P}_{12}$ because each column of $A$ also contains exactly one 1 and hence $n_2$ should be less than $J$. There should be one double edge in each row to avoid $\mathcal{P}_8$ and each column should have at most one double edge. The column with a double edge also has a single edge in other position, which generates the first pattern of $\mathcal{P}_{12}$.

Now suppose that $n_2=J-1$. The protograph has the form of Fig. \ref{fig:structure} and $F$ is the $1 \times (J-1)$ all-1 matrix. Due to $F$, $d_c$ cannot be less than $J-1$. If $d_c > J-1$, $G$ becomes the $1 \times (d_c-(J-1))$ all-1 matrix and each column of $B$ has a pair of 1's, which generates $\mathcal{P}_{10}$ in the union of $A$, $B$, $F$, and $G$. If $d_c=J-1$, the protograph is made up of only $A$, $T$, $F$, and $O$, and the size of $T$ is $(J-1) \times (J-1)(J-3)/3$ because the column- and row-weights of $T$ are $3$ and $J-3$, respectively. Since $T$ should not have $\begin{bmatrix} 1 & 1 \\ 1 & 1 \end{bmatrix}$ as its submatrix to avoid the second and the third ICI subgraphs of $\mathcal{P}_{12}$ in the union of $A$, $B$, and $T$, a pair of 1's in the same column can appear at most once in $T$. To satisfy this condition, the number of all possible column-wise pairs of 1's should be larger than or equal to the number of actual column-wise pairs of 1's in $T$. Therefore, we have ${J-1 \choose 2} \geq {3 \choose 2} \cdot (J-1)(J-3)/3$, i.e., $J \leq 4$. Due to $d_c=J-1$, this contradicts the assumption of $d_c \geq 4$.
\end{proof}

\vspace{2mm}
Based on Theorem \ref{thm:ntwo}, the case of $d_v=3$ and $n_2=J-2$ is considered in Subsection \ref{subsec:J-2} and the construction method of regular protographs for $d_v=3$ and $n_2=J-2$ is extended not only to the case of $d_v=3$ and $n_2 < J-2$ but also to the case of $d_v \geq 4$ in Subsection \ref{subsec:other}.

\subsection{Regular Protographs With $d_v=3$ and $n_2 = J-2$}
\label{subsec:J-2}

In this subsection, the construction of regular protographs with $d_v=3$ and $n_2=J-2$ is elaborated. Necessary conditions on $d_c$ and $J$ for the existence of regular protographs with $d_v=3$ and $n_2 = J-2$, which avoid inevitable cycles of length less than 14, are derived as follows.

\vspace{2mm}
\begin{theorem}
Assume that a regular protograph with $d_v = 3$, $d_c \geq 4$, and $n_2=J-2$ does not induce inevitable cycles of length less than 14. Then $d_c$ and $J$ should satisfy either
	\begin{enumerate}
		\item $J \equiv 5 \mod 6$, $J\geq 11$, and $d_c=(J+1)/2$ or
		\item $J \equiv 2 \mod 6$, $J\geq 14$, and $d_c=(J-2)/2$ or
		\item $J \equiv 3 \mod 6$, $J\geq 9$, and $d_c=(J-1)/2,~(J+1)/2$ or
		\item $J \equiv 1 \mod 6$, $J\geq 13$, and $d_c=(J-1)/2$ or
		\item $J \equiv 0 \mod 6$, $J\geq 12$, and $d_c=(J-2)/2,~J/2$ or
		\item $J=10$ and $d_c=6$.
	\end{enumerate}	
\label{theorem:main}
\end{theorem}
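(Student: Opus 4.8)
The plan is to work throughout with the normal form of Fig.~\ref{fig:structure} specialized to $d_v=3$ and $n_2=J-2$. There the diagonal block $A$ equals $2I_{J-2}$ (the off-diagonal of $A$ is forced to be zero to avoid the first pattern of $\mathcal{P}_{12}$, and $\mathcal{P}_8$ forces one $2$ per row/column), so each of the top $J-2$ rows carries exactly one double edge and its remaining column weight $d_v-2=1$ lands in one of the two bottom rows (this is the content of $F$, whose columns have weight $1$). Hence every top row is a $2$ together with $d_c-2$ single $1$'s, and the two bottom rows consist of $1$'s only. I would first classify the single-edge columns by the size of their support among the top rows: type $T$ (three top $1$'s, count $L_T$), type with two top $1$'s and one bottom $1$ (count $g_1$), and type with one top $1$ and two bottom $1$'s (count $g_2$); a single-edge column cannot place more than two $1$'s in the two bottom rows, so these exhaust all cases.

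The \emph{lower bound} is the easy half. Counting the $1$'s of the two bottom rows in two ways gives the identity
\begin{equation*}
2d_c=(J-2)+g_1+2g_2,
\end{equation*}
since the $J-2$ double edges each drop one $1$ into the bottom rows through $F$, while the $g_1$- and $g_2$-columns drop one and two, respectively. As $g_1,g_2\ge 0$, this yields $d_c\ge (J-2)/2$ at once.

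For the \emph{upper bound} the naive global count ``every pair of top rows is covered at most once, so there are at most $\binom{J-2}{2}$ covered pairs'' is too weak (it only gives $d_c\lesssim J/2+3/2$ and fails to separate $d_c=J/2$ from $d_c=J/2+1$). The key is to use regularity row by row. Avoidance of the second pattern of $\mathcal{P}_{12}$, using that each top row already owns a $2$ whose column is zero on every other top row, forces any two top rows to share at most one common $1$-column. Thus, for a fixed top row $i$, the single-edge columns through $i$ reach pairwise distinct other top rows, so if $t_i,s_{2,i},s_{1,i}$ count the columns of the three types through $i$, then $2t_i+s_{2,i}\le J-3$ while $t_i+s_{2,i}+s_{1,i}=d_c-2$. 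Eliminating $t_i$ gives $2d_c\le J+1+s_{2,i}+2s_{1,i}$ for every $i$. Summing over the $J-2$ top rows, using $\sum_i s_{2,i}=2g_1$ and $\sum_i s_{1,i}=g_2$, and then substituting the bottom-row identity to eliminate $g_1,g_2$, collapses everything to a single linear inequality in $d_c,J$ whose solution is
\begin{equation*}
d_c\le \frac{(J-1)(J-2)}{2(J-4)}=\frac{J+1}{2}+\frac{3}{J-4}.
\end{equation*}

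Finally I would combine the sandwich $(J-2)/2\le d_c\le \tfrac{J+1}{2}+\tfrac{3}{J-4}$ (an interval of length $<\tfrac52$ for the relevant $J$, hence containing at most three integers) with the handshake divisibility $3\mid Jd_c$ coming from $L=Jd_c/3$, and with $d_c\ge 4$. Splitting into residues of $J$ modulo $6$ then pins $d_c$ down: when $3\nmid J$ the divisibility forces $3\mid d_c$, which selects a single value in the interval for $J\equiv5,2,1\pmod6$ (cases 1, 2, 4) and leaves no admissible value for $J\equiv4\pmod6$ except at $J=10$ (case 6); when $3\mid J$ the divisibility is automatic and both interior integers survive, giving the paired values of cases 3 and 5. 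The thresholds $J\ge 11,14,9,13,12$ are exactly the smallest $J$ in each residue class for which the selected $d_c$ reaches $4$. The main obstacle, and the step I would be most careful about, is the upper bound: one must resist the loose global pair-count and instead run the per-row packing argument together with the bottom-row identity, since only this tight form distinguishes $d_c=J/2$ from $d_c=J/2+1$ and rules out all of $J\equiv4\pmod6$ beyond $J=10$. The concluding modular bookkeeping is routine but must be executed case by case.
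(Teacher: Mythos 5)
Your normal form, the bottom-row identity $2d_c=(J-2)+g_1+2g_2$, and the per-row packing bound are all correct, and for $J\geq 7$ your three conditions (the sandwich, $3\mid Jd_c$, and $d_c\geq 4$) do reproduce the theorem's list exactly. One remark before the main issue: your claim that the global pair count is ``too weak'' is mistaken. The paper's own proof uses precisely that count --- each column of $B$ covers one pair of top rows, each column of $T$ covers $\binom{3}{2}=3$ pairs, and the total is at most $\binom{J-2}{2}$ --- and it obtains the same bound $d_c\leq (J-1)(J-2)/(2(J-4))$. In fact, summing your per-row inequality $2t_i+s_{2,i}\leq J-3$ over the top rows gives $3L_T+g_1\leq\binom{J-2}{2}$, which is literally the global count; your argument is that count in disguise, not a refinement of it.

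The genuine gap is that you never invoke $\mathcal{P}_{10}$, i.e., the constraints coupling the top rows to the two bottom rows through $F$ and $G$. The paper uses $\mathcal{P}_{10}$-avoidance (together with the second pattern of $\mathcal{P}_{12}$) to show that every column of $G$ has weight one (forcing $g_2=0$) and, crucially, that every row of $B$ contains at most one $1$; this yields a third bound $2\bigl(2d_c-(J-2)\bigr)\leq J-2$, i.e., $d_c\leq 3(J-2)/4$. That bound is redundant for $J\geq 10$, but it is exactly what eliminates the small cases, and your proof cannot do without it: your sandwich admits $(J,d_c)=(5,6)$ (the interval is $[1.5,\,6]$ and $6$ is a multiple of $3$) and $(J,d_c)=(6,4),(6,5)$ (no divisibility constraint when $3\mid J$), while for $J=3,4$ your upper bound $2d_c(J-4)\leq(J-1)(J-2)$ is vacuous. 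These cases cannot be dismissed by counting alone: for $(J,d_c)=(6,4)$ all of your constraints are simultaneously satisfiable ($g_1=4$, $g_2=0$, $L_T=0$, each pair of top rows sharing at most one column), yet the protograph is impossible, because $B$ then has $8$ ones in $4$ rows and a row of $B$ with two $1$'s forces $\mathcal{P}_{10}$ or the second pattern of $\mathcal{P}_{12}$ through the bottom rows. Consequently the thresholds $J\geq 11$ in case 1 and $J\geq 12$ in case 5, as well as the exclusion of $J\equiv 3,4 \bmod 6$ below the stated thresholds, are not established, and your closing claim that the thresholds are ``exactly the smallest $J$ in each residue class for which the selected $d_c$ reaches $4$'' is false at $J=5$ and $J=6$, where your constraints select values $\geq 4$. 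To close the gap, add the paper's analysis of $B$, $F$, $G$: no column of $G$ has two $1$'s and no row of $B$ has two $1$'s, giving $d_c\leq 3(J-2)/4$, which contradicts $d_c\geq 4$ for all $J\leq 6$.
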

\begin{proof}
By counting the edges in the protograph, we have $d_cJ=d_vL$. Since $d_v=3$ and $L$ is an integer, $d_cJ \equiv 0 \mod 3$. Also, the submatrix $F$ in Fig. \ref{fig:structure} is a $2 \times (J-2)$ matrix consisting of weight-1 columns. Consider two cases: (i) $F$ contains an all-1 row, (ii) $F$ does not contain any all-1 rows.

For the case (i), if $d_c>J-2$, the ICI subgraph $\mathcal{P}_{10}$ appears in the union of $A$, $B$, $F$, and $G$. If $d_c=J-2$, $G$ is a $2 \times (J-2)$ matrix with an all-1 row at the different row position from the all-1 row of $F$. Then there exist some rows containing a pair of 1's in $B$ because $B$ has $2(J-2)$ 1's and the column-weight of $B$ is 2, which generates the second ICI subgraph of $\mathcal{P}_{12}$ in the union of $A$, $B$, and $G$. Therefore, the case (i) is impossible.

For the case (ii), if a column of $G$ has a pair of 1's, the column including this pair in the protograph and another column in the union of $A$ and $F$ generate $\mathcal{P}_{10}$. Therefore, each column of $F$ and $G$ cannot have a pair of 1's. Since the number of columns in $F$ is $J-2$ and the total number of columns in $F$ and $G$ is $2d_c$, we have $(J-2)/2 \leq d_c$, where the equality holds when $B$ and $G$ do not appear in the protograph. If a row of $B$ has a pair of 1's, either $\mathcal{P}_{10}$ or the second ICI subgraph of $\mathcal{P}_{12}$ must occur in the union of $A$, $F$, $G$, and $B$. Therefore, each row of $B$ can have at most one 1 so that the number of 1's in $B$ cannot exceed the number of rows in $B$. Since the column-weight of $B$ is 2 and $B$ has $2(2d_c-(J-2))$ 1's, we obtain $d_c \leq 3(J-2)/4$ from $2(2d_c - (J-2)) \leq J-2$. Finally, it remains to determine the structure of $T$ such that the submatrix $\begin{bmatrix} 1 & 1 \\ 1 & 1 \end{bmatrix}$ does not appear in the union of $B$ and $T$ to prevent the second ICI subgraph of $\mathcal{P}_{12}$. As in the proof of Theorem \ref{thm:ntwo}, by counting the number of column-wise pairs of 1's in $B$ and $T$, we obtain the condition ${2 \choose 2} \cdot (2d_c - (J-2)) + {3 \choose 2} \cdot (d_cJ/3-2d_c) \leq {J-2 \choose 2}$ yielding $d_c \leq (J-1)(J-2)/(2(J-4))$.

The above conditions on $d_c$ and $J$ are summarized as follows:
	\begin{align*}
		&~~~~~~~~~~~~~~d_c J \equiv 0 \mod 3;\\
		&\frac{J-2}{2} \leq d_c \leq \min \left \{ \frac{3}{4}(J-2),\frac{(J-1)(J-2)}{2(J-4)} \right \}.
	\end{align*}
Since $d_c$ and $J$ are integers, the above conditions reduce to simple linear relations with respect to $J$ modulo 6 as given in the theorem statement.
\end{proof}

\vspace{2mm}
In Theorem \ref{theorem:main}, all possible regular protographs avoiding inevitable cycles of length less than 14 are provided for $d_v = 3$ and $n_2=J-2$, and Table \ref{table2} only lists those for $J \leq 26$ among them.

	\begin{table}
		\renewcommand{\arraystretch}{1.3}
		\caption{All Possible Regular Protographs Avoiding Inevitable Cycles of Length $<$ 14 When $d_v=3$ and $n_2=J-2$ for $J \leq 26$}
		\label{table2}
		\centering
		\begin{tabular}{|c||c|c|c|c|c|c|c|c|c|}
			\hline
			$J$ & 9 & 10 & 11 & 12 & 13 & 14 & 15 & 16 & 17 \\
			\hline
			\hline
			$L$ & 12, 15 & 20 & 22 & 20, 24 & 26 & 28 & 35, 40 & $-$ & 51 \\
			\hline
			$d_c$ & 4, 5 & 6 & 6 & 5, 6 & 6 & 6 & 7, 8 & $-$ & 9 \\
			\hline
			\hline
			$J$ & 18 & 19 & 20 & 21 & 22 & 23 & 24 & 25 & 26 \\
			\hline
			\hline
			$L$  & 48, 54 & 57 & 60 & 70, 77 & $-$ & 92 & 88, 96 & 100 & 104 \\
			\hline
			$d_c$  & 8, 9 & 9 & 9 & 10, 11 & $-$ & 12 & 11, 12 & 12 & 12 \\
			\hline
		\end{tabular}
	\end{table}

Now we focus on the existence problem and the construction of the regular protographs with the parameters found in Theorem \ref{theorem:main}. Note that the proposed protographs we will construct may not be all instances with the parameters in Theorem \ref{theorem:main} but we provide at least one instance per each set of parameters and also note that $J_G=2$, $L_G=2d_c-(J-2)$, $J_T=J-2$, and $L_T = d_c(J-6)/3$. For given $J$ and $d_c$, the matrices $B$, $T$, $F$, and $G$ can be constructed step by step as follows:
\begin{enumerate}
	\item[1.] For constructing $B$ and $T$ at once, an incidence matrix of a combinatorial block design suitably chosen for each case in Theorem \ref{theorem:main} is modified such that it has the size $J_T \times (L_G + L_T)$, each of $L_G$ columns corresponding to $B$ has a disjoint pair of 1's, the other columns have the weight 3, all rows have the weight $d_c -2$, and any column-wise pair of 1's appears at most once to avoid the second and the third ICI subgraphs of $\mathcal{P}_{12}$ in $[A|B|T]$.
	\item[2.] In $G$, 1's are placed such that $\lfloor L_G / 2 \rfloor$ columns have 1's in the first row and the other columns have 1's in the second row.
	\item[3.] For constructing $F$, 1's are placed such that the union of $A$, $B$, $F$, and $G$ does not contain $\mathcal{P}_{10}$.
\end{enumerate}
Note that the placement of 1's in the third step is guaranteed by the bound $d_c \leq 3(J-2)/4$ in the proof of Theorem \ref{theorem:main}.

Since the conditions in Theorem \ref{theorem:main} are necessary ones for the existence of $T$, a protograph may not exist for some parameter values. Therefore, we will show that there exist protographs for all parameter values given in Theorem \ref{theorem:main} by providing explicit construction methods of $B$ and $T$ using various combinatorial designs as follows.

\vspace{2mm}
\indent \textit{1)} $J \equiv 5 \mod 6$ and $J \geq 11$:

\vspace{2mm}
In this case, we have $d_c = (J+1)/2$, $L_G = 3$, and $L_T=(J+1)(J-6)/6$. We need to construct $[B|T]$ of size $(J-2) \times (J^2 -5J+12)/6$ to avoid repeated column-wise pairs of 1's, i.e., the subgraph $\begin{bmatrix} 1 & 1 \\ 1 & 1 \end{bmatrix}$. For this, the following Steiner system can be used.

\vspace{2mm}
\begin{definition}[\cite{Colbourn1}]
A $t$-$(v,k,\lambda)$ design is a pair $(V,B)$, where $V$ is a $v$-set of points and $B$ is a collection of $k$-subsets (blocks) of $V$ with the property that every $t$-subset of $V$ is contained in exactly $\lambda$ blocks in $B$. A \textit{Steiner system} $S(t,k,v)$ is the $t$-$(v,k,\lambda)$ design with $\lambda=1$.
\end{definition}

\vspace{2mm}
\begin{lemma}[\cite{Colbourn1}]
There exists $S(2,3,v)$ only when $v \equiv 1,3 \mod 6$.
\label{lemma:existence}
\end{lemma}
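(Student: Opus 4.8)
The plan is to prove the stated necessity (``only when'') direction by a double-counting argument, since the assertion is purely about divisibility constraints forced by the defining property of an $S(2,3,v)$. Recall that an $S(2,3,v)$ is a collection $B$ of $3$-subsets (blocks) of a $v$-set $V$ such that every pair of distinct points lies in exactly one block. I would extract two elementary counting identities and then combine the integrality constraints they impose modulo $6$.

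First I would compute the replication number $r$, the number of blocks through a fixed point $x$. Each such block pairs $x$ with exactly two other points, and every one of the $v-1$ pairs containing $x$ occurs in exactly one block, so $2r = v-1$, giving $r = (v-1)/2$. Integrality of $r$ forces $v$ to be odd. Next I would count the total number of blocks $b$: each block contains ${3 \choose 2}=3$ pairs, every pair occurs exactly once, and the total number of pairs is ${v \choose 2}$, so $3b = {v \choose 2} = v(v-1)/2$, giving $b = v(v-1)/6$. Integrality of $b$ forces $6 \mid v(v-1)$.

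Finally I would combine the two constraints. Since $v$ is odd, $2 \mid v(v-1)$ holds automatically, so the condition $6 \mid v(v-1)$ reduces to $3 \mid v(v-1)$, i.e.\ $v \equiv 0$ or $1 \pmod 3$. Intersecting this with ``$v$ odd'' and examining the residues of $v$ modulo $6$ eliminates $v \equiv 0,2,4 \pmod 6$ (even) and $v \equiv 5 \pmod 6$ (odd but $\equiv 2 \pmod 3$), leaving exactly $v \equiv 1,3 \pmod 6$, as claimed. The argument is entirely routine; there is no genuine obstacle, precisely because these are only \emph{necessary} conditions extracted from counting. I note that the converse---that an $S(2,3,v)$ actually exists for every $v \equiv 1,3 \pmod 6$---is the substantially deeper statement requiring explicit cyclic (Bose) and Skolem-type constructions, but it is not needed for the ``only when'' claim proved here.
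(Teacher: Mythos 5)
Your double-counting argument for the necessity direction is correct: fixing a point $x$ and counting pairs through it gives $2r=v-1$, counting all pairs gives $3b=\binom{v}{2}$, and intersecting the two integrality constraints leaves exactly $v \equiv 1,3 \pmod 6$. Note, though, that the paper offers no proof at all of this lemma---it is quoted verbatim from the handbook of Colbourn and Dinitz---so the relevant comparison is not with a proof in the paper but with how the lemma is actually deployed there.

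And there lies the substantive issue. Although the lemma is phrased as ``only when'' (necessity), every single invocation of it in the paper is in the existence direction: the construction for $J \equiv 5 \pmod 6$ begins ``Since $S(2,3,J-2)$ exists by Lemma~\ref{lemma:existence},'' and the cases $J \equiv 2, 3, 1 \pmod 6$ likewise start from the existence of $S(2,3,J-1)$, $S(2,3,J-2)$, or $S(2,3,J)$. In other words, the lemma functions in the paper as Kirkman's theorem---an $S(2,3,v)$ exists \emph{if and only if} $v \equiv 1,3 \pmod 6$---and the half that the protograph constructions lean on is precisely the sufficiency you set aside as ``not needed.'' Your counting argument, being purely a divisibility obstruction, cannot produce a single Steiner triple system; that requires the Bose and Skolem constructions (or an equivalent), which is exactly why the paper cites the handbook instead of proving the lemma. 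So, as a proof of the statement read literally, your proposal is complete and correct; as support for the role the lemma plays in the paper, it covers none of the load-bearing content, and a referee would ask you to either restate the lemma as a biconditional and cite the existence half, or import the constructions.
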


\vspace{2mm}
The number of blocks in $S(2,3,v)$ is $v(v-1)/6$. Since three columns have the weight two and the other columns have the weight three in the $(J-2) \times (J^2 -5J+12)/6$ matrix $[B|T]$, the $(J-2) \times (J-2)(J-3)/6$ incidence matrix of $S(2,3,J-2)$ may be modified to be used as $[B|T]$ by deleting one 1 from each of well-chosen three columns and adding one column of weight three. In order for such a modified matrix to be a valid $[B|T]$, we should check whether three column-wise pairs of 1's in the weight-2 columns are disjoint, all rows have the weight $(J-3)/2$, and any column-wise pair of 1's appears at most once.

Without loss of generality, let $\{v_1,v_2,v_i\}$, $\{v_2,v_3,v_j\}$, and $\{v_1,v_3,v_k\}$, $i \neq j \neq k$, be three blocks of $S(2,3,J-2)$ corresponding to three columns containing a cycle of length 6.
Three disjoint blocks $\{v_2,v_i\}$, $\{v_3,v_j\}$, and $\{v_1,v_k\}$ are obtained by removing $v_1$, $v_2$, and $v_3$ from $\{v_1,v_2,v_i\}$, $\{v_2,v_3,v_j\}$, and $\{v_1,v_3,v_k\}$, respectively.
Inserting a block $\{ v_1,v_2,v_3 \}$ to this modified $S(2,3,J-2)$ still makes every pair appear at most once.
An incidence matrix of $S(2,3,J-2)$ has the row-weight $(J-3)/2$ and the above modification clearly keeps the row-weight unchanged. 
Therefore, we propose a construction method of $[B|T]$ in the case of $J \equiv 5 \mod 6$ and $J \geq 11$ as follows:

\begin{enumerate}
	\item[1.] Permute the columns of an incidence matrix of $S(2,3,J-2)$ so that the first three columns contain a cycle of length 6.
	\item[2.] Delete a 1 on the cycle of length 6 from each of the first three columns so that the resulting three column-wise pairs of 1's are disjoint.
	\item[3.] Insert one column of weight three where three 1's are located in the rows passed through by the above cycle of length 6.
\end{enumerate}

Actually, it is easy to choose three columns which contain a cycle of length 6 because an incidence matrix of $S(2,3,J-2)$ has many cycles of length 6. The following lemma shows how many cycles of length 6 exist in an incidence matrix of $S(2,3,J-2)$.

\vspace{2mm}
\begin{lemma}
An incidence matrix of $S(2,3,J-2)$ has $(J-2)(J-3)(J-5)/6$ cycles of length 6.
\end{lemma}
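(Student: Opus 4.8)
The plan is to establish a bijection between the cycles of length $6$ in the incidence matrix of $S(2,3,J-2)$ and the $3$-subsets of points that are \emph{not} blocks of the design, and then simply to count the latter. Throughout I write $v=J-2$ for the number of points, so that the rows of the incidence matrix are indexed by the $v$ points and its columns by the blocks, and its bipartite graph is the point--block incidence graph.

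First I would translate a $6$-cycle into the language of the design. Since the incidence graph is bipartite, a cycle of length $6$ alternates three distinct points $p_1,p_2,p_3$ and three distinct blocks $B_1,B_2,B_3$ in the pattern $p_1\,B_1\,p_2\,B_2\,p_3\,B_3\,p_1$. Each block in the cycle must contain its two cyclic neighbours, so $\{p_1,p_2\}\subseteq B_1$, $\{p_2,p_3\}\subseteq B_2$, and $\{p_3,p_1\}\subseteq B_3$; hence the three blocks cover exactly the three pairs of the triple $\{p_1,p_2,p_3\}$.

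Next I would invoke the defining property $\lambda=1$ of a Steiner triple system from Lemma~\ref{lemma:existence}: every pair of distinct points lies in a unique block. Once the triple $\{p_1,p_2,p_3\}$ is fixed, the blocks $B_1,B_2,B_3$ are therefore forced, being the unique blocks through the three pairs. The delicate point---which I expect to be the only real obstacle---is deciding when these three blocks are distinct. If, say, $B_1=B_2$, then this block contains $p_1,p_2,p_3$, and since every block has size $3$ it must equal $\{p_1,p_2,p_3\}$, forcing the triple itself to be a block; conversely, if $\{p_1,p_2,p_3\}$ is a block, then $B_1=B_2=B_3$ and no genuine $6$-cycle arises. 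Thus the three blocks are distinct exactly when $\{p_1,p_2,p_3\}$ is not a block, and in that case the triple determines one and only one $6$-cycle, while every $6$-cycle conversely yields such a non-block triple. This gives the desired bijection.

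Finally I would count. The number of $3$-subsets of the $v$ points is ${J-2 \choose 3}$, and the number of those that are blocks equals the number of blocks of $S(2,3,v)$, namely $v(v-1)/6=(J-2)(J-3)/6$ as already noted in the text. Subtracting the two yields
\[
{J-2 \choose 3}-\frac{(J-2)(J-3)}{6}=\frac{(J-2)(J-3)(J-4)}{6}-\frac{(J-2)(J-3)}{6}=\frac{(J-2)(J-3)(J-5)}{6},
\]
which is precisely the claimed number of cycles of length $6$. The counting step is then a routine simplification, so the whole argument hinges on the distinctness characterization established above.
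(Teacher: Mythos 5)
Your proof is correct and takes essentially the same approach as the paper: both identify each $6$-cycle with a triple of points whose three pairs lie in three distinct blocks (forced by $\lambda=1$), and obtain the count as ${J-2 \choose 3}$ minus the number of blocks $(J-2)(J-3)/6$. Your handling of the distinctness of the three blocks is in fact slightly more explicit than the paper's, but the underlying argument is identical.
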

\begin{proof}
Consider three points $v_1,v_2,v_3 \in V$ of $S(2,3,J-2)$. Three pairs $\{ v_1, v_2 \}, \{ v_2, v_3 \}, \{ v_3, v_1 \}$ appear in $S(2,3,J-2)$ in either of two ways: (i) one block has all the three pairs, that is, consists of $v_1,v_2,v_3$; or (ii) each pair is contained in a block which does not have the other two pairs, that is, there are three blocks $\{ v_1, v_2, v_i \}, \{ v_2, v_3, v_j \}, \{ v_3, v_1, v_k \}$, where $v_i,v_j,v_k \in V$ and $i \neq j \neq k$. Three pairs in the case (ii) form a cycle of length 6 in the incidence matrix of $S(2,3,J-2)$. Hence the number of cycles of length 6 in the incidence matrix can be enumerated by substracting the number of all blocks from the number of the ways of choosing three points in $V$. This yields ${J-2 \choose 3} - (J-2)(J-3)/6 = (J-2)(J-3)(J-5)/6$.
\end{proof}

\begin{figure}[tb]
	\centering
	\subfigure[An incidence matrix of $S(2,3,9)$.]{\includegraphics[scale=0.8]{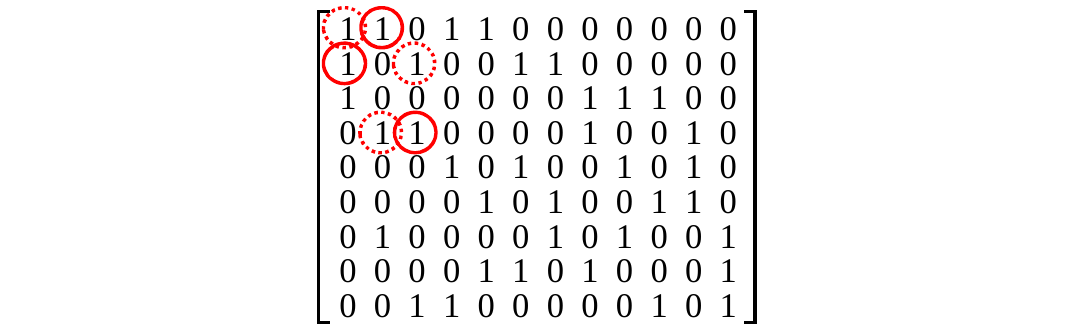}%
	\label{fig:11_22_const}}\\
	\subfigure[An $11 \times 22$ regular protograph with $d_v=3$ and $n_2=9$.]{\includegraphics[scale=0.8]{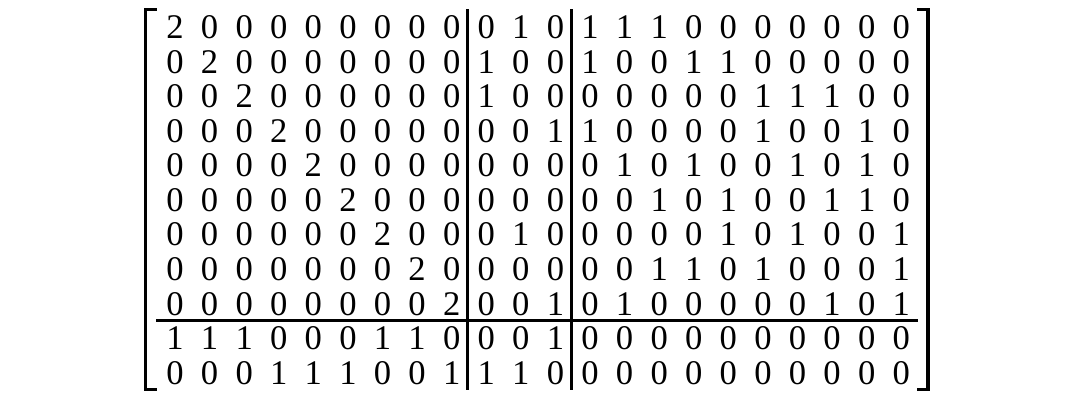}%
	\label{fig:11_22_proto}}
	\caption{The construction of an $11 \times 22$ regular protograph with $d_v=3$ and $n_2=9$.}
	\label{fig:11_22}
\end{figure}

\vspace{2mm}
\begin{example}

Fig. \ref{fig:11_22} illustrates the construction of an $11 \times 22$ protograph with $d_v=3$ and $n_2=9$. A cycle of length 6 is denoted by the circles in the incidence matrix of $S(2,3,9)$ which has been already column-wisely permuted in Fig. \ref{fig:11_22_const}. To obtain $[B|T]$, three 1's marked by dotted circles are deleted and the column with 1's in the first, the second, and the fourth rows is inserted as the first column of $T$. Let $v_i,~i=1,\ldots,9$, denote the points of $S(2,3,9)$, which also denotes the $i$-th row of $[B|T]$. We can see that $[B|T]$ does not have three pairs of 1's $\{v_1,v_3\},\{v_4,v_7\},\{v_2,v_9\}$ in any column and three pairs $\{v_2,v_3\},\{v_1,v_7\},\{v_4,v_9\}$ in $B$ are disjoint. The resulting $11 \times 22$ protograph with $d_v=3$ and $n_2=9$ is shown in Fig. \ref{fig:11_22_proto} and we can check that $\mathcal{P}_{2i}$ with $i \leq 6$ does not appear in this protograph.

\end{example}

\vspace{2mm}
\indent \textit{2)} $J \equiv 2 \mod 6$ and $J \geq 14$:

\vspace{2mm}
In this case, we have $d_c = (J-2)/2$, $L_G = 0$, and $L_T=(J-2)(J-6)/6$. Since $B$ and $G$ do not appear in the protograph, $T$ should be designed to avoid repeated column-wise pairs, where $T$ has constant row-weight $(J-6)/2$ and column-weight 3. A configuration whose incidence matrix has the column-weight 3 and the size $(J-2) \times (J-2)(J-6)/6$ can be used for $T$.

\vspace{2mm}
\begin{definition}[\cite{Colbourn1}]
A \textit{configuration} $(v_r,b_k)$ is an incidence structure of $v$ points and $b$ blocks such that (i) each block contains $k$ points, (ii) each point lies on $r$ blocks, and (iii) two different points are contained in at most one block. If $v=b$ and hence $r=k$, the configuration is called \textit{symmetric} and denoted by $v_k$.
\end{definition}

\vspace{2mm}
It is important to check the existence of the configuration with the required parameters. The following theorem shows that such configuration always exists and therefore $T$ can be constructed.

\vspace{2mm}
\begin{theorem}
There exists a configuration $(v_r,b_k)$ with $v=J-2$, $b= (J-2)(J-6)/6$, $k=3$, and $r=(J-6)/2$ for all $J \equiv 2 \mod 6$ and $J \geq 14$.
\label{thm:existence}
\end{theorem}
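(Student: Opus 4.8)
The plan is to recognize the required object as an $r$-regular \emph{linear} $3$-uniform hypergraph and to build it by deleting a single parallel class from a resolvable packing. Writing $v=J-2$, the hypotheses $J\equiv 2 \bmod 6$ and $J\ge 14$ give $v\equiv 0 \bmod 6$ and $v\ge 12$; I set $v=6t$ with $t\ge 2$. A configuration $(v_r,b_k)$ with $k=3$ is precisely a collection of triples on $v$ points in which every point lies on $r$ triples and every pair of points lies on at most one triple, i.e.\ an edge-disjoint packing of the triangles of $K_v$ that is $r$-regular at every vertex. The target parameters $r=(J-6)/2=3t-2$ and $b=(J-2)(J-6)/6=2t(3t-2)$ satisfy the two arithmetic identities $vr=3b$ and $3b=v(v-4)/2\le \binom{v}{2}$, so the regularity and the $\lambda\le 1$ requirement are at least numerically compatible. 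Note that, in contrast to case~1), no Steiner system $S(2,3,v)$ is available here (by Lemma~\ref{lemma:existence}, since $v\equiv 0 \bmod 6$), so a genuine packing must be produced.

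For the generic range $v\ge 18$ (that is, $J\ge 20$, $t\ge 3$) I would invoke the existence of a \emph{nearly Kirkman triple system} $\mathrm{NKTS}(v)$ --- equivalently, a resolvable $\{3\}$-group-divisible design of group type $2^{v/2}$ --- which exists for every $v\equiv 0 \bmod 6$ with $v\ge 18$ (see, e.g., \cite{Colbourn1}). Its blocks form a linear triple system in which every point lies on exactly $(v-2)/2=3t-1$ blocks, and they resolve into $3t-1$ parallel classes, each a partition of the $v$ points into $v/3=2t$ disjoint triples. The construction is then to delete the $2t$ triples of one parallel class. A routine count shows the survivors number $2t(3t-1)-2t=2t(3t-2)=b$, every point now lies on $3t-2=r$ triples, each block still has size $3$, and no pair is covered more than once; this is exactly the desired configuration.

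The only order left uncovered is $v=12$ (the case $J=14$, $t=2$), where $\mathrm{NKTS}(12)$ does not exist and the target is a $(12_4,16_3)$ configuration. Here I would exhibit it directly as the triangle decomposition of the complete tripartite graph $K_{4,4,4}$: split the $12$ points into three groups of four and, reading off a Latin square of order $4$, take the $16$ triples containing one point from each group matched by the entries of the square. Each point then lies on $4$ triples, there are $16$ triples in all, and two points share a triple only when they lie in different groups and are matched by a single cell, so every pair is covered at most once, giving the required $(12_4,16_3)$ configuration.

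The main obstacle is securing the parallel class that is removed: a mere $\{3\}$-GDD of type $2^{v/2}$ (a maximum packing whose leave is a perfect matching) always exists for $v\equiv 0 \bmod 6$, but it need not contain a parallel class, so it is the stronger property of \emph{resolvability} --- the full strength of $\mathrm{NKTS}(v)$ --- that the argument relies on. This same point is responsible for the single genuine exception $v=12$, which is why that order must be dispatched by the separate $K_{4,4,4}$ construction; all remaining steps are routine double counting of incidences and of pairs.
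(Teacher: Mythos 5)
Your proof is correct, but it takes a genuinely different route from the paper's. The paper's proof of Theorem~\ref{thm:existence} is essentially a two-line appeal to a black-box result: it verifies Gropp's necessary conditions --- (i) $v \leq b$ and $k \leq r$, (ii) $vr=bk$, (iii) $v \geq r(k-1)+1$ --- for the stated parameters and then cites Theorem~3.1 of \cite{Gropp1}, which asserts that for $k=3$ these conditions are also \emph{sufficient}. You instead construct the configuration: for $v\ge 18$ by deleting one parallel class from a nearly Kirkman triple system $\mathrm{NKTS}(v)$ (whose existence for all $v\equiv 0 \bmod 6$, $v\ge 18$ is the deep ingredient in your argument, playing the role that Gropp's theorem plays in the paper's), and for the one exceptional order $v=12$ by exhibiting the $(12_4,16_3)$ configuration directly as the Latin-square triangle decomposition of $K_{4,4,4}$. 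Your counts check out (blocks $2t(3t-1)-2t=2t(3t-2)=b$, replication $3t-1-1=3t-2=r$, linearity inherited from the GDD), and your insistence on resolvability rather than a bare $\{3\}$-GDD of type $2^{v/2}$, together with the special treatment of $v=12$ forced by the nonexistence of $\mathrm{NKTS}(12)$, is exactly the right care. As for what each approach buys: the paper's is shorter and uniform in $J$; yours is constructive and actually delivers more, since after removing one parallel class the surviving blocks still form $3t-2$ parallel classes, i.e., the configuration is resolvable --- precisely the extra structure the paper needs later anyway when it builds $T$ concretely (there done differently, by deleting a point of $S(2,3,J-1)$ and then a parallel class, with an ad hoc fix at $J=14$ because no resolvable $(12_5,20_3)$ exists). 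So your argument could serve simultaneously as the existence proof and as the explicit construction, at the price of invoking the $\mathrm{NKTS}$ spectrum theorem (available in \cite{Colbourn1}) in place of Gropp's theorem.
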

\begin{proof}
Necessary conditions for the existence of $(v_r,b_k)$ configuration \cite{Gropp1} are given as (i) $v \leq b$ and $k \leq r$, (ii) $vr=bk$, and (iii) $v \geq r(k-1)+1$. We can easily check that the parameters in the theorem statement satisfy these conditions. Finally, the existence of such configurations is guaranteed by Theorem 3.1 in \cite{Gropp1}, that is, there exists a configuration with $k=3$ if and only if the necessary conditions hold.
\end{proof}

\vspace{2mm}
Now, a construction method of $T$ is proposed based on the results in \cite{Gropp1}, which uses configurations with parallel classes and resolvable configurations.

\vspace{2mm}
\begin{definition}[\cite{Colbourn1}]
A \textit{parallel class} in a design is a set of blocks that partition the point set. A \textit{resolvable design} is a design whose blocks can be partitioned into parallel classes.
\end{definition}

\vspace{2mm}
For $J \equiv 2 \mod 6$, $S(2,3,J-1)$ exists by Lemma \ref{lemma:existence}. For $J \geq 20$, an incidence matrix of a resolvable configuration $(v_r,b_k)$ with $v=J-2$, $b= (J-2)(J-4)/6$, $k=3$, and $r=(J-4)/2$ can be constructed by removing a row and its incident columns in an incidence matrix of $S(2,3,J-1)$ \cite{Gropp1}. For $J=14$, there is no resolvable configuration $(12_5,20_3)$ but we can find a configuration $(12_5,20_3)$ in the same manner as illustrated in Fig. \ref{fig:14_28_const2}, which contains some parallel classes from $S(2,3,13)$ \cite{Gropp1}. Since a parallel class of a configuration $(v_r,b_k)$ with $v=J-2$, $b= (J-2)(J-4)/6$, $k=3$, and $r=(J-4)/2$ consists of $(J-2)/3$ blocks and has all points exactly once, we obtain $T$ by removing one parallel class from the incidence matrices of these configurations. The construction procedure of $T$ for $J \equiv 2 \mod 6$ and $J \geq 14$ is summarized as:
			
	\begin{enumerate}
		\item[1.] Construct $S(2,3,J-1)$.
		\item[2.] Make an incidence matrix of a resolvable configuration $(v_r,b_k)$ with $v=J-2$, $b= (J-2)(J-4)/6$, $k=3$, and $r=(J-4)/2$ by removing a row and its incident columns in an incidence matrix of $S(2,3,J-1)$.
		\item[3.] Remove one parallel class which consists of $(J-2)/3$ columns to obtain $T$.
	\end{enumerate}

\begin{figure}[tb]
	\centering
	\subfigure[An incidence matrix of $S(2,3,13)$.]{\includegraphics[scale=0.8]{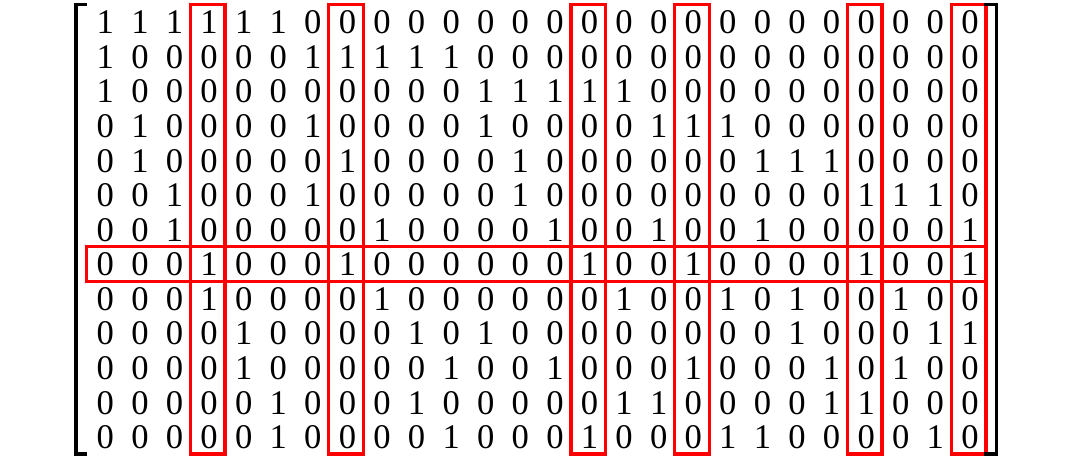}%
	\label{fig:14_28_const1}}\\
	\subfigure[An incidence matrix of a configuration $(12_5,20_3)$.]{\includegraphics[scale=0.8]{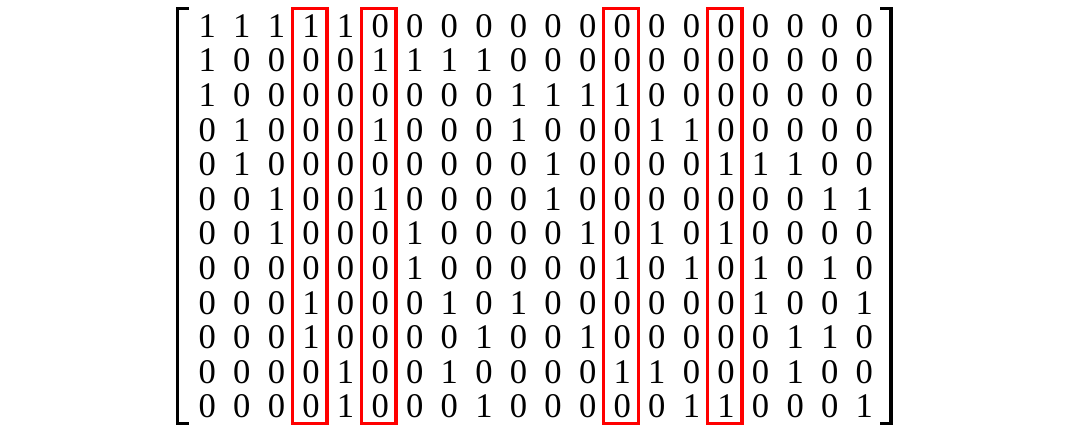}%
	\label{fig:14_28_const2}}\\
	\subfigure[A $14 \times 28$ regular protograph with $d_v=3$ and $n_2=12$.]{\includegraphics[scale=0.8]{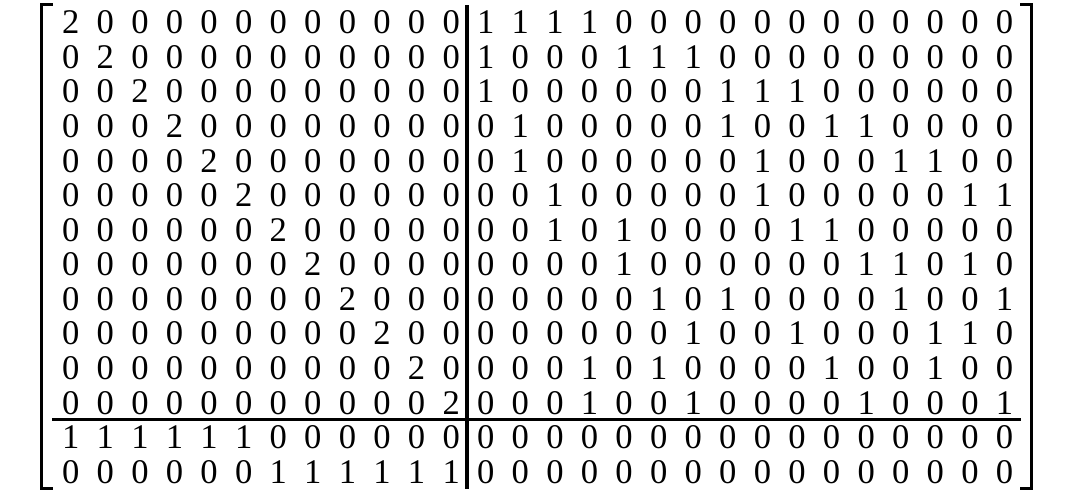}%
	\label{fig:14_28_proto}}
	\caption{The construction of a $14 \times 28$ regular protograph with $d_v=3$ and $n_2=12$.}
	\label{fig:14_28}
\end{figure}

\vspace{2mm}
\begin{example}
An incidence matrix of $S(2,3,13)$ is shown in Fig. \ref{fig:14_28_const1}. An incidence matrix of a configuration $(12_5,20_3)$ in Fig. \ref{fig:14_28_const2} is constructed by removing the eighth row and its incident columns in the incidence matrix of $S(2,3,13)$ in Fig. \ref{fig:14_28_const1}. We see that the fourth, the sixth, the thirteenth, and the sixteenth columns form a parallel class. By removing these columns, an incidence matrix of a configuration $(12_4,16_3)$ is constructed, which is used as $T$. The resulting $14 \times 28$ protograph with $d_v=3$ and $n_2=12$ is shown in Fig. \ref{fig:14_28_proto}.
\end{example}

\vspace{2mm}				
\indent \textit{3)} $J \equiv 3 \mod 6$ and $J \geq 9$:

\vspace{1mm}
\quad \textit{3.1)} $d_c = (J-1)/2$ except for $J = 9$;

\vspace{2mm}		
In this case, we have $L_G = 1$ and $L_T=(J-1)(J-6)/6$, and thus $B$ should have only one pair of 1's. Since $S(2,3,J-2)$ exists by Lemma \ref{lemma:existence}, $[B|T]$ may be constructed by removing $J/3-1$ columns from a $(J-2) \times (J-2)(J-3)/6$ incidence matrix of $S(2,3,J-2)$ and then deleting a 1 in other column. To achieve the desired row-weight $(J-5)/2$ of $[B|T]$, there should exist a submatrix consisting of $J/3$ columns satisfying that (i) two rows have weight 2 and the others have weight 1 and (ii) one column has 1 at each of these two rows of weight 2.

As seen in the case of $J \equiv 2 \mod 6$ and $J \geq 14$, for $J \equiv 3 \mod 6$ and $J \geq 15$, $S(2,3,J-2)$ contains as a substructure a configuration $(v_r,b_k)$ with $v=J-3$, $b= (J-3)(J-5)/6$, $k=3$, and $r=(J-5)/2$ which has at least one parallel class consisting of $J/3-1$ blocks. This implies that there are $J/3-1$ blocks in $S(2,3,J-2)$, which partition all but one point. Also, there always exists another block containing that point in $S(2,3,J-2)$. These $J/3$ blocks satisfy the above requirements (i) and (ii) for $[B|T]$. The construction procedure of $[B|T]$ for $J \equiv 3 \mod 6$, $J \geq 15$, and $d_c = (J-1)/2$ is summarized as:

\begin{figure}[tb]
	\centering
	\subfigure[An incidence matrix of $S(2,3,13)$.]{\includegraphics[scale=0.8]{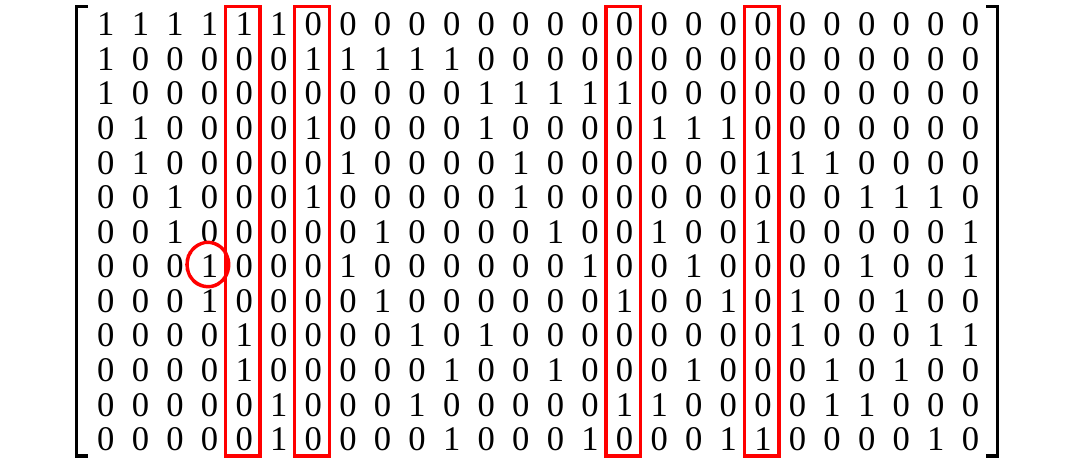}%
	\label{fig:15_35_const}}\\
	\subfigure[A $15 \times 35$ regular protograph with $d_v=3$ and $n_2=13$.]{\includegraphics[width=3.5in]{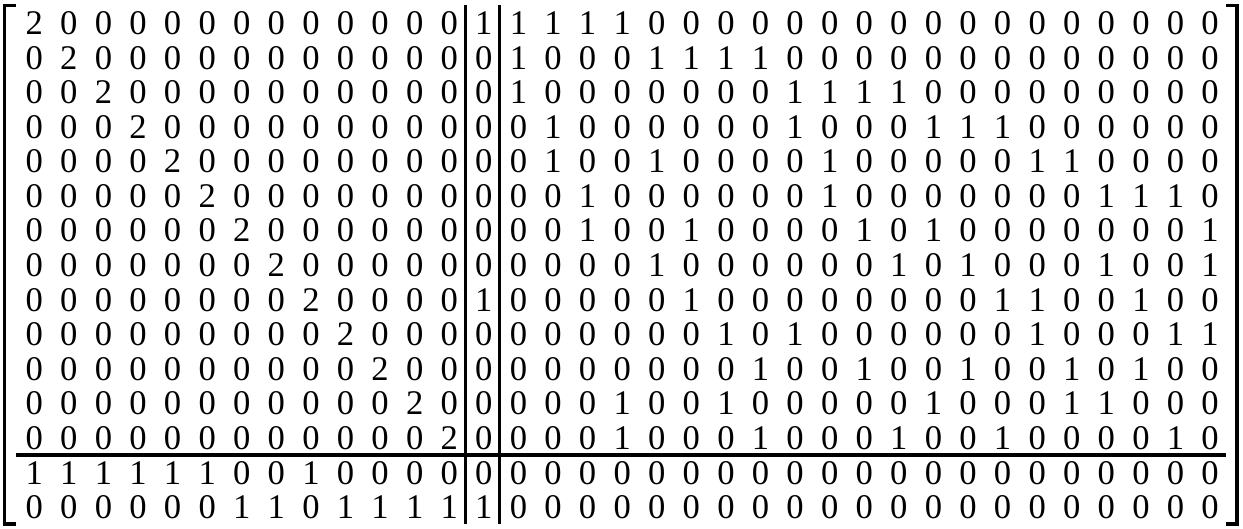}%
	\label{fig:15_35_proto}}
	\caption{The construction of a $15 \times 35$ regular protograph with $d_v=3$ and $n_2=13$.}
	\label{fig:15_35}
\end{figure}			
									
	\begin{enumerate}
		\item[1.] Construct $S(2,3,J-2)$.
		\item[2.] Select one row in an incidence matrix of $S(2,3,J-2)$ such that if the row and its incident columns are removed from the incidence matrix, the remaining part forms an incidence matrix of a configuration $(v_r,b_k)$ with $v=J-3$, $b= (J-3)(J-5)/6$, $k=3$, and $r=(J-5)/2$ including at least one parallel class.
		\item[3.] Find $J/3-1$ columns which form one parallel class in the above configuration.
		\item[4.] In the incidence matrix of $S(2,3,J-2)$, delete a 1 in the selected row in Step 2 and remove the $J/3-1$ columns obtained in Step 3.
		\item[5.] Move the column which had the deleted 1 in Step 4 to the leftmost to obtain $[B|T]$.
	\end{enumerate}

Note that the above construction method cannot be applied to the case of $J=9$ and $d_c=4$ because the configuration $(6_2,4_3)$ does not have any parallel class. The case of $J=9$ and $d_c=4$ will be covered in the last part of this subsection.

\vspace{2mm}
\begin{example}

Fig. \ref{fig:15_35} illustrates the construction of a $15 \times 35$ regular protograph with $d_v=3$ and $n_2=13$. In an incidence matrix of $S(2,3,13)$ in Fig. \ref{fig:15_35_const}, the fifth, the seventh, the sixteenth, and the twentieth columns partition the set of row indices except for the index of the eighth row and the fourth column has a 1 in the eighth row. Thus, the 1 in the fourth column and the eighth row is deleted and the four boxed columns are removed from the incidence matrix. Then the resulting column of weight 2 is moved to the leftmost and a $15 \times 35$ protograph with $d_v=3$ and $n_2=13$ is shown in Fig. \ref{fig:15_35_proto}. 

\end{example}

\vspace{2mm}
\quad \textit{3.2)} $d_c = (J+1)/2$;

\vspace{2mm}
In this case, we have $L_G = 3$ and $L_T=(J+1)(J-6)/6$, and there are three column-wise pairs of 1's in $B$. Since $S(2,3,J-2)$ exists for $J \equiv 3 \mod 6$ and $J \geq 9$ by Lemma \ref{lemma:existence}, the construction method for $J \equiv 5 \mod 6$ and $J \geq 11$ can also be applied to this case in the same way. As an example, a $9 \times 15$ regular protograph with $d_v=3$ and $n_2=7$ is shown in Fig. \ref{fig:9_15}.

\begin{figure}[tb]
	\centering
	\includegraphics[scale=0.8]{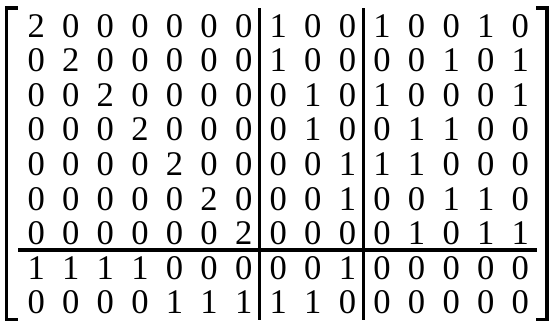}
	\caption{A $9 \times 15$ regular protograph with $d_v=3$ and $n_2=7$.}
	\label{fig:9_15}
\end{figure}

%
%

\vspace{2mm}				
\indent \textit{4)} $J \equiv 1 \mod 6$ and $J \geq 13$:

\vspace{2mm}
In this case, we have $d_c = (J-1)/2$, $L_G = 1$, and $L_T=(J-1)(J-6)/6$.
To construct $[B|T]$, start with $S(2,3,J)$ which always exists by Lemma \ref{lemma:existence}. Similar to the case of $J \equiv 2 \mod 6$ and $J \geq 14$, a configuration $(v_r,b_k)$ with $v=J-1$, $b=(J-1)(J-3)/6$, $k=3$, and $r=(J-3)/2$ can be constructed by removing a row and its incident columns in an incidence matrix of $S(2,3,J)$. Since any $(J-3)/2$ blocks sharing a common point partition all points except the common point and another point in the configuration, by removing any row and its incident columns in an incidence matrix of the configuration, a $(J-2) \times (J-3)(J-4)/6$ matrix with $J-3$ rows of weight $(J-5)/2$ and a row of weight $(J-3)/2$ is obtained. Removing a 1 in the row of weight $(J-3)/2$ results in a matrix which has the desired row-weight $(J-5)/2$ and exactly one column of weight 2.  Clearly, this matrix can be used as $[B|T]$. The construction procedure of $[B|T]$ for $J \equiv 1 \mod 6$ and $J \geq 13$ is summarized as:

	\begin{enumerate}
		\item[1.] Construct a configuration $(v_r,b_k)$ with $v=J-1$, $b=(J-1)(J-3)/6$, $k=3$, and $r=(J-3)/2$ from $S(2,3,J)$ similar to the case of $J \equiv 2 \mod 6$ and $J \geq 14$.
		\item[2.] Obtain a matrix of size $(J-2) \times (J-3)(J-4)/6$ by removing a row and its incident columns.
		\item[3.] Delete a 1 in the row of weight $(J-3)/2$ and move the column having the deleted 1 to the leftmost to obtain $[B|T]$.
	\end{enumerate}

\begin{figure}[tb]
	\centering
	\subfigure[An $11 \times 15$ modified matrix.]{\includegraphics[scale=0.8]{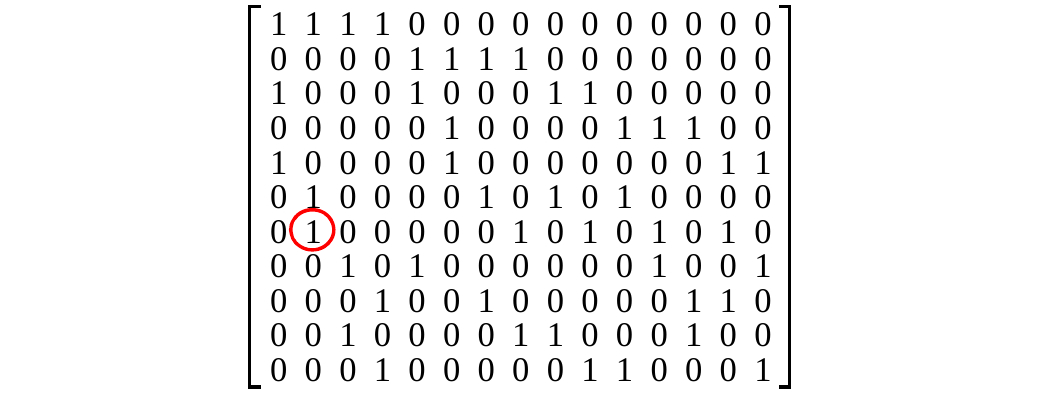}%
	\label{fig:13_26_const1}}\\
	\subfigure[A $13 \times 26$ regular protograph with $d_v=3$ and $n_2=11$.]{\includegraphics[scale=0.8]{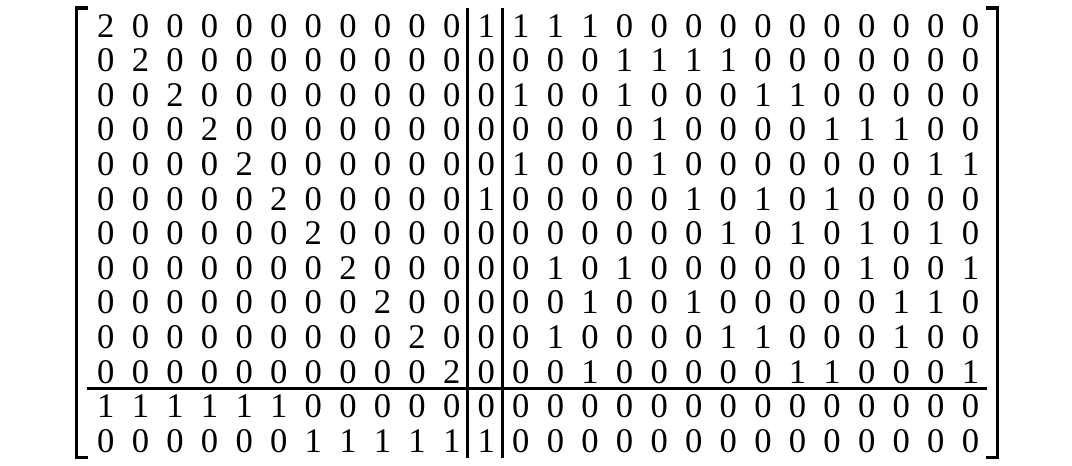}%
	\label{fig:13_26_proto}}
	\caption{The construction of a $13 \times 26$ regular protograph with $d_v=3$ and $n_2=11$.}
	\label{fig:13_26}
\end{figure}

\vspace{1mm}			
\begin{example}
An incidence matrix of a configuration $(12_5,20_3)$ is given in Fig. \ref{fig:14_28_const2}. By removing the first row and its incident columns, we obtain an $11 \times 15$ matrix shown in Fig. \ref{fig:13_26_const1}, where the seventh row has the weight 5 and the others have the weight 4. Then the 1 in the seventh row is deleted from the second column and the second column is moved to the leftmost. The resulting $13 \times 26$ protograph with $d_v=3$ and $n_2=11$ is shown in Fig. \ref{fig:13_26_proto}.
\end{example}

\vspace{2mm}		
\indent \textit{5)} $J \equiv 0 \mod 6$ and $J \geq 12$:

\vspace{1mm}
\quad \textit{5.1)} $d_c=(J-2)/2$;

\vspace{2mm}
In this case, we have $L_G = 0$ and $L_T=(J-2)(J-6)/6$. Similar to the case of $J \equiv 2 \mod 6$ and $J \geq 14$, an incidence matrix of a configuration $(v_r,b_k)$ with $v=J-2$, $b=(J-2)(J-6)/6$, $k=3$, and $r=(J-6)/2$ can be used as $T$. Such configuration can be constructed by using difference triangle set (DTS).

\vspace{2mm}
\begin{definition}[\cite{Colbourn1}]
An $(n,m)$-\textit{difference triangle set}, or $(n,m)$-DTS, is a set $\mathcal{U} = \{ U_1, \ldots, U_n \}$, where for $1 \leq i \leq n$, $U_i= \{ a_{i0}, a_{i1}, \ldots, a_{im} \}$ with $a_{ij}$ an integer satisfying $0=a_{i0} < a_{i1} < \cdots < a_{im}$, and the differences $a_{il} - a_{ij}$ over the integers for all $i$, $j$, $l$,  $1 \leq i \leq n$, $0 \leq l \neq j \leq m$, are all distinct and nonzero.
\end{definition}

\vspace{2mm}
\begin{theorem}[\cite{Gropp1}]
If there is an $(n,2)$-DTS, a configuration $(v_r,b_k)$ for $v\geq 6n+3$, $b=nv$, $k=3$, and $r=nk$ can be constructed from this DTS.
\label{theorem:DTS}
\end{theorem}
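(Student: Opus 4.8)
The plan is to realize the desired configuration cyclically over the additive group $\mathbb{Z}_v$, using each member $U_i$ of the given $(n,2)$-DTS as a base block and developing it by all translations. Concretely, I would take the point set to be $\mathbb{Z}_v$ and, for each $i$ with $U_i=\{0,a_{i1},a_{i2}\}$, form the $v$ blocks $B_{i,g}=\{g,\;a_{i1}+g,\;a_{i2}+g\}$ reduced modulo $v$, for $g\in\mathbb{Z}_v$. This immediately yields $nv$ blocks, each of cardinality $3$, so the parameters $b=nv$ and $k=3$ hold by construction; it then remains to verify the point-regularity $r=nk=3n$ and the defining incidence property (iii) of a configuration, namely that two distinct points lie in at most one common block.

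For the regularity I would fix a point $p\in\mathbb{Z}_v$ and count, for each $i$, the translations $g$ with $p\in B_{i,g}$: these are exactly $g\in\{p,\;p-a_{i1},\;p-a_{i2}\}$. Since $0,a_{i1},a_{i2}$ reduce to three \emph{distinct} residues modulo $v$ (their pairwise differences are nonzero by the DTS property and the range of $v$ keeps them apart), the point $p$ lies in exactly three blocks per base block and hence in $r=3n$ blocks overall. As a built-in consistency check this reproduces the relation $vr=bk$ already recorded among the necessary conditions in the proof of Theorem \ref{thm:existence}.

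The heart of the argument, and the step I expect to be the main obstacle, is condition (iii). A pair $\{x,y\}$ with $x-y\equiv d \pmod v$ can occur in a block $B_{i,g}$ only when $d$ is one of the six signed pairwise differences $\pm a_{i1},\ \pm a_{i2},\ \pm(a_{i2}-a_{i1})$ of $U_i$, and for each such representation the translation $g$ is uniquely determined. Thus the number of blocks containing $\{x,y\}$ equals the number of representations of $d$ among the $6n$ signed differences of the whole DTS. The defining property of the $(n,2)$-DTS guarantees that the $3n$ positive differences are pairwise distinct as integers, so I would reduce the problem to showing that the full collection of $6n$ signed differences stays distinct and nonzero \emph{modulo} $v$. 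This is precisely where the bound $v\geq 6n+3$ must do its work: it is what rules out a positive difference coinciding with $0$ modulo $v$, two positive differences summing to a multiple of $v$ (so that a positive difference equals the negative of another), and a self-paired difference $d\equiv -d$ of the type $d=v/2$. Once these wraparound collisions are excluded, every nonzero $d$ is represented at most once, so each pair of points lies in at most one block and (iii) follows.

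Finally I would confirm that the structure meets the remaining necessary conditions $v\leq b$, $k\leq r$, and $v\geq r(k-1)+1=6n+1$ cited from \cite{Gropp1} in the proof of Theorem \ref{thm:existence}; all of these are implied by $v\geq 6n+3$ together with the counts above, so the cyclically developed design is a valid $(v_r,b_k)$ configuration with the stated parameters. The only delicate point throughout is the modular distinctness of the differences: the DTS supplies integer-distinctness for free, and the clean threshold $6n+3$ is what converts this into distinctness in $\mathbb{Z}_v$ by controlling the difference sizes relative to $v$ and excluding short orbits, exactly as established in \cite{Gropp1}.
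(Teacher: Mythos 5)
Your overall plan --- develop each DTS triple cyclically over $\mathbb{Z}_v$ and verify regularity plus the ``two points in at most one block'' condition via distinctness of the signed differences --- is exactly the construction the paper records right after Theorem \ref{theorem:DTS} (a length-$v$ column $C_i$ with 1's at positions $a_{i0}+1,a_{i1}+1,a_{i2}+1$, all $v$ cyclic shifts of it, concatenated over $i$), so the skeleton matches the paper's (and Gropp's) procedure, and your count $r=3n$ is fine.

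The gap is in the step you yourself identify as the heart of the argument: you assert that the hypothesis $v \geq 6n+3$ is what rules out the wraparound collisions among the $6n$ signed differences modulo $v$. It cannot do that. The definition of an $(n,2)$-DTS constrains only the \emph{number} of differences and their distinctness \emph{over the integers}; it places no upper bound on their magnitudes, so ``controlling the difference sizes relative to $v$'' is not something a lower bound on $v$ can accomplish. Concretely, $U_1=\{0,1,100\}$ is a perfectly valid $(1,2)$-DTS, yet developing it modulo $v=9 \geq 6\cdot 1+3$ collapses the base block (since $100 \equiv 1 \pmod 9$) and yields no configuration, even though every hypothesis as you read it is satisfied. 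What your argument actually needs is the extra requirement that the scope $s$ of the DTS (its largest entry, which is also its largest difference, as $a_{i0}=0$) satisfies $2s<v$: then all signed differences lie strictly between $-v/2$ and $v/2$, integer distinctness transfers to distinctness in $\mathbb{Z}_v$, and your collision analysis goes through verbatim. At the boundary $v=6n+3$ this means running the development on a DTS of scope $s \leq 3n+1$. Such compact DTSs exist for every $n$ (Skolem/hooked-Skolem-type constructions; they are what the optimal tables in \cite{web}, which the paper actually uses when building $T$, supply), and that is the reading under which Gropp's theorem and the paper's procedure are correct: the construction must be applied to an optimal (scope at most $3n+1$) DTS, not to an arbitrary $(n,2)$-DTS handed to you. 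Adding that existence statement, or restating the theorem with the hypothesis $v \geq 2s+1$, closes the gap.
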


\vspace{2mm}
For $J \equiv 0 \mod 6$ and $J \geq 12$, a configuration $(v_r,b_k)$ with $v=J-2$, $b=(J-2)(J-6)/6$, $k=3$, and $r=(J-6)/2$ can be constructed from $((J-6)/6,2)$-DTS by Theorem \ref{theorem:DTS}. According to \cite{Gropp1}, the construction procedure of $T$ is provided as:

\begin{enumerate}
	\item[1.] Construct a $((J-6)/6,2)$-DTS with $U_i = \{ a_{i0}, a_{i1}, a_{i2}\}$, $i=1, \ldots,(J-6)/6$.
	\item[2.] For each $U_i$, $i=1, \ldots, (J-6)/6$, construct a column of length $J-2$ denoted by $C_i$, which has 1 at the $(a_{i0}+1)$-st, the $(a_{i1}+1)$-st, and the $(a_{i2}+1)$-st rows and 0 at other rows.
	\item[3.] For each $i$, construct a $(J-2) \times (J-2)$ matrix whose $j$-th column, $j=1,\ldots,J-2$, is obtained by cyclically shifting $C_i$ downward $j-1$ times.
	\item[4.] Concatenate $(J-6)/6$ matrices in Step 3 to obtain $T$.
\end{enumerate}

Note that we can easily construct a $((J-6)/6,2)$-DTS from the lists of DTS in \cite{web} for $J \equiv 0 \mod 6$ and $J \geq 12$. Fig. \ref{fig:12_20} shows a $12 \times 20$ regular protograph with $d_v=3$ and $n_2=10$ constructed from $(1,2)$-DTS. 

\begin{figure}[tb]
    \centering
	\includegraphics[scale=0.8]{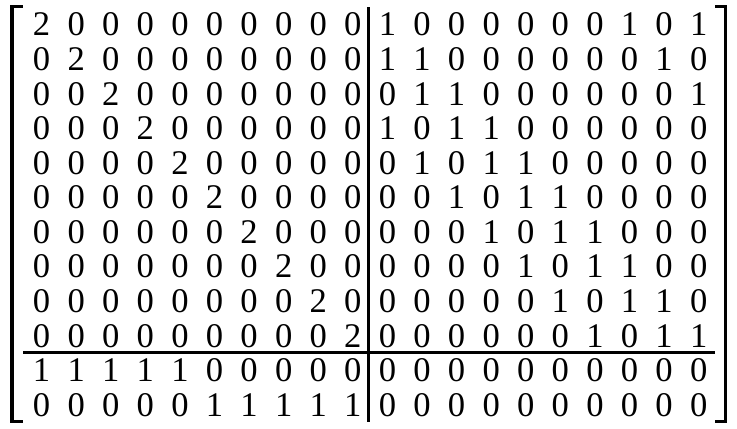}%
    \caption{A $12 \times 20$ regular protograph with $d_v=3$ and $n_2=10$.}
    \label{fig:12_20}
\end{figure}

\vspace{2mm}
\quad \textit{5.2)} $d_c=J/2$;

\vspace{2mm}						
In this case, we have $L_G = 2$ and $L_T=J(J-6)/6$. Pairwise balanced designs (PBDs) can be used to construct $[B|T]$.

\vspace{2mm}				
\begin{definition}[\cite{Colbourn1}]
Let $K$ be a subset of positive integers and let $\lambda$ be a positive integer. A \textit{pairwise balanced design} of order $v$ with block sizes from $K$, denoted by PBD$(v,K;\lambda)$, is a pair $(\mathcal{V},\mathcal{B})$, where $\mathcal{V}$ is a point set of cardinality $v$ and $\mathcal{B}$ is a family of blocks of $\mathcal{V}$ which satisfy that (i) if $B \in \mathcal{B}$, then $|B| \in K$ and (ii) every pair of distinct elements of $\mathcal{V}$ occurs in exactly $\lambda$ blocks of $\mathcal{B}$.
\end{definition}

\vspace{2mm}
Let PBD$(v,K)$ denote a PBD$(v,K;\lambda)$ with $\lambda=1$ and use PBD$(v,K \cup \{ k^\star \})$ to denote a PBD containing only one block of size $k$ in the PBD, where $k \notin K$ is a positive integer. For $J \equiv 0 \mod 6$, it was shown in \cite{Kucukcifci} that PBD$(J-1,\{ 3,5^\star \})$ always exists. Note that five rows sharing 1 with the column of weight 5 have the weight $(J-4)/2$ and the other rows have the weight $(J-2)/2$ in a $(J-1) \times (J^2-3J-12)/6$ incidence matrix of PBD$(J-1,\{ 3,5^\star \})$.

\vspace{2mm}
\begin{theorem}
Removing a row of weight $(J-4)/2$ and its incident columns except the weight-5 column from an incidence matrix of PBD$(J-1,\{ 3,5^\star \})$ makes a $(J-2) \times (J^2 -6J +6)/6$ matrix of constant row-weight $(J-4)/2$.
\label{theorem:PBD_matrix}
\end{theorem}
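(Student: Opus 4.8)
The plan is to track, point by point, how the row weights of the incidence matrix of PBD$(J-1,\{3,5^\star\})$ change when we delete a chosen point together with its incident size-3 blocks. The starting data are the replication numbers recorded just before the statement: each of the five points on the block of size $5$ lies on $(J-4)/2$ blocks, while each of the remaining $J-6$ points lies on $(J-2)/2$ blocks. These follow from a one-line pair-counting argument, since a point $p$ belongs to $J-2$ pairs, every block through $p$ absorbs two of them except the size-$5$ block which absorbs four, and the defining property $\lambda=1$ pins down the count.

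First I would fix the removed point $p$ to be one of the five points on the size-$5$ block, so that its row has weight exactly $(J-4)/2$ as the statement requires; by the previous paragraph $p$ is incident to exactly $(J-6)/2$ size-$3$ blocks and the single size-$5$ block. Deleting $p$ and its incident size-$3$ columns, while retaining the weight-$5$ column, removes one row and $(J-6)/2$ columns. The dimension count is then routine: the number of rows becomes $J-2$, and the number of columns becomes $(J^2-3J-12)/6 - (J-6)/2 = (J^2-6J+6)/6$, matching the claimed size.

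The crux is the constant row-weight, and here I would split the surviving $J-2$ points into two classes according to the unique block containing the pair they form with $p$, again invoking $\lambda=1$. For a point $q\neq p$ still lying on the size-$5$ block (there are four of these), the pair $\{p,q\}$ is covered by the size-$5$ block, so no size-$3$ block contains both $p$ and $q$; hence none of the deleted columns meets row $q$, and its weight stays $(J-4)/2$. For each of the $J-6$ points $q$ outside the size-$5$ block, the pair $\{p,q\}$ must instead be covered by a size-$3$ block, which necessarily contains $p$ and is therefore one of the deleted columns; since $\lambda=1$, exactly one deleted column meets row $q$, so its weight drops from $(J-2)/2$ to $(J-4)/2$. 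Thus every surviving row has weight $(J-4)/2$.

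The main obstacle is purely the bookkeeping in this last step: one must be certain that a point outside the size-$5$ block loses exactly one incidence, neither zero nor two, and this is precisely where the defining property $\lambda=1$ of the PBD is indispensable, since it guarantees that $\{p,q\}$ sits in a unique block. Once the two cases are seen to collapse to the same value $(J-4)/2$, the proof is complete.
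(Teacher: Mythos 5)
Your proof is correct and follows essentially the same route as the paper's: both arguments hinge on the $\lambda=1$ property to show that each of the four remaining points of the size-$5$ block meets none of the deleted columns, while each point outside that block meets exactly one, so all surviving rows end up with weight $(J-4)/2$. The only difference is cosmetic—you phrase it in design-theoretic language (points, blocks, pairs) and re-derive the replication numbers by pair counting, whereas the paper works directly with submatrices and cites the row weights stated just before the theorem.
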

\begin{proof}
Without loss of generality, assume that the first column has 1's at the first five rows in an incidence matrix of PBD$(J-1,\{ 3,5^\star \})$. Consider the $(J-1) \times (J-4)/2$ submatrix which consists of the columns incident to the first row. Except the first row, each row of this submatrix has only one 1 because every column-wise pair of 1's should appear exactly once in an incidence matrix of this PBD. Thus, the $(J-2) \times (J-6)/2$ submatrix obtained by removing the first row and the first column from the $(J-1) \times (J-4)/2$ submatrix does not have 1 in the first four rows and each of the other rows has only one 1. After removing the first row and the $(J-2) \times (J-6)/2$ submatrix from the incidence matrix of PBD$(J-1,\{ 3,5^\star \})$, the remainder forms the $(J-2) \times (J^2 -6J +6)/6$ matrix of row-weight $(J-4)/2$.
\end{proof}

\vspace{2mm}
The matrix constructed in Theorem \ref{theorem:PBD_matrix} cannot be directly used as $[B|T]$ due to the improper number of columns and the weight-4 column, but it can be easily modified to meet the requirements for $[B|T]$ by splitting the weight-4 column into two weight-2 columns. The construction procedure of $[B|T]$ for $J \equiv 0 \mod 6$, $J \geq 12$, and $d_c=J/2$ is summarized as:

	\begin{enumerate}
		\item[1.] Construct PBD$(J-1,\{ 3,5^\star \})$.
		\item[2.] Remove a row of weight $(J-4)/2$ and its incident columns except the weight-5 column from an incidence matrix of PBD$(J-1,\{ 3,5^\star \})$.
		\item[3.] Split the weight-4 column into two weight-2 columns and move them to the leftmost to obtain $[B|T]$.
	\end{enumerate}

\begin{figure}[tb]
	\centering
	\subfigure[An incidence matrix of PBD$(11,\{ 3,5^\star \})$.]{\includegraphics[scale=0.8]{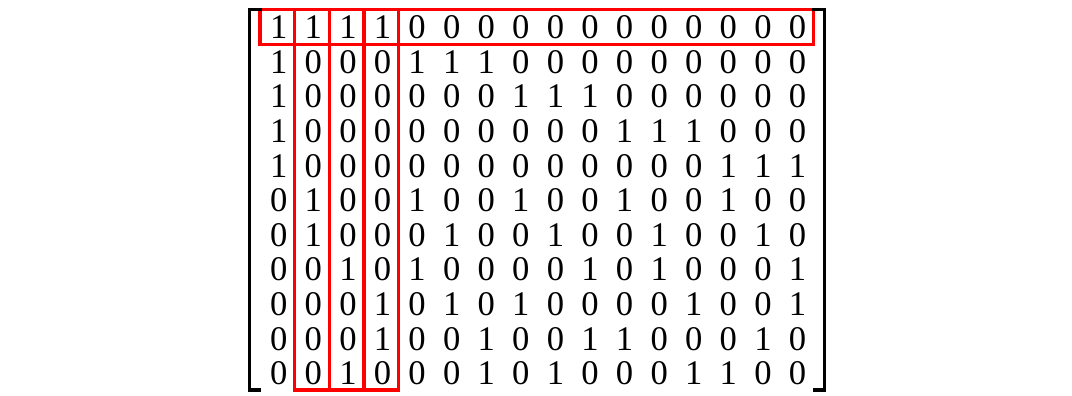}%
	\label{fig:12_24_const1}}\\
	\subfigure[A $12 \times 24$ regular protograph with $d_v=3$ and $n_2=10$.]{\includegraphics[scale=0.8]{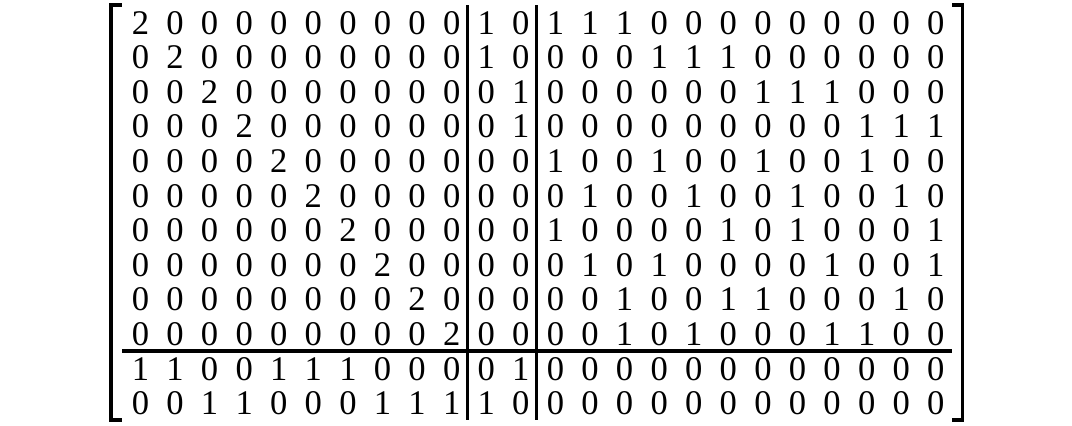}%
	\label{fig:12_24_proto}}
	\caption{The construction of a $12 \times 24$ regular protograph with $d_v=3$ and $n_2=10$.}
	\label{fig:12_24}
\end{figure}

\vspace{2mm}		
\begin{example}
The construction process for a $12 \times 24$ regular protograph with $d_v=3$ and $n_2=10$ is illustrated in Fig. \ref{fig:12_24}. An incidence matrix of PBD$(11,\{ 3,5^\star \})$ is shown in Fig. \ref{fig:12_24_const1}. We can see that the submatrix consisting of the columns incident to the first row has exactly one 1 in each row except the first row. By removing the first row and the second, the third, and the fourth columns and splitting the weight-4 column into two weight-2 column, $[B|T]$ is obtained. The resulting $12 \times 24$ regular protograph with $d_v=3$ and $n_2=10$ is shown in Fig. \ref{fig:12_24_proto}.
\end{example}

\vspace{2mm}						
\indent \textit{6)} $J=9$, $d_c=4$ and $J=10$, $d_c=6$:

\vspace{2mm}
There only remain two cases to provide the construction methods of all regular protographs in Theorem \ref{theorem:main}. When $J=9$ and $d_c=4$, we have $L_G=1$ and $L_T=4$, and $[B|T]$ is a $7 \times 5$ matrix with row-weight 2.
Although the construction method of $[B|T]$ for $J \equiv 3 \mod 6$ and $d_c = (J-1)/2$ cannot be directly used, we can construct $[B|T]$ from an incidence matrix of $S(2,3,7)$ in Fig. \ref{fig:9_12_const}.
Since any two columns of an incidence matrix of $S(2,3,7)$ have a common 1, removing the first two columns from an incidence matrix results in a $7 \times 5$ matrix where one row has the weight 1, four rows have the weight 2, and the remaining two rows have the weight 3 as shown in Fig. \ref{fig:9_12_const}. 
To obtain $[B|T]$, first delete 1 from each of two rows of weight 3 in the $7 \times 5$ matrix such that two deleted 1's do not belong to the same column and the columns containing two deleted 1's do not have 1 in the row of weight 1. These two deleted 1's are marked by circle in Fig. \ref{fig:9_12_const}. Then, by replacing a 0 at the row of weight 1 and one of the columns containing the deleted 1's with a 1, $[B|T]$ is constructed and the resulting $9 \times 12$ regular protograph is shown in Fig. \ref{fig:9_12_proto}.

When $J=10$ and $d_c=6$, we have $L_G=4$ and $L_T=8$, and $B$ has disjoint four column-wise pairs of 1's.
An incidence matrix of a symmetric configuration $8_3$ can be used as $T$, which does not have disjoint four column-wise pairs of 1's.
A $10 \times 20$ regular protographs with $d_v=3$ and $n_2=8$ is shown in Fig. \ref{fig:10_20}.


\begin{figure}[tb]
	\centering
	\subfigure[An incidence matrix of $S(2,3,7)$.]{\includegraphics[scale=0.8]{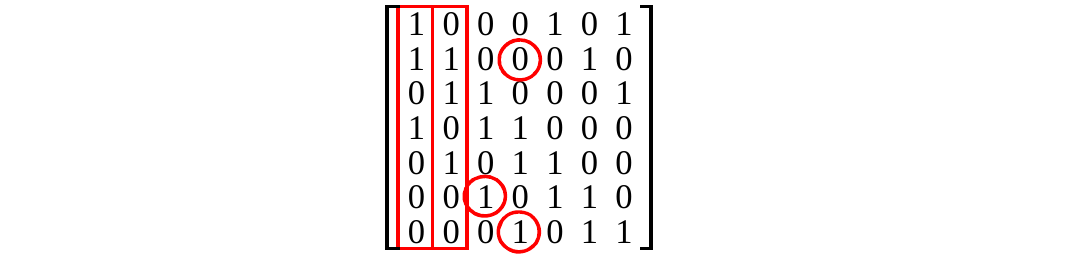}%
	\label{fig:9_12_const}}\\
	\subfigure[A $9 \times 12$ regular protograph with $d_v=3$ and $n_2=7$.]{\includegraphics[scale=0.8]{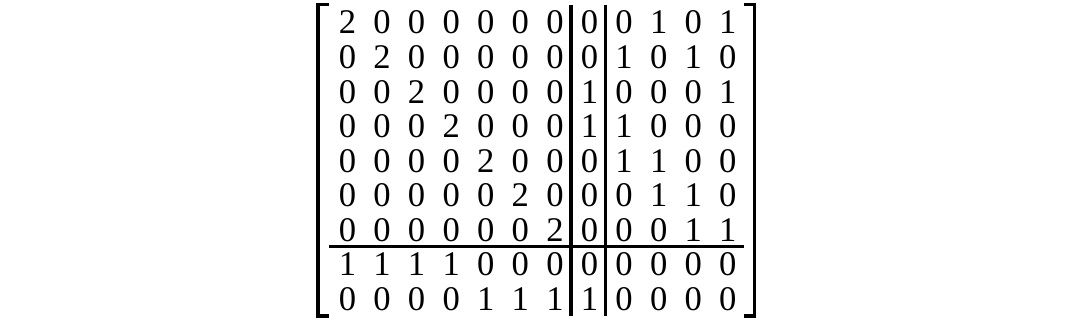}%
	\label{fig:9_12_proto}}\\
	\caption{The construction of a $9 \times 12$ regular protograph with $d_v=3$ and $n_2=7$.}
	\label{fig:9_12}
\end{figure}		

\begin{figure}[tb]
    \centering
	\includegraphics[scale=0.8]{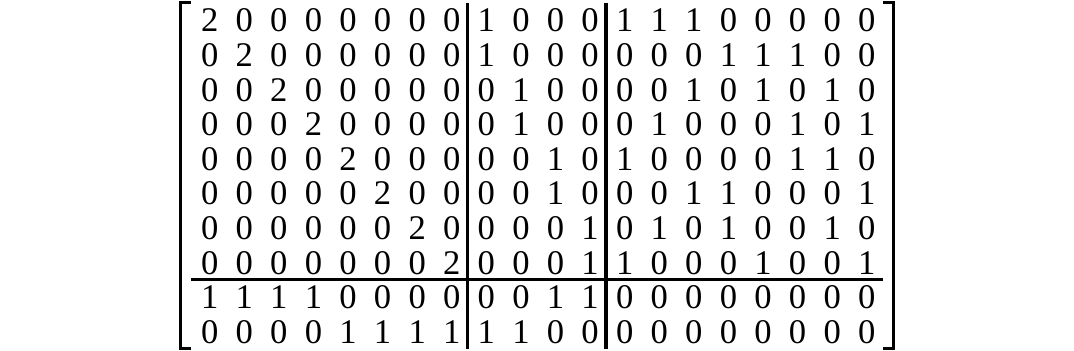}%
    \caption{A $10 \times 20$ regular protograph with $d_v=3$ and $n_2=8$.}
    \label{fig:10_20}
\end{figure}

\subsection{Regular Protographs With $d_v=3$ and $n_2 < J-2$, and With $d_v \geq 4$}
\label{subsec:other}

Regular protographs which do not induce inevitable cycles of length less than 14 for the case of $d_v=3$ and $n_2 < J-2$ and the case of $d_v \geq 4$ also have the same structure in Fig. \ref{fig:structure} and the construction method in the previous subsection can be similarly applied to these cases.
However, we do not elaborate on deriving necessary conditions like Theorem \ref{theorem:main} and providing specific construction methods for all cases because they should be done case by case and are very lengthy.
Instead, for given $J$, $L$, $d_v$, $d_c$, and $n_2$, we provide a general framework for checking the constructibility and constructing each submatrix.

First, some basic conditions on the parameters $J$, $L$, $d_v$, $d_c$, and $n_2$ are provided to determine whether a regular protograph with the given parameters can be potentially constructed. In $F$, the number of all possible column-wise pairs of 1's should be larger than or equal to the number of actual column-wise pairs of 1's, that is, ${J-n_2 \choose 2} \geq n_2 {d_v -2 \choose 2}$. Also, the last $J-n_2$ rows must have $(J-n_2)d_c$ 1's and the matrix $F$ must have $n_2 (d_v -2)$ 1's, and thus we have $(J-n_2)d_c \geq n_2 (d_v -2)$.

Second, consider constructing $[B|T]$ of size $n_2 \times (L-n_2)$. The matrix $[B|T]$ has the constant row-weight $d_c-2$ and does not have any repeated column-wise pairs of 1's to avoid the second and the third ICI subgraphs of $\mathcal{P}_{12}$ in $[A|B|T]$, and $T$ has the constant column-weight $d_v$. The matrix $[B|T]$ can be constructed from an incidence matrix of block designs such as $S(2,k,v)$, configurations $(v_r,b_k)$, PBD$(v,K)$, group divisible designs (GDD) \cite{Colbourn1} and so on because they do not have any repeated column-wise pairs of 1's. If an incidence matrix of an $S(2,k,v)$ or a configuration $(v_r,b_k)$ has the desired size of $[B|T]$, it can be directly used as $[B|T]$ where there is actually no $B$. Otherwise, an incidence matrix of some block designs can be used as $[B|T]$ by doing a simple modification. Note that an incidence matrix of PBDs or GDDs may not have a constant row-weight while an incidence matrix of $S(2,k,v)$ or configurations $(v_r,b_k)$ is always regular.

To obtain $[B|T]$ from an incidence matrix of different size, we may use the following modification schemes:
\begin{enumerate}
	\item[1)] Remove some rows.
	\item[2)] Remove a row and some columns incident to the row.
	\item[3)] Remove some parallel classes.
	\item[4)] Delete some 1's and insert some columns not to have any repeated column-wise pairs of 1's.
	\item[5)] Insert some parallel classes not to have any repeated column-wise pairs of 1's.
\end{enumerate}
By properly applying these modification schemes, an incidence matrix is changed into $[B|T]$ having no repeated column-wise pairs of 1's and proper size. Moreover, Schemes 1), 3), and 5) change all row-weights by the same amount and Schemes 2) and 4) flexibly control row-weights according to how to select columns and 1's. Therefore, we can freely use the above modification schemes until the desired size and the constant row-weight $d_c -2$ of $[B|T]$ are achieved.

For given $J$, $L$, $d_v$, $d_c$, and $n_2$, it may be possible for $[B|T]$ to take various forms, which implies that each $B$ may have a different number of columns and a different distribution of 1's.
Therefore, some bounds on $L_G$, or the number of columns in $B$, need to be derived to construct $[B|T]$. Since the number of 1's in $G$ is $(J-n_2) d_c - n_2 (d_v -2)$ and each column in $G$ can have weight from $1$ to $d_v$, we have $L_G \leq (J-n_2) d_c - n_2 (d_v -2) \leq d_v L_G$ which yields $\{ (J-n_2) d_c - n_2 (d_v -2) \} /d_v \leq L_G \leq (J-n_2) d_c - n_2 (d_v -2)$.
Therefore, when $[B|T]$ is constructed by selecting and modifying an incidence matrix of a block design, the above bound on $L_G$ must be considered.

Lastly, consider constructing $F$ of size $(J-n_2) \times n_2$ and $G$ of size $(J-n_2) \times L_G$. Column-weights of $G$ are already determined if $B$ is designed and 1's in $G$ should be located to avoid the second and the third ICI subgraphs of $\mathcal{P}_{12}$ in the union of $A$, $B$, and $G$. Then, for a given $G$, $(J-n_2)d_c$ 1's in $F$ should be located such that the union of $A$, $F$, and $G$ does not contain the second and the third ICI subgraphs of $\mathcal{P}_{12}$, and the union of $A$, $B$, $F$, and $G$ does not contain $\mathcal{P}_{10}$ while enforcing row-weights of $[F|G]$ to be $d_c$ and column-weights of $F$ to be $d_v -2$.

For given parameters $J$, $L$, $d_v$, $d_c$, and $n_2$, a general procedure for constructing regular protographs which avoid inevitable cycles of length less than 14 is summarized as:
\begin{enumerate}
	\item[1.] Check if the parameters satisfy the conditions ${J-n_2 \choose 2} \geq n_2 {d_v -2 \choose 2}$ and $(J-n_2)d_c \geq n_2 (d_v -2)$. If the conditions are not satisfied, stop the procedure.
	\item[2.] Obtain $L_G$ satisfying $\{ (J-n_2) d_c - n_2 (d_v -2) \} /d_v \leq L_G \leq (J-n_2) d_c - n_2 (d_v -2)$.
	\item[3.] Construct $[B|T]$ using $L_G$ obtained in Step 2 from an incidence matrix of a proper block design.
	\item[4.] Construct $[F|G]$ satisfying the weight constraints such that the union of $A$, $B$, $F$, and $G$ does not have $\mathcal{P}_{10}$ and $\mathcal{P}_{12}$ as its subgraph.
\end{enumerate}

\vspace{2mm}
\begin{example}
Consider the construction of a $15 \times 30$ regular protograph with $d_v=3$, $d_c=6$, and $n_2=12$. 
The given parameters satisfy the conditions ${J-n_2 \choose 2} \geq n_2 {d_v -2 \choose 2}$ and $(J-n_2)d_c \geq n_2 (d_v -2)$.
The $12 \times 18$ matrix $[B|T]$ has the row-weight 4 and $2 \leq L_G \leq 6$.
An incidence matrix of a symmetric configuration $12_3$ is chosen for the construction of $[B|T]$, which is constructed by removing two parallel classes from a $12 \times 20$ incidence matrix of the configuration $(12_5,20_3)$ in Fig. \ref{fig:14_28_const2}.
By inserting a parallel class consisting of six weight-2 columns to an incidence matrix of the symmetric configuration $12_3$, $[B|T]$ with $L_G =6$ is constructed, where any repeated column-wise pairs of 1's do not appear.
Since the column-weight of $B$ is 2, the $3 \times 6$ matrix $G$ should have the column-weight 1.
Let each row of $G$ have two 1's. Then $F$ should have the column-weight 1 and the row-weight 4, and the 1's in $F$ can be properly distributed so that $\mathcal{P}_{10}$ and $\mathcal{P}_{12}$ do not appear in the union of $A$, $B$, $F$, and $G$.
The resulting $15 \times 30$ regular protograph with $d_v=3$, $d_c=6$, and $n_2=12$ is shown in Fig. \ref{fig:15_30}.
\end{example}

\begin{figure}[tb]
    \centering
	\includegraphics[scale=0.8]{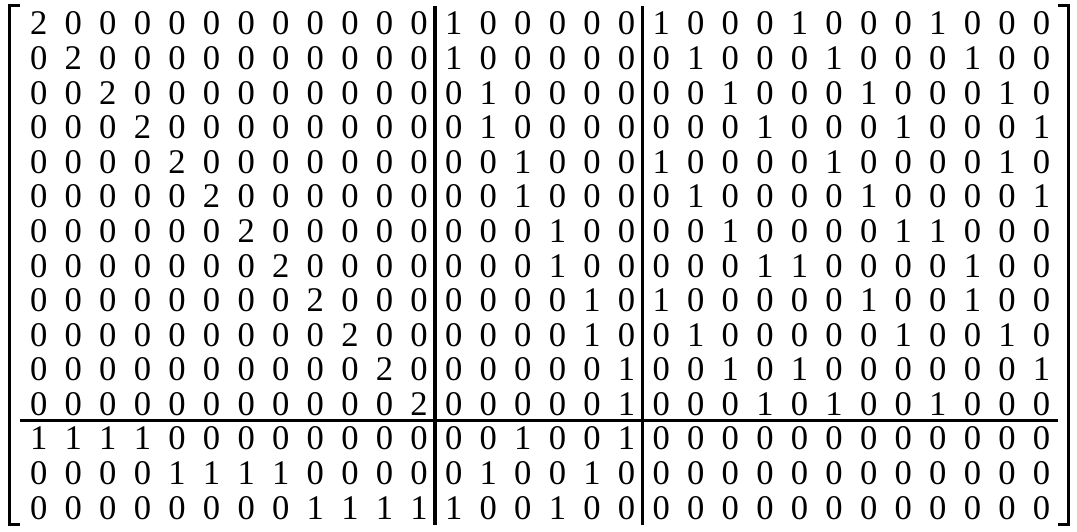}%
    \caption{A $15 \times 30$ regular protograph with $d_v=3$ and $n_2=12$.}
    \label{fig:15_30}
\end{figure}

\vspace{2mm}
\begin{example}
Consider the construction of a $28 \times 49$ regular protograph with $d_v=4$, $d_c =7$, and $n_2=21$.
We can check that those parameters satisfy two necessary conditions for the construction and $2 \leq L_G \leq 7$. For constructing a $21 \times 28$ $[B|T]$, an incidence matrix of a symmetric configuration $21_4$ is considered. We can find a parallel class consisting of seven weight-3 columns such that if those seven columns are inserted to the incidence matrix, any repeated column-wise pairs of 1's still do not appear. Thus, we obtain $[B|T]$ with $L_G=7$. The $7 \times 7$ matrix $G$ should have the column-weight 1 and its row-weight can be set to 1. The matrix $F$ has the size $7 \times 21$, the column-weight 2, and the row-weight 6. Due to ${7 \choose 2}=21$, $F$ can be constructed not to have any repeated column-wise pairs of 1's. Also, we can make $F$ to avoid $\mathcal{P}_{10}$ in the union of $A$, $B$, $F$, and $G$. The resulting $28 \times 49$ regular protograph with $d_v=4$, $d_c=7$, and $n_2=21$ is shown in Fig. \ref{fig:28_49}.
\end{example}

\begin{figure*}[tb]
    \centering
	\includegraphics[scale=0.75]{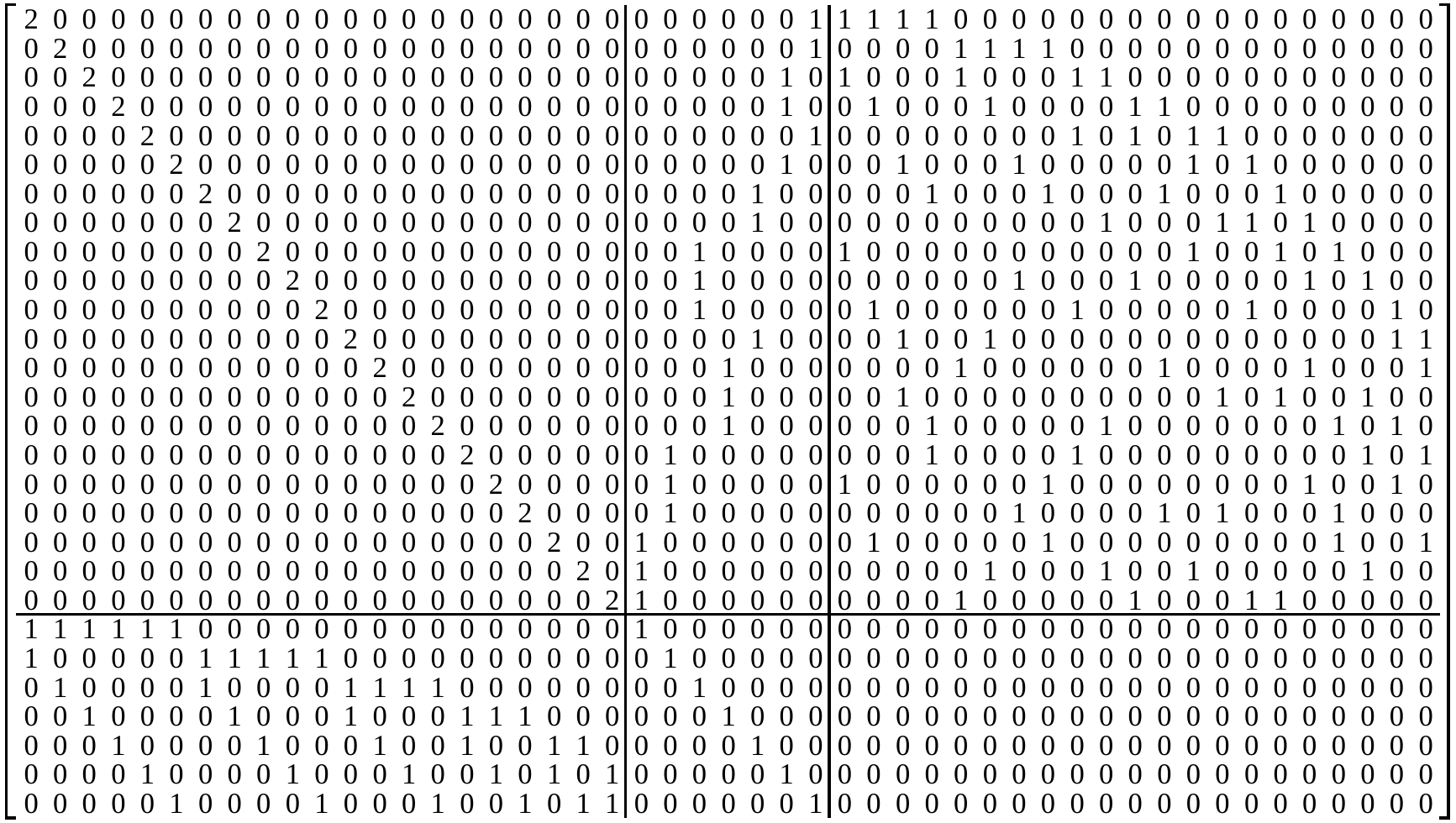}%
    \caption{A $28 \times 49$ regular protograph with $d_v=4$ and $n_2=21$.}
    \label{fig:28_49}
\end{figure*}


\vspace{2mm}
\section{Construction of QC LDPC Codes and Their Minimum Hamming Distances}\label{sec:dmin}

\subsection{Construction of QC LDPC Codes From the Proposed Protographs}

To verify the effectiveness of the proposed protographs, QC LDPC codes will be constructed by determining the lift size and assigning an appropriate shift value to each edge of the protographs. Given a protograph, it is not easy to find all shift values even for a moderate lift size such that the girth of the QC LDPC code is the same as the length of the shortest inevitable cycle. Huang \textit{et al.} \cite{Huang} proposed a search algorithm for small lift size and a shift value assignment scheme to achieve the target girth based on greedy search. This algorithm is originally designed for single-edge protographs. However, by a slight modification, this algorithm can be extended to the case of multiple-edge protographs.

Consider a $J \times L$ protograph $P$ with the column-weight $d_v$, the row-weight $d_c$, and the lift size $z$. Each column of $P$ has $d_v$ shift values and let $s_{l,i}$, $i=0,\ldots,d_v-1$, denote the $i$-th shift value of the $l$-th column in $P$. Our goal is to determine all shift values $s_{l,i}$ and search the minimum $z$ when a protograph and a target girth $g$ of QC LDPC codes are given. Let $\mathcal{W}_n$ denote the set of all TNC walks of length $n$ in $P$. Then, by Lemma \ref{lemma:cycle}, the condition for achieving the target girth $g$ of QC LDPC codes is that for any $W \in \mathcal{W}_n$, $n=4,6,\ldots,g-2$, the shift sum $s(W)$ satisfies $s(W) \neq 0 \mod z$. However, it requires too much computational complexity to find $s_{l,i}$ and the minimum $z$ satisfying the above condition by considering all search space of $s_{l,i}$ and $z$.

In order to reduce the search space of shift values, let $s_{l,i} = r_{i} m_l$ as in \cite{Huang}, where $r_i$ is the $(i+1)$-st element of the set $\{ 0,1,3,7,12,20, \ldots \}$ which is constructed from $\{ r_0 =0\}$ by adding $r_i$, $i=1,2,\ldots$, in order such that $r_i = r_{i-1} + \min_{j,k<i}{\left[ \mathbb{N} \setminus \{ |r_j - r_k| \} \right]}$. Thus we only need to find $L$ values of $m_l$ instead of $d_v L$ values of $s_{l,i}$. Moreover, for further reduction of computational complexity, $m_l$ is determined in a greedy manner, that is, shift values of the $l$-th column in $P$ are determined by considering only the first $l$ columns in $P$. For this, let $\mathcal{W}_n^{(l)}$ denote the set of all TNC walks of length $n$ in the matrix consisting of the first $l$ columns of $P$.

For a given target girth $g$, if $s_{l,i}$ is already determined such that $s(W) \neq 0$ for any $W \in \mathcal{W}_n$, $n=4,6,\ldots,g-2$, the minimum $z$, denoted by $z_{\min}$, can be sub-optimally determined as $z_{\min} = \max \left\{ |s(W)| ~|~ W \in \mathcal{W}_n,~n= 4,6,\ldots,g-2 \right\} +1$. Note that for any $z \geq z_{\min}$, the target girth is achieved.

\begin{figure}[tb]
\noindent\fbox{\parbox{0.98\linewidth}{
\textbf{Algorithm 1: Greedy Search for the Minimum Lift Size and Shift Values}

\textbf{INPUT}: Target girth $g$, $J \times L$ protograph, search bound $\Gamma_{\max}$

\textbf{OUTPUT}: $m_l$ ($0 \leq l \leq L-1$) and $z_{\min}$

\textbf{INITIALIZATION}: $r_0=0$, $r_i = r_{i-1} + \min_{j,k<i}{\left[ \mathbb{N} \setminus \{ |r_j - r_k| \} \right]}$ for $1 \leq i \leq d_v -1$

\textbf{MAIN ROUTINE}

		\begin{itemize}
			\item[] \textbf{for} $l=0$ to $L-1$ \textbf{begin}
			\begin{itemize}
				\item[] \textbf{for} $m_{l}=-\Gamma_{\max}$ to $\Gamma_{\max}$ \textbf{begin}
				\begin{itemize}
					\item[] Let $s_{l,i} = r_{i} m_l$ for $0 \leq i \leq d_v-1$.
					\item[] If $s(W) \neq 0$ for any $W \in \mathcal{W}_n^{(l)}$, $n=4,6,\ldots,g-2$,
					\item[] \hspace{3mm} $z_{\min}^{(l)}(m_l) = \max \left\{ |s(W)| ~|~ W \in \mathcal{W}_n^{(l)},~n= 4,6,\ldots,g-2 \right\} +1$.
					\item[] Otherwise, $z_{\min}^{(l)}(m_l) = \infty$.
				\end{itemize}
				\item[] \textbf{end}
				\item[] Select the minimum $z_{\min}^{(l)}(m_l)$ and save the minimum $z_{\min}^{(l)}(m_l)$ to $z_{\min}^{(l)}$ and also save the argument to $m_l$.
				\item[] If there are multiple minimums, randomly pick any one.
			\end{itemize}
			\item[] \textbf{end}
			\item[] $z_{\min} = z_{\min}^{(L-1)}$
		\end{itemize}

}}
\end{figure}

An algorithm to construct QC LDPC codes of moderate length by determining all shift values and searching the minimum lift size, called Algorithm 1, is provided as follows.
If the target girth $g$ is set to the length of the shortest inevitable cycle, we can generate QC LDPC codes of moderate length with the maximum achievable girth from the proposed protographs.
Note that the computational complexity of Algorithm 1 is the same for both single-edge protographs and multiple-edge protographs under the same parameter values.

Four QC LDPC codes are generated by using Algorithm 1.
From the $9 \times 15$ protograph in Fig. \ref{fig:9_15}, a $(15000,6000)$ QC LDPC code with girth 14, denoted by Proposed Code 1, is constructed, which has $z=1000$ and $\{ m_l \} = \{ -105, 36, 45, 75, -69, -303, -393, 127, -31, -199, 200, 184, 86, 200,$ $ 199 \}$. From the $9 \times 12$ protograph in Fig. \ref{fig:9_12_proto}, a $(3600,900)$ QC LDPC code with girth 14, denoted by Proposed Code 2, is constructed, which has $z=300$ and $\{ m_l \} = \{ -12, 18, -39, 75, -57,$ $ 120, 15, 17, 0, -6, -8, -8 \}$.
From the $6 \times 12$ protograph in Fig. \ref{fig:6_12}, a $(7200,3600)$ QC LDPC code with girth 12, denoted by Proposed Code 3, is constructed, which has $z=600$ and $\{ m_l \} = \{ -93, 7, 47, -52, -29, -192, 30, 29, 30, 3, 19, 42 \}$. From the $6 \times 8$ protograph in Fig. \ref{fig:6_8}, a $(800,200)$ QC LDPC code with girth 12, denoted by Proposed Code 4, is constructed, which has $z=100$ and $\{ m_l \} = \{ -3, 85, -18, -6, -7, 2, -5, -15 \}$.

\subsection{Upper Bounds on the Minimum Hamming Distance of the Proposed QC LDPC Codes}

Smarandache and Vontobel \cite{Smarandache} derived two upper bounds on the minimum Hamming distance of QC LDPC codes. While one bound needs whole code specifications, e.g., the structure of the protograph, the lift size, and the shift values, the other bound only requires knowledge of the protograph. 

These two upper bounds are shown in Theorems \ref{theorem:dmin_1} and \ref{theorem:dmin}, and they are directly derived by finding some low-weight codewords as in Lemma \ref{lemma:dmin}.
Let $Q_{\mathcal{S}}$ denote the submatrix of $Q$ that contains only the columns of $Q$ whose index appears in the set $\mathcal{S}$.

\vspace{2mm}
\begin{definition}[\cite{Smarandache}]
The permanent of an $m \times m$ matrix $Q = [q_{i,j}]$ over some commutative ring is defined to be
	\begin{equation*}
		\mathrm{perm} (Q) := \sum_{\sigma} \prod_{i \in \{ 0,\ldots,m-1 \}}q_{i,\sigma (i)}
	\end{equation*}
where the summation is over all $m!$ permutations $\sigma$ on the set $\{ 0,\ldots,m-1 \}$.
\end{definition}

\vspace{2mm}
\begin{lemma}[\cite{Smarandache}]
Let $\mathcal{C}$ be a binary QC LDPC code defined by a $J \times L$ polynomial matrix $H(x)$ with the lift size $z$. Let $\mathcal{S}$ be an arbitrary size-$(J+1)$ subset of $\{ 0,1,\ldots,L-1 \}$ and let $c(x)=[ c_0(x),c_1(x),\ldots,c_{L-1}(x)]$, where $c_i(x)$ is a polynomial over $\mathbb{F}_2(x)/(x^z+1)$ defined by
\begin{equation*}
	c_i(x) = \begin{cases} \mathrm{perm} \left( H_{\mathcal{S}\setminus \{ i \} }(x) \right), & \mathrm{if}~i \in \mathcal{S} \\ 0, & \mathrm{otherwise}. \end{cases}
\end{equation*}
Then $c(x)$ is a codeword of $\mathcal{C}$.
\label{lemma:dmin}
\end{lemma}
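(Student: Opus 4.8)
The goal is to show that the vector $c(x)$ built from permanents of the $J\times J$ submatrices $H_{\mathcal{S}\setminus\{i\}}(x)$ is a codeword, i.e. that $H(x)\,c(x)^{\mathsf T}=0$ over the ring $\mathbb{F}_2[x]/(x^z+1)$. The plan is to verify this one row at a time: for each row index $r\in\{0,\ldots,J-1\}$, I would compute the $r$-th component of $H(x)\,c(x)^{\mathsf T}$, which is $\sum_{i\in\mathcal{S}} h_{r,i}(x)\,\mathrm{perm}\!\left(H_{\mathcal{S}\setminus\{i\}}(x)\right)$, and show that this sum equals $0$. Since $c_i(x)=0$ for $i\notin\mathcal{S}$, only the $J+1$ indices in $\mathcal{S}$ contribute, so each such inner product is a sum of exactly $J+1$ permanent-weighted terms.

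The key observation is that this expression is, up to sign, a Laplace-type cofactor expansion of the permanent of an $(J+1)\times(J+1)$ matrix that has a repeated row. Concretely, I would form the square matrix $M(x)$ whose columns are indexed by $\mathcal{S}$ and whose rows consist of the $J$ rows of $H_{\mathcal{S}}(x)$ together with one extra copy of the $r$-th row $\big(h_{r,i}(x)\big)_{i\in\mathcal{S}}$ appended at the bottom. Expanding $\mathrm{perm}(M(x))$ along this appended row produces exactly $\sum_{i\in\mathcal{S}} h_{r,i}(x)\,\mathrm{perm}\!\left(H_{\mathcal{S}\setminus\{i\}}(x)\right)$, because deleting the last row and the column indexed by $i$ leaves precisely $H_{\mathcal{S}\setminus\{i\}}(x)$. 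Thus the $r$-th syndrome component equals $\mathrm{perm}(M(x))$, where $M(x)$ contains two identical rows (row $r$ of $H_{\mathcal{S}}(x)$ and the appended copy).

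It remains to argue that the permanent of a matrix with two equal rows vanishes in this setting. Over a field of characteristic two this is immediate: pairing each permutation $\sigma$ with the permutation obtained by swapping the preimages of the two identical rows gives a fixed-point-free involution on the symmetric group whose two members contribute equal monomials, so the terms cancel in pairs modulo $2$. Since $\mathbb{F}_2[x]/(x^z+1)$ is a commutative ring of characteristic two, the same pairing argument applies verbatim to $\mathrm{perm}(M(x))$, forcing it to be zero. Carrying this out for every row $r$ shows $H(x)\,c(x)^{\mathsf T}=0$, and hence $c(x)\in\mathcal{C}$.

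I expect the main obstacle to be purely bookkeeping rather than conceptual: one must set up the index sets and sign-free cofactor expansion carefully so that the correspondence between the syndrome sum and the permanent of the augmented matrix is exact, and one must confirm that the characteristic-two cancellation is genuinely an involution with no fixed points (which holds precisely because the two rows being swapped are distinct rows, so the transposition is nontrivial). No short-cycle or girth hypotheses enter; the statement is a general algebraic identity that holds for any $\mathcal{S}$ of size $J+1$.
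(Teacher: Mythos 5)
Your proposal is correct, and it is essentially the standard proof: the paper itself states this lemma without proof, importing it from the cited reference \cite{Smarandache}, where it is established by exactly the argument you give — the $r$-th syndrome component $\sum_{i\in\mathcal{S}} h_{r,i}(x)\,\mathrm{perm}\left( H_{\mathcal{S}\setminus\{i\}}(x) \right)$ is recognized as the sign-free Laplace expansion of the permanent of the $(J+1)\times(J+1)$ matrix obtained by appending a second copy of row $r$ to $H_{\mathcal{S}}(x)$, and that permanent vanishes because $\mathbb{F}_2[x]/(x^z+1)$ has characteristic two, via the fixed-point-free involution $\sigma \mapsto \sigma\circ(a\,b)$ pairing permutations with equal contributions. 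Your bookkeeping is sound (the involution has no fixed points since any bijection separates the two distinct row indices), so there is no gap.
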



\vspace{2mm}
\begin{theorem}[\cite{Smarandache}]
Let $\mathcal C$ be a binary QC LDPC code defined by a $J \times L$ polynomial matrix $H(x)$ with the lift size $z$. Then the minimum Hamming distance of $\mathcal{C}$ is upper bounded as
	\begin{equation}
		d_{\min} (\mathcal C) \leq \min^{*}_{\mathcal{S} \subseteq \{ 0,\ldots,L-1 \} \atop |\mathcal{S}|=J+1} \sum_{i \in \mathcal{S}} \mathrm{wt} \left( \mathrm{perm} \left( H_{\mathcal{S} \setminus \{ i \}}(x) \right) \right)
	\label{eq:dmin1}
	\end{equation}
where the operator $\stackrel{*}{\mathrm{min}}$ gives back the minimum value of all nonzero entries in a list of values.
\label{theorem:dmin_1}
\end{theorem}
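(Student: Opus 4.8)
The plan is to read the bound off of Lemma \ref{lemma:dmin} and translate the polynomial codeword it produces into a statement about Hamming weight. First I would fix an arbitrary size-$(J+1)$ subset $\mathcal{S} \subseteq \{0,1,\ldots,L-1\}$. For each $i \in \mathcal{S}$ the matrix $H_{\mathcal{S}\setminus\{i\}}(x)$ has exactly $J$ rows and $J$ columns, so it is square and its permanent is well defined. Forming the vector $c(x)=[c_0(x),\ldots,c_{L-1}(x)]$ exactly as in Lemma \ref{lemma:dmin}, that lemma immediately guarantees that $c(x)$ is a codeword of $\mathcal{C}$, so the remaining work is purely to compute its weight.

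The second step is to evaluate the Hamming weight of $c(x)$. A codeword of $\mathcal{C}$ is a binary vector of length $Lz$ partitioned into $L$ blocks of length $z$, and the number of nonzero coordinates in the $i$-th block equals $\mathrm{wt}(c_i(x))$, the number of nonzero terms of the polynomial $c_i(x)$. Hence the total Hamming weight of $c(x)$ is $\sum_{i=0}^{L-1}\mathrm{wt}(c_i(x))$. Because $c_i(x)=0$ for every $i\notin\mathcal{S}$, this sum collapses to $\sum_{i\in\mathcal{S}}\mathrm{wt}\!\left(\mathrm{perm}\left(H_{\mathcal{S}\setminus\{i\}}(x)\right)\right)$, which is precisely the summand appearing in (\ref{eq:dmin1}).

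Finally, the bound follows from the definition of the minimum Hamming distance. Provided $c(x)$ is not the all-zero codeword, its weight is an upper bound on $d_{\min}(\mathcal{C})$; ranging over all size-$(J+1)$ subsets $\mathcal{S}$ therefore produces a whole family of upper bounds, and taking the smallest gives the tightest one. The single point that requires care is that for some subsets every permanent may vanish, yielding the zero codeword, which says nothing about $d_{\min}(\mathcal{C})$. This is exactly why the statement uses the $\stackrel{*}{\mathrm{min}}$ operator rather than an ordinary minimum: discarding the zero entries removes precisely those degenerate subsets. I expect this handling of the zero-codeword case to be the only genuine subtlety, the rest being a direct consequence of Lemma \ref{lemma:dmin} together with the block structure of QC codewords.
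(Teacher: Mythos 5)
Your proposal is correct and takes essentially the same route as the paper: the paper derives Theorem \ref{theorem:dmin_1} exactly as you do, by taking the codeword guaranteed by Lemma \ref{lemma:dmin} for each size-$(J+1)$ subset $\mathcal{S}$, observing that its Hamming weight equals $\sum_{i \in \mathcal{S}} \mathrm{wt}\left(\mathrm{perm}\left(H_{\mathcal{S}\setminus\{i\}}(x)\right)\right)$, and relying on the $\min^{*}$ operator to discard the degenerate subsets whose codeword is all-zero. Your handling of that zero-codeword subtlety is precisely the intended role of $\min^{*}$, so nothing is missing.
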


\vspace{2mm}
\begin{theorem}[\cite{Smarandache}]
Let $\mathcal C$ be a binary QC LDPC code lifted from a $J \times L$ protograph $P$. Then the minimum Hamming distance of $\mathcal{C}$ is upper bounded as
	\begin{equation}
		d_{\min} (\mathcal C) \leq \min^{*}_{\mathcal{S} \subseteq \{ 0,\ldots,L-1 \} \atop |\mathcal{S}|=J+1} \sum_{i \in \mathcal{S}} \mathrm{perm} \left( P_{\mathcal{S} \setminus \{ i \}} \right).
	\label{eq:dmin2}
	\end{equation}
\label{theorem:dmin}
\end{theorem}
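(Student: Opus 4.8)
The plan is to obtain (\ref{eq:dmin2}) directly from the codeword construction of Lemma \ref{lemma:dmin}, or equivalently as a weakening of the already-established bound (\ref{eq:dmin1}) in Theorem \ref{theorem:dmin_1}, by replacing each polynomial-permanent weight $\mathrm{wt}(\mathrm{perm}(H_{\mathcal{S}\setminus\{i\}}(x)))$ with the integer permanent $\mathrm{perm}(P_{\mathcal{S}\setminus\{i\}})$ of the corresponding protograph submatrix. Since $P=[p_{j,l}]$ with $p_{j,l}=\mathrm{wt}(h_{j,l}(x))$, the whole argument reduces to a single pointwise inequality, $\mathrm{wt}(\mathrm{perm}(H_{\mathcal{S}\setminus\{i\}}(x)))\leq\mathrm{perm}(P_{\mathcal{S}\setminus\{i\}})$, which I would then sum over $i\in\mathcal{S}$ and minimize over $\mathcal{S}$.

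First I would fix a size-$(J+1)$ subset $\mathcal{S}$ and an index $i\in\mathcal{S}$, and expand the polynomial permanent of the $J \times J$ matrix $H_{\mathcal{S}\setminus\{i\}}(x)$ by its definition as $\mathrm{perm}(H_{\mathcal{S}\setminus\{i\}}(x))=\sum_{\sigma}\prod_{j} h_{j,\sigma(j)}(x)$, where $\sigma$ ranges over the $J!$ permutations and each factor lives in $\mathbb{F}_2[x]/(x^z+1)$. The key observation is a term-counting one: a product $\prod_{j} h_{j,\sigma(j)}(x)$ expands into at most $\prod_{j}\mathrm{wt}(h_{j,\sigma(j)}(x))=\prod_{j}p_{j,\sigma(j)}$ monomials counted with multiplicity, because multiplying a weight-$a$ polynomial by a weight-$b$ polynomial produces at most $ab$ monomial terms before any reduction. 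Summing this bound over all $\sigma$ yields a total monomial count of at most $\sum_{\sigma}\prod_{j}p_{j,\sigma(j)}=\mathrm{perm}(P_{\mathcal{S}\setminus\{i\}})$, which is exactly the integer permanent of the protograph submatrix.

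Next I would argue that passing from this multiset of monomials to the actual polynomial in $\mathbb{F}_2[x]/(x^z+1)$ can only decrease the number of surviving terms: reducing exponents modulo $x^z+1$ merges monomials of equal residue, and combining coefficients over $\mathbb{F}_2$ annihilates monomials that appear an even number of times. Neither operation can create a new nonzero coefficient, so the Hamming weight of $\mathrm{perm}(H_{\mathcal{S}\setminus\{i\}}(x))$ is at most the monomial count computed above, establishing the pointwise inequality. Summing over $i\in\mathcal{S}$ turns the codeword weight $\sum_{i\in\mathcal{S}}\mathrm{wt}(c_i(x))$ supplied by Lemma \ref{lemma:dmin} into the upper bound $\sum_{i\in\mathcal{S}}\mathrm{perm}(P_{\mathcal{S}\setminus\{i\}})$, so that any $\mathcal{S}$ yielding a nonzero codeword gives $d_{\min}(\mathcal{C})\leq\sum_{i\in\mathcal{S}}\mathrm{perm}(P_{\mathcal{S}\setminus\{i\}})$, and taking the minimum over admissible $\mathcal{S}$ should deliver (\ref{eq:dmin2}).

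The step I expect to be most delicate is the interaction of this pointwise domination with the $\stackrel{*}{\mathrm{min}}$ operator, which discards zero entries. The cancellations over $\mathbb{F}_2[x]/(x^z+1)$ used above can drive the whole codeword $c(x)$ to zero even when $\mathrm{perm}(P_{\mathcal{S}\setminus\{i\}})>0$, so a set $\mathcal{S}$ attaining the minimum nonzero value of $\sum_{i\in\mathcal{S}}\mathrm{perm}(P_{\mathcal{S}\setminus\{i\}})$ need not by itself produce a nonzero codeword; consequently the naive monotonicity $\stackrel{*}{\mathrm{min}}$ of the weights $\leq\stackrel{*}{\mathrm{min}}$ of the integer permanents does not follow for free. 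The careful part of the proof is therefore to confirm that the minimization in (\ref{eq:dmin2}) remains a valid upper bound in this degenerate situation, either by exhibiting an admissible $\mathcal{S}$ that attains the minimum and yields a nonzero codeword, or by handling the zero-codeword sets within the $\stackrel{*}{\mathrm{min}}$ exactly as in the derivation of (\ref{eq:dmin1}).
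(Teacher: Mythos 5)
Your proposal is correct and takes essentially the same route as the paper: the paper gives no proof of Theorem \ref{theorem:dmin} beyond citing \cite{Smarandache} and remarking that both bounds are ``directly derived by finding some low-weight codewords as in Lemma \ref{lemma:dmin},'' which is exactly your argument---apply the permanent-codeword construction of Lemma \ref{lemma:dmin} and bound each component weight via $\mathrm{wt}\left(\mathrm{perm}\left(H_{\mathcal{S}\setminus\{i\}}(x)\right)\right) \leq \mathrm{perm}\left(P_{\mathcal{S}\setminus\{i\}}\right)$, then minimize over $\mathcal{S}$. The $\min^{*}$ degeneracy you flag (a minimizing $\mathcal{S}$ whose associated codeword cancels to zero over $\mathbb{F}_2[x]/(x^z+1)$ even though the integer-permanent sum is positive) is a genuine subtlety, but the paper never confronts it either, deferring entirely to the cited reference, so your treatment is if anything more careful than the paper's own.
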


\vspace{2mm}
Theorems \ref{theorem:dmin_1} and \ref{theorem:dmin} imply that for given $J$, $L$, $d_v$, and $d_c$, these two upper bounds on the minimum Hamming distance of QC LDPC codes possibly increase as the number of multiple edges in the protograph increases, which is supported by examples for some regular protographs in \cite{Smarandache}. Note that the bound in (\ref{eq:dmin2}) is not tighter than the bound in (\ref{eq:dmin1}), but the former approaches the latter for a large $z$ and proper shift values.

\begin{figure}[tb]
	\centering
	\subfigure[$9 \times 15$]{\includegraphics[scale=0.8]{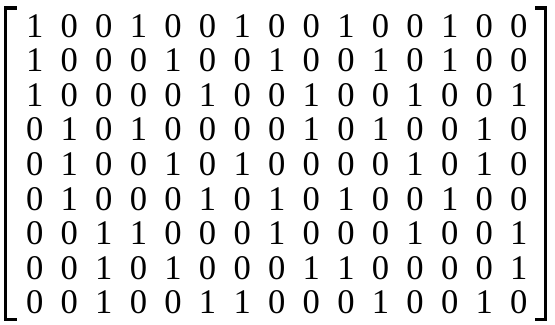}%
	\label{fig:9_15_dmin}} \hspace{5mm}
	\subfigure[$9 \times 12$]{\includegraphics[scale=0.8]{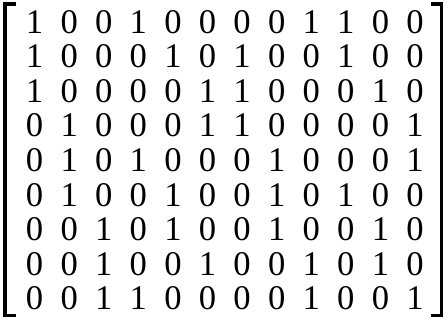}%
	\label{fig:9_12_dmin}}
	\caption{Two single-edge regular protographs with $d_v=3$ avoiding inevitable cycles of length $<$ 14.}
	\label{fig:dmin_14}
\end{figure}

Consider the (15000,6000) Proposed Code 1. The upper bounds in (\ref{eq:dmin1}) and (\ref{eq:dmin2}) for this code are 246 and 256, respectively. For comparison, a QC LDPC code with the same parameter values is generated from the $9 \times 15$ single-edge regular protograph in Fig. \ref{fig:9_15_dmin} by using Algorithm 1. This single-edge protograph is constructed by attaching the last three columns to an incidence matrix of $S(2,3,9)$ to avoid inevitable cycles of length less than 14. The upper bounds in (\ref{eq:dmin1}) and (\ref{eq:dmin2}) for this code are 218 and 230, respectively.

Consider the (3600,900) Proposed Code 2. The upper bounds in (\ref{eq:dmin1}) and (\ref{eq:dmin2}) for this code are 362 and 416, respectively. For comparison, a QC LDPC code with the same parameter values is generated from the $9 \times 12$ single-edge regular protograph in Fig. \ref{fig:9_12_dmin} by using Algorithm 1. By using the construction method in \cite{Kim}, this single-edge protograph is constructed by concatenating an incidence matrix of a $(9_2,6_3)$ configuration and cyclically row-shifted matrix of it. The upper bounds in (\ref{eq:dmin1}) and (\ref{eq:dmin2}) for this code are 314 and 384, respectively.

Consider the (7200,3600) Proposed Code 3. The upper bounds in (\ref{eq:dmin1}) and (\ref{eq:dmin2}) for this code are all 68. For comparison, a QC LDPC code with the same parameter values is generated from the $6 \times 12$ single-edge regular protograph in Fig. \ref{fig:6_12_dmin} by using Algorithm 1. This single-edge protograph is the best one of randomly constructed protographs in the sense of upper bounds on the minimum Hamming distance. The upper bounds in (\ref{eq:dmin1}) and (\ref{eq:dmin2}) for this code are all 56.

Finally, consider the (800,200) Proposed Code 4. The upper bounds in (\ref{eq:dmin1}) and (\ref{eq:dmin2}) for this code are 130 and 174, respectively. For comparison, a QC LDPC code with the same parameter values is generated from the $6 \times 8$ single-edge regular protograph in Fig. \ref{fig:6_8_dmin} by using Algorithm 1. This single-edge protograph is the best one of randomly constructed protographs in the sense of upper bounds on the minimum Hamming distance. The upper bounds in (\ref{eq:dmin1}) and (\ref{eq:dmin2}) for this code are 98 and 110, respectively.

\begin{figure}[tb]
	\centering
	\subfigure[$6 \times 12$]{\includegraphics[scale=0.8]{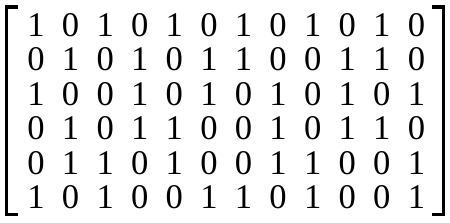}%
	\label{fig:6_12_dmin}} \hspace{5mm}
	\subfigure[$6 \times 8$]{\includegraphics[scale=0.8]{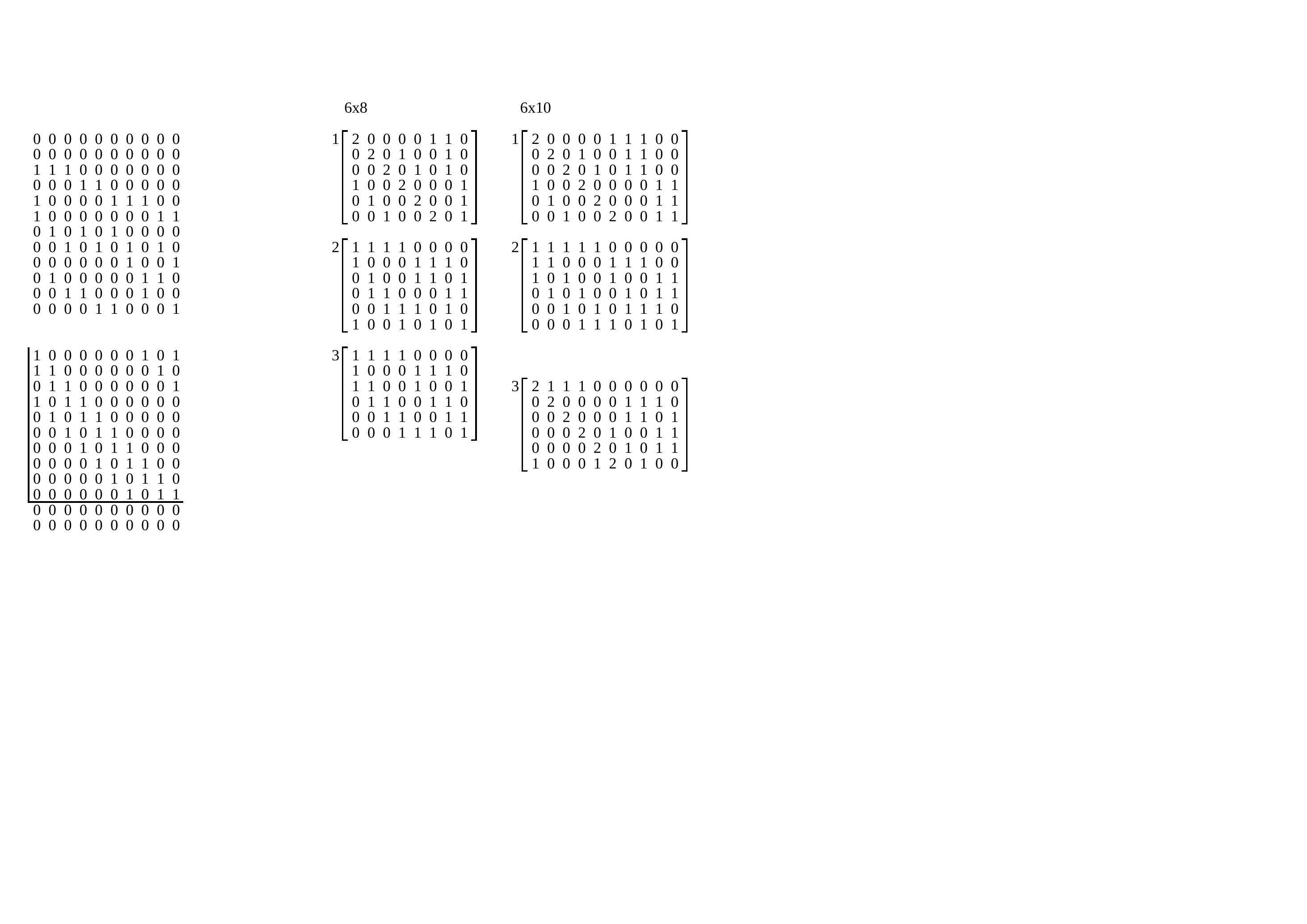}%
	\label{fig:6_8_dmin}}
	\caption{Two single-edge regular protographs with $d_v=3$ avoiding inevitable cycles of length $<$ 12.}
	\label{fig:dmin_12}
\end{figure}

The above results clearly show that two upper bounds (\ref{eq:dmin1}) and (\ref{eq:dmin2}) on the minimum Hamming distance of QC LDPC codes are affected in a positive way by using double edges in the protographs. In general, a multiple-edge protograph is more difficult to design than a single-edge protograph under the condition that they induce the shortest inevitable cycles of the same length. However, if multiple-edge protographs are once constructed, QC LDPC codes lifted from them can potentially give a larger upper bound on the minimum Hamming distance than those lifted from single-edge protographs.

\subsection{Comparison of Error Correcting Performance}

Performance of four proposed QC LDPC codes, that is, Proposed Code 1 to 4 is compared with those of the progressive edge-growth LDPC codes, called PEG 1 to 4 \cite{Hu} and the QC LDPC codes, called PEG QC 1 to 4 \cite{Li} with the same code length, code rate, and column-weight.
PEG LDPC codes and PEG QC LDPC codes are well known to have good error correcting performance comparable to those of random LDPC codes. Note that the girths of such $(15000,6000)$, $(3600,900)$, $(7200,3600)$, and $(800,200)$ PEG LDPC codes and PEG QC LDPC codes are 12, 12, 12, and 10, respectively, and these codes are obtained by the PEG algorithm to have as large girth as possible.

Simulation is carried out through the binary input additive white Gaussian noise (BIAWGN) channel.
The belief propagation (BP) decoding algoirthm is used and the number of maximum iterations is set to 100. 
The frame error rate (FER) performances of all the above LDPC codes are compared in Fig. \ref{fig:FER} and we can see that the proposed QC LDPC codes show as good error correcting performance as the PEG LDPC codes and the PEG QC LDPC codes. Note that the bit error rate (BER) curves behave qualitatively the same as the FER curves and they are omitted in this paper.

\begin{figure*}[tb]
    \centering
	\includegraphics[scale=1.2]{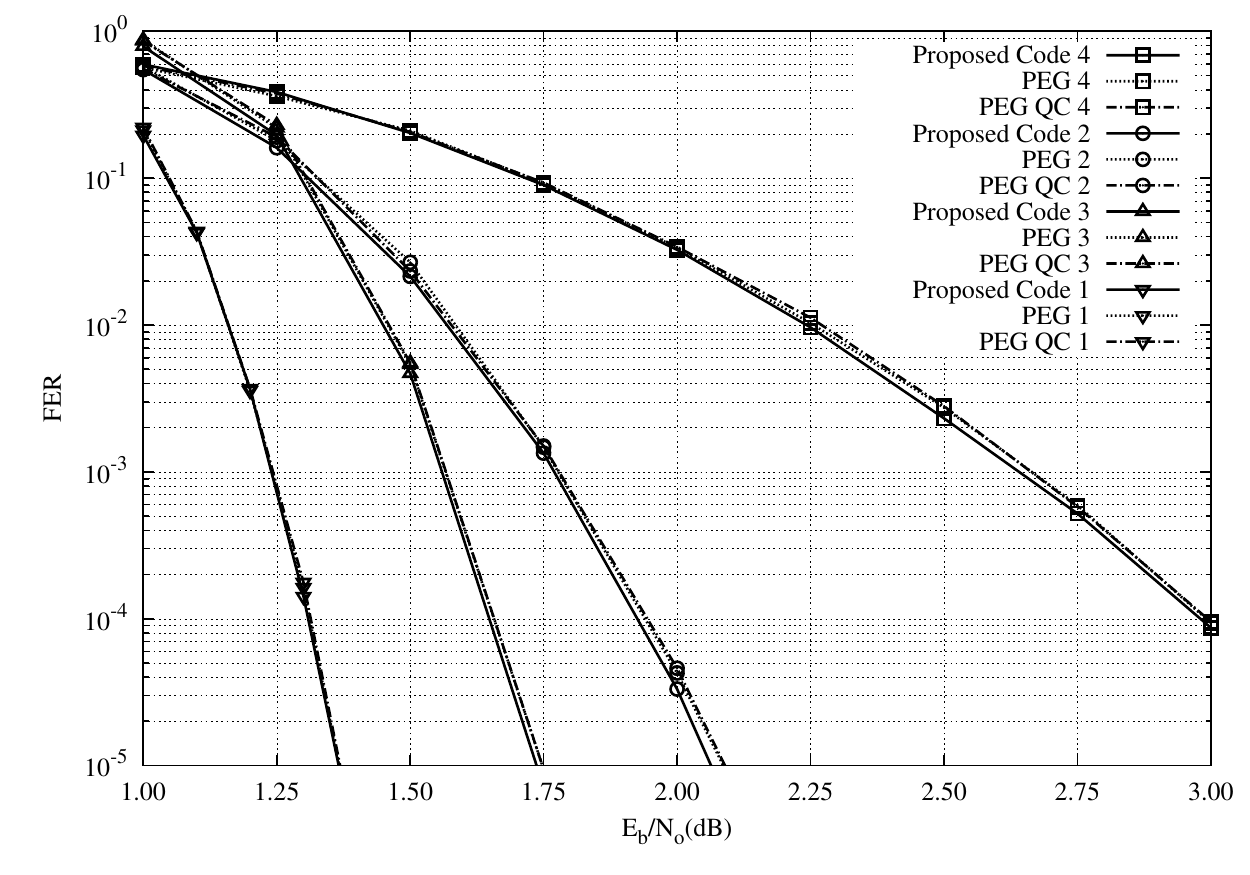}%
    \caption{Error correcting performance comparison of the proposed QC LDPC codes, the PEG LDPC codes, and the PEG QC LDPC codes.}
    \label{fig:FER}
\end{figure*}


\vspace{2mm}
\section{Conclusions}\label{sec:conclusion}

The subgraphs of protographs, which cause inevitable cycles in the QC LDPC codes, are fully investigated in allowance with multiple edges through the graph-theoretic approach. For regular QC LDPC codes with girth larger than or equal to 12, we propose a systematic construction method of protographs which avoid inevitable cycles of length less than 12 by using balanced ternary designs. For regular QC LDPC codes with girth larger than or equal to 14, we provide construction methods of all $J \times L$ protographs with column-weight three and the number of double edges $J-2$ by using various block designs. These construction methods can be extended to construct regular protographs with smaller number of double edges and with column-weight larger than three. Also, a construction algorithm of QC LDPC codes from the proposed protographs is provided based on the work in \cite{Huang}. To check the validity of the proposed QC LDPC codes, we show that the proposed QC LDPC codes have larger upper bounds on the minimum Hamming distance than the QC LDPC codes lifted from single-edge protographs. Finally, the error correcting performance of the proposed QC LDPC codes is compared with those of PEG LDPC codes and PEG QC LDPC codes via numerical analysis.


\begin{thebibliography}{1}

\bibitem{Gallager}
R.~G.~Gallager, \emph{Low-Density Parity-Check Codes}. Cambridge, MA: MIT Press, 1963.
\bibitem{Thorpe}
J.~Thorpe, ``Low-density parity-check (LDPC) codes constructed from protograph," {\em IPN Progress Report 42-154, JPL}, Aug. 2003.

\bibitem{Tanner}
R.~M.~Tanner, D.~Sridhara, and T.~Fuja, ``A class of group-structured LDPC codes,'' in \emph{Proc. ICSTA 2001}, Ambleside, England, 2001.

\bibitem{Fossorier}
M.~P.~C.~Fossorier, ``Quasi-cyclic low-density parity-check codes from circulant permutation matrices,'' \emph{IEEE Trans. Inf. Theory}, vol. 50, no. 8, pp. 1788-1793, Aug. 2004.

\bibitem{Milenkovic}
O.~Milenkovic, N.~Kashyap, and D.~Leyba, ``Shortened array codes of large girth,'' \emph{IEEE Trans. Inf. Theory}, vol. 52, no. 8, pp. 3707-3722, Aug. 2006.
\bibitem{Wang}
Y.~Wang, J.~S.~Yedidia, and S.~C.~Draper, ``Construction of high-girth QC-LDPC codes,'' in \emph{Proc. 5th Int. Symp. Turbo Codes and Related Topics}, Sep. 2008, pp. 180-185.
\bibitem{Huang}
J.~Huang, L.~Liu, W.~Zhou, and S.~Zhou, ``Large-girth nonbinary QC-LDPC codes of various lengths,'' \emph{IEEE Trans. Commun.}, vol. 58, no. 12, pp. 3436-3447, Dec. 2010.
\bibitem{Bocharova}
I.~E.~Bocharova, F.~Hug, R.~Johannesson, B.~D.~Kudryashov, and R.~V.~Satyukov, ``Searching for voltage graph-based LDPC tailbiting codes with large girth,'' \emph{IEEE Trans. Inf. Theory}, vol. 58, no. 4, pp. 2265-2279, Apr. 2012.
\bibitem{O'Sullivan}
M.~E.~O'Sullivan, ``Algebraic construction of sparse matrices with large girth,'' \emph{IEEE Trans. Inf. Theory}, vol. 52, no. 2, pp. 718-727, Feb. 2006.
\bibitem{Kim}
S.~Kim, J.-S.~No, H.~Chung, and D.-J.~Shin, ``Quasi-cyclic low-density parity-check codes with girth larger than 12,'' \emph{IEEE Trans. Inf. Theory}, vol. 53, no. 8, pp. 2885-2891, Aug. 2007.
\bibitem{Esmaeili}
M.~Esmaeili and M.~Gholami, ``Structured quasi-cyclic LDPC codes with girth 18 and column-weight $J\geq3$,'' \emph{Int. J. Electron. Commun. (AEU)}, vol. 64, no. 3, pp. 202-217, Mar. 2010.

\bibitem{Johnson}
S.~J.~Johnson and S.~R.~Weller, ``Quasi-cyclic LDPC codes from difference families,'' in \emph{Proc. 3rd AusCTW}, Canberra, Australia, Feb. 2002.
\bibitem{Ammar}
B.~Ammar, B.~Honary, Y.~Kou, J.~Xu, and S.~Lin, ``Construction of low-density parity-check codes based on balanced incomplete block designs,'' \emph{IEEE Trans. Inf. Theory}, vol. 50, no. 6, pp. 1257-1268, Jun. 2004.
\bibitem{Vasic}
B.~Vasic and O.~Milenkovic, ``Combinatorial constructions of low-density parity-check codes for iterative decoding,'' \emph{IEEE Trans. Inf. Theory}, vol. 50, no. 6, pp. 1156-1176, Jun. 2004.

\bibitem{Kelley}
C.~A.~Kelley and J.~L.~Walker, ``LDPC codes from voltage graphs,'' in \emph{Proc. IEEE Int. Symp. Inf. Theory}, Toronto, Canada, Jul. 2008, pp. 792-796.


\bibitem{Kelley1}
C.~A.~Kelley and D.~Sridhara, ``Pseudocodewords of Tanner graphs,'' \emph{IEEE Trans. Inf. Theory}, vol. 53, no. 11, pp. 4013-4038, Nov. 2007.
\bibitem{Koetter}
R.~Koetter and P.~O.~Vontobel, ``Graph-covers and iterative decoding of finite length codes,'' in \emph{Proc. 3rd Int. Symp. Turbo Codes and Related Topics}, Brest, France, Sep. 2003, pp. 75-82.

\bibitem{Smarandache}
R.~Smarandache and P.~O.~Vontobel, ``Quasi-cyclic LDPC codes: Influence of proto- and Tanner-graph structure on minimum Hamming distance upper bounds,'' \emph{IEEE Trans. Inf. Theory}, vol. 58, no. 2, pp. 585-607, Feb. 2012.

\bibitem{Colbourn1}
C.~J.~Colbourn and J.~H.~Dinitz, \emph{The CRC Handbook of Combinatorial Designs}. Boca Raton, FL: CRC, 1996.
\bibitem{Billington}
E.~J.~Billington and P.~Robinson, ``A list of balanced ternary designs with $R\leq15$, and some necessary existence conditions,'' \emph{Ars Combin.}, vol. 16, pp. 235-258, 1983.
\bibitem{Gropp1}
H.~Gropp, ``Nonsymmetric configurations with natural index,'' \emph{Discrete Math.}, vol. 124, pp. 87-98, 1994.
\bibitem{web}
[Online]. Available: http://www.research.ibm.com/people/s/shearer/dtsopt.html
\bibitem{Kucukcifci}
S.~Kucukcifci, ``The intersection problem for PBD$(5^*,3)$s,'' \emph{Discrete Math.}, vol. 308, pp. 382-385, 2008.

\bibitem{Hu}
X.-Y.~Hu, E.~Eleftheriou, and D.~M.~Arnold, ``Regular and irregular progressive edge-growth Tanner graphs,'' \emph{IEEE Trans. Inf. Theory}, vol. 51, no. 1, pp. 386-398, Jan. 2005.
\bibitem{Li}
Z.~Li and B.~V.~K.~V.~Kumar, ``A class of good quasi-cyclic low-density parity check codes based on progressive edge growth graph,'' in \emph{Proc. 38th Asilomar Conf. Sig. Syst. Comput.}, Nov. 2004, pp. 1990-1994.

\end{thebibliography}
\end{document}